\pgfplotsset{compat=1.12, compat/show suggested version=false}
\newtheorem{theorem}{Definition}
\newtheorem{definition}[theorem]{Definition}
\newtheorem{lemma}[theorem]{Lemma}
\newtheorem{corollary}[theorem]{Corollary}
\newtheorem{observation}[theorem]{Observation}
\newcommand{\newtext}[1]{\color{black}#1\color{black}\xspace}
\newcommand{\new}[1]{\color{black}#1\color{black}\xspace}
\newcounter{reftmpcounter}
\newcounter{resttmpcounter}
\newcommand{\restatethm}[2]{
    \setcounterref{reftmpcounter}{#2}
    \setcounter{resttmpcounter}{\thetheorem}
    \setcounter{theorem}{\thereftmpcounter}
    \addtocounter{theorem}{-1}
    #1
    \setcounter{theorem}{\theresttmpcounter}
}
\newcommand%
\title{Packing Disks into Disks with Optimal Worst-Case Density} 
\author[1]{S\'{a}ndor P. Fekete}
\author[1]{Phillip Keldenich}
\author[1]{Christian Scheffer}
\affil[1]{Department of Computer Science, TU Braunschweig, Germany.
	\tt$\{$s.fekete, p.keldenich, c.scheffer$\}$@tu-bs.de}
\DeclarePairedDelimiter\abs{\lvert}{\rvert}
\begin{document}

\maketitle

\begin{abstract}
We provide a tight result for a fundamental problem arising from packing  disks
into a circular container: The critical density of packing disks in a disk is
0.5. This implies that  any set of (not necessarily equal) disks of total area
$\delta\leq 1/2$ can always be packed into a disk of area 1; on the other hand, 
for any $\varepsilon>0$ there are
sets of disks of area $1/2+\varepsilon$ that cannot be packed. The proof uses a careful
manual analysis, complemented by a minor automatic part that is based on interval arithmetic.
Beyond the basic mathematical importance, our result is also useful as a
blackbox lemma for the analysis of recursive packing algorithms.
\end{abstract}

\section{Introduction}
Deciding whether a set of disks can be packed into a given container is a fundamental 
geometric optimization problem that has attracted considerable attention.
Disk packing also has numerous applications in engineering, science, operational research and everyday life, 
e.g., for the design of digital modulation schemes \cite{PWMD1992packing},
packaging cylinders \cite{CKP2008solving,fraser1994integrated},
bundling tubes or cables \cite{WHZX2002improved,SSSKK2004disk},
the cutting industry \cite{SMCSCG2007new}, or
the layout of control panels \cite{CKP2008solving},
or radio tower placement \cite{SMCSCG2007new}.
Further applications stem from chemistry \cite{WMP1994history},
foresting \cite{SMCSCG2007new},
and origami design \cite{lang1996computational}.

Like many other packing problems, disk packing is typically quite difficult;
what is more, the combinatorial hardness is compounded by
the geometric complications of dealing with irrational coordinates that arise when
packing circular objects. This is reflected by the limitations of provably optimal
results for the optimal value for the smallest sufficient disk container 
(and hence, the densest such disk packing in a disk container), a problem that was
discussed by Kraviz~\cite{kraviz67} in 1967:
Even when the input consists of just 13 unit disks, the optimal value for the densest
disk-in-disk packing was only established in 2003~\cite{13disks},
while the optimal value for 14 unit disks is still unproven. The enormous
challenges of establishing densest disk packings are also illustrated by a long-standing 
open conjecture by Erd\H{o}s and Oler from 1961~\cite{oler} regarding optimal packings of $n$ 
unit disks into an equilateral triangle, which has only been proven up to $n=15$. 
For other examples of mathematical work on densely packing relatively small numbers
of identical disks, see~\cite{goldberg71,melissen94,19disks,12disks}, and
\cite{reis75,lubachevsky97,graham98} for related experimental work.
Many authors have considered heuristics for circle packing problems, see
\cite{SMCSCG2007new,HM2009literature} for overviews of numerous heuristics and
optimization methods.  The best known solutions for packing equal disks into
squares, triangles and other shapes are continuously published on Specht's
website \url{http://packomania.com} \cite{specht2015packomania}.
		
For the case of packing not necessarily equal disks into a square container, 
Demaine, Fekete, and Lang in 2010~\cite{DFL2010circle}
showed that deciding whether a given set of disks can be packed is 
$\mathsf{NP}$-hard by using a reduction from \textsc{3-Partition}.
This means that there is (probably) no deterministic polynomial-time algorithm
that can decide whether a given set of disks can be packed into a given
container.  

On the other hand, the literature on exact approximation algorithms which
actually give performance guarantees is small.  Miyazawa et
al.~\cite{MPSSW2014polynomial} devised asymptotic polynomial-time approximation
schemes for packing disks into the smallest number of unit square bins.  More
recently, Hokama, Miyazawa, and Schouery~\cite{HMS2016bounded} developed a
bounded-space competitive algorithm for the online version of that problem.

The related problem of packing square objects has also been studied for a long
time.  The decision problem whether it is possible to pack a given set of
squares into the unit square was shown to be strongly $\mathsf{NP}$-complete by
Leung et al.~\cite{LTWYC1990packing}, also using a reduction from
\textsc{3-Partition}. Already in 1967, Moon and Moser~\cite{MM1967some} found
a sufficient condition. They proved that it is possible to pack a set of
squares into the unit square in a shelf-like manner if their combined area, the
sum of all squares' areas, does not exceed $\frac{1}{2}$. 
At the same time, $\frac{1}{2}$ is the \emph{largest upper area bound} one can hope for, 
because two squares larger than the quarter-squares shown in \new{Figure}~\ref{fig:square} 
cannot be packed. We call the ratio between the largest combined object area that can always be 
packed and the area of the container the problem's \emph{critical density}, or \emph{\new{optimal} worst-case density}.

\begin{figure}
        \centering
	\includegraphics[width=14cm]{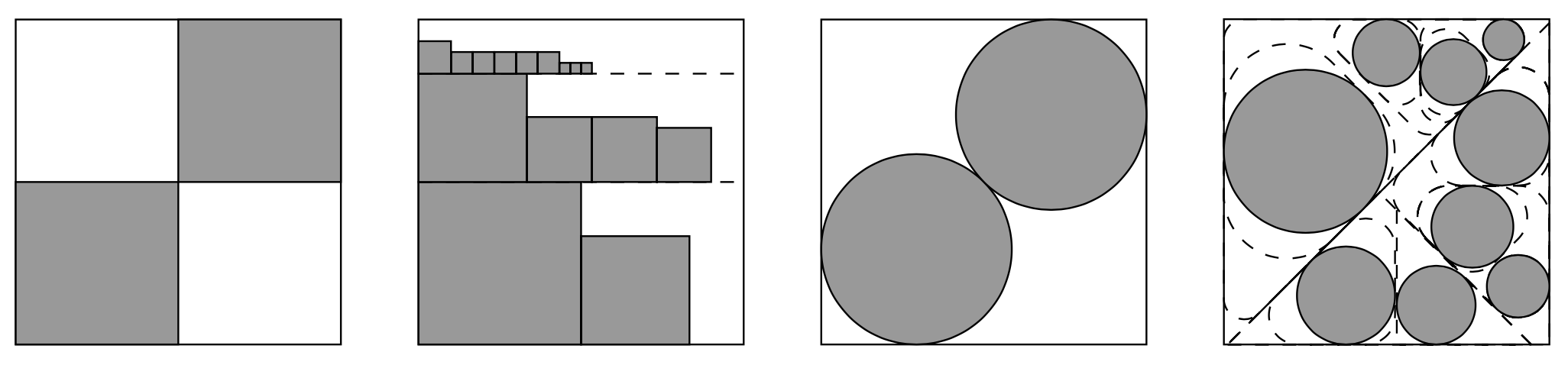}
        \caption{(1) An instance of critical density for packing squares into a square. 
		(2) An example packing produced by Moon and Moser's shelf-packing.
        	(3) An instance of critical density for packing disks into a square.
        	(4) An example packing produced by Morr's Split Packing.}
        \label{fig:square}
\end{figure}

The equivalent problem of establishing the critical packing density for disks in a square was posed by
Demaine, Fekete, and Lang~\cite{DFL2010circle} and resolved by Morr, Fekete and Scheffer~\cite{morr2017split,Fekete2018}.
Making use of a recursive procedure for cutting the container into triangular pieces,
they proved that the critical packing density of disks in a square is $\frac{\pi}{3+2\sqrt{2}} \approx 0.539$.

It is quite natural to consider the analogous question of establishing the critical packing density for 
disks in a disk. However, the shelf-packing approach of Moon and Moser~\cite{MM1967some} uses the 
fact that rectangular shapes of the packed objects fit well into parallel shelves, which is not the 
case for disks; on the other hand, the split packing method of Morr et al.~\cite{morr2017split,Fekete2018} relies
on recursively splitting triangular containers, so it does not work for a circular container that cannot
be partitioned into smaller circular pieces.

\subsection{Results}

We prove that the critical density for packing disks into a disk is 1/2:
Any set of not necessarily equal disks with a combined area of not more than 
half the area of a circular container can be packed; 
this is best possibly, as for any $\varepsilon>0$ there are instances 
of total area $1/2+\varepsilon$ that cannot be packed. See Fig.~\ref{fig:worst_case} for the critical configuration.

\begin{figure}[t]
  \begin{center}
      \includegraphics[height=5cm]{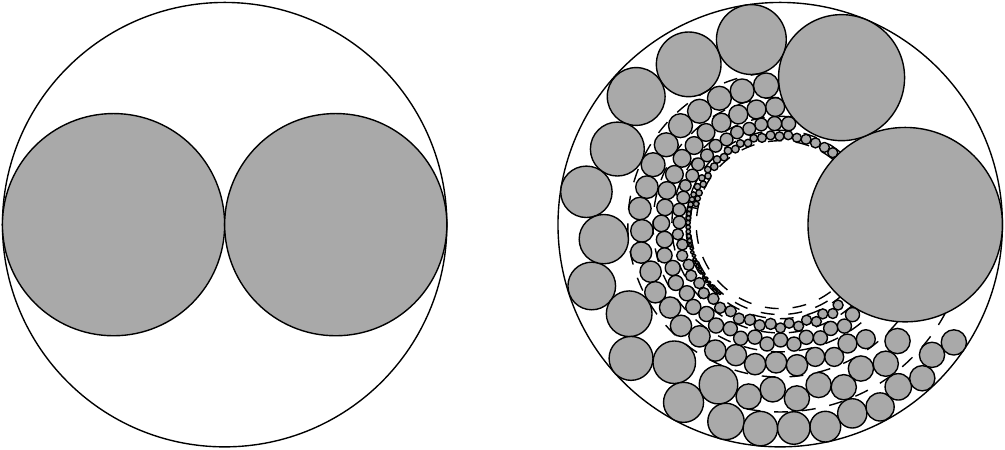} 
  \end{center}
  
  \caption{(1) A critical instance that allows a packing density no better than $\frac{1}{2}$. (2) An example packing produced by our algorithm.}
  \label{fig:worst_case}
\end{figure}

Our proofs are constructive, so they can also be used as a constant-factor
approximation algorithm for the smallest-area container of a given shape in
which a given set of disks can be packed. Due to the higher geometric difficulty
of fitting together circular objects, the involved methods are considerably more
complex than those for square containers. We make up for this difficulty
by developing more intricate recursive arguments, including appropriate
and powerful tools based on {\em interval arithmetic}.

\section{Preliminaries}
Let $r_1,\dots,r_n$ be a set of disks in the plane. Two point sets $A,B \subset \mathbb{R}^2$ \emph{overlap} if their interiors have a point in common. A \emph{container disk} $\mathcal{C}$ is a disk that may overlap with disks from $\{ r_1,\dots,r_n \}$. The \emph{original} container disk $O$ is the unit disk. Due to recursive calls of our algorithm there may be several container disks that lie nested inside each other. Hence, the largest container disk will be the unit disk $O$. For simplification, we simultaneously denote by $r_i$ or $\mathcal{C}$ the disk with radius $r_i$ or $\mathcal{C}$ and its radius. Wl.o.g., we assume $r_1 \geq \dots \geq r_n$. We \emph{pack} the disks $r_1,\dots,r_n$ by positioning their \new{centers} inside a container disk such that $r_i$ lies inside $\mathcal{C}$ and two disks from $\{  r_1,\dots,r_n\}$ do not overlap. Given two sets $A \subseteq B \subseteq \mathbb{R}^2$, we say that $A$ is a \emph{sector} of $B$. Furthermore, we denote the volume of a point set $A$ by $\abs{A}$.

\section{A Worst-Case Optimal Algorithm}
\label{sec:algorithm}
\begin{theorem}\label{thm:disk_packing}
	Every set of disks with total area $\frac{\pi}{2}$ can be packed into the unit disk~$O$ with radius $1$. This induces a worst-case optimal packing density of $\frac{1}{2}$, i.e., a ratio of $\frac{1}{2}$ between the area of the unit disk and the total area to be packed.
\end{theorem}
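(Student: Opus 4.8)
The plan is to establish a recursive packing scheme that cuts the container into pieces into which subsets of the disks can be packed, mirroring the ``split packing'' philosophy of Morr et al.\ but adapted to the geometry of a disk. The natural obstruction is that a disk cannot be partitioned into smaller disks, so I would not try to split the container disk itself; instead I would work with a more flexible family of container shapes. A good candidate is the set of circular \emph{lenses} or \emph{semi-disks} / \emph{sectors}: when the largest disk $r_1$ is placed tangent to the boundary of $O$ (or concentric, depending on its size), the closure of $O \setminus r_1$ naturally decomposes into regions bounded by circular arcs, and the recursion would assign the remaining disks $r_2,\dots,r_n$ to these regions according to a greedy/balanced split of their total area. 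The key quantitative claim to prove by induction is that \emph{any such admissible region of area $A$ can absorb any set of disks of total area $\le A/2$}; Theorem~\ref{thm:disk_packing} is then the base shape (the full disk) of this statement.

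**The main steps**, in order, would be: (1) set up the precise family $\mathcal{S}$ of ``admissible sectors'' closed under the recursive cutting operation, and state the strengthened induction hypothesis for all shapes in $\mathcal{S}$; (2) handle the \emph{base case} where $n$ is small or where $r_1$ is so large that it alone nearly fills the container --- here one places $r_1$ greedily (concentrically or tangentially) and checks directly that the leftover shapes lie in $\mathcal{S}$ and that the area bookkeeping $\sum_{i\ge 2} \pi r_i^2 \le (|S| - \pi r_1^2)/2$ holds whenever $\pi r_1^2$ is above the relevant threshold; (3) in the \emph{recursive case}, partition $\{r_1,\dots,r_n\}$ into two (or more) groups whose total areas are as balanced as possible, cut the current sector $S$ into sub-sectors $S_1, S_2 \in \mathcal{S}$ with $|S_j|$ proportional to the group areas, and recurse --- the crucial inequality is that $|S_j|/2$ dominates the $j$-th group's area given the imbalance caused by the single largest disk in a group; (4) discharge the finitely many boundary configurations (the ``ends'' of the continuous parameter ranges, where the tangency geometry degenerates) by \emph{interval arithmetic}, exactly as the paper advertises, since the worst case is pinned down by a transcendental equation in the radii and arc parameters that cannot be solved in closed form.

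**The hard part** will be step~(3): choosing the right cutting operation so that the sub-sectors are again in $\mathcal{S}$ \emph{and} the area ratios work out. Unlike triangles, a sector cut into two sectors does not scale self-similarly, so the induction hypothesis ``area-$A$ region holds $A/2$ of disks'' must be robust against the shape distortion introduced at each level; one typically needs the family $\mathcal{S}$ to be parametrized by enough degrees of freedom (e.g.\ two bounding radii and an angle) that every cut stays inside it, while still being tight enough that the $\tfrac12$ bound is provable. A secondary difficulty is the interface between the ``large $r_1$'' regime and the ``balanced split'' regime: one must verify there is no gap, i.e.\ that for every possible value of $\pi r_1^2 / |S|$ at least one of the two strategies succeeds, and the transition value is precisely where the interval-arithmetic verification is needed. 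Finally, worst-case optimality of $\tfrac12$ is the easy direction: exhibit, for any $\varepsilon>0$, two disks each of area slightly more than $\tfrac{\pi}{4}$ (radius slightly more than $1/\sqrt2$), which cannot both fit in $O$ since two disjoint disks of radius $>1/\sqrt2$ force a center distance $>\sqrt2$ while both centers must lie within $1 - 1/\sqrt2 < 1/\sqrt2$ of the origin --- contradiction; this matches Figure~\ref{fig:worst_case} and shows the $\tfrac{\pi}{2}$ in the theorem cannot be increased.
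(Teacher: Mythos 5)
Your proposal sketches a genuinely different route from the paper --- a split-packing-style recursion over a family $\mathcal{S}$ of lens/sector shapes --- but the plan has a gap at exactly the point where a proof would have to live. You never define $\mathcal{S}$, never show it is closed under the cutting operation, and, crucially, never establish (or give evidence for) the strengthened induction hypothesis that every shape in $\mathcal{S}$ of area $A$ can absorb any set of disks of total area $A/2$. That claim is not a technicality: for thin lenses, narrow sectors, or slivers bounded by nearly-tangent circular arcs, the critical disk-packing density degenerates badly, so the hypothesis as stated is almost certainly false unless $\mathcal{S}$ is very carefully constrained and the recursion is shown never to produce a too-degenerate shape. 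You flag step~(3) as ``the hard part'' and then pass over it; there is no quantitative analysis of the cut, of the shape parameters it preserves, or of the area bookkeeping when the largest disk dominates one side of the split. In other words, the proposal defers the entire difficulty of the theorem to an unverified induction on a family of shapes, and the most plausible outcome of trying to carry it out is that the family is not closed, or the $1/2$ bound fails for some member of it.

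The paper explicitly rejects this route (``it does not work for a circular container that cannot be partitioned into smaller circular pieces'') and instead uses an essentially non-recursive structure: disks are placed by Boundary Packing (tangent to the container boundary, in decreasing order) and Ring Packing (into concentric annuli, alternating between inner and outer boundary), and the analysis is a charging argument in which each disk pays its area to sectors whose density is then shown to be at least $\rho\approx 0.56$ (leaving slack to absorb the unavoidable gaps). The only recursion is the narrow Phase~1, triggered only when the two largest disks both have radius at least $0.495$ times the container radius, and it recurses on a single sub-disk, not on an expanding family of shapes; the interval arithmetic is used to bound the density of ``zipper'' sectors inside rings, not for boundary cases of a shape recursion. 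A minor but real error in your easy direction: a disk of area $\pi/4$ has radius $1/2$, not $1/\sqrt{2}$; the correct tight configuration uses two disks of radius slightly above $1/2$, forcing center distance above $1$ while both centers must lie within distance strictly less than $1/2$ of the origin.
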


The worst case consists of two disks $D_1,D_2$ with radius $\frac{1}{2}$, see Fig.~\ref{fig:worst_case}. The total area of these two disks is $\frac{\pi}{4} + \frac{\pi}{4}  = \frac{\pi}{2}$, while the smallest disk containing $D_1,D_2$ has an area of $\pi$.


In the remainder of Section~\ref{sec:algorithm}, we give a constructive proof for Theorem~\ref{thm:disk_packing}.
Before we proceed to describe our algorithm in Section~\ref{sec:description}, we give some definitions and describe \emph{\new{Boundary} Packing} and \emph{Ring Packing} as two subroutines of our algorithm.

\subsection{Preliminaries for the Algorithm}

We make use of the following definitions, see Fig.~\ref{fig:definition_ring}. 

\begin{figure}[h!]
  \begin{center}
      \includegraphics[scale=1]{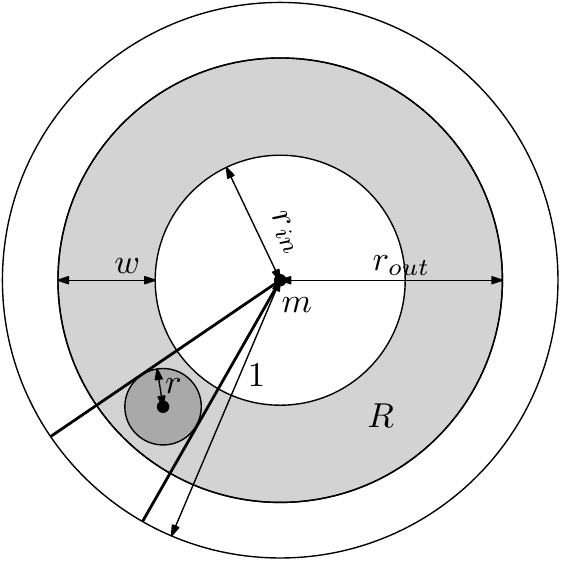} 
  \end{center}
  
  \caption{A ring $R \subset O$ with width $w$ and a disk with its corresponding tangents.}
  \label{fig:definition_ring}
\end{figure}

For \new{$r_{\text{out}} > r_{\text{in}} > 0$} and a container disk $\mathcal{C}$ such that $r_{\text{out}} \leq 2 r_{\text{in}}$, we define a {\em ring} $R:=R[r_{\text{out}}, r_{\text{in}}]$ of $\mathcal{C}$ as the closure of $r_{\text{out}} \setminus r_{\text{in}}$, see Fig.~\ref{fig:definition_ring}. The boundary of $R$ consists of two connected components. The \emph{inner boundary} is the component that lies closer to \new{the center} $m$ \new{of $r_{\text{out}}$} and the \emph{outer boundary} is the other component. The \emph{inner radius} and the \emph{outer radius} of $R$ are the radius of the inner boundary and the radius of outer boundary. Each ring \new{is associated with} one of three states $\{ \textsc{open}, \textsc{closed}, \textsc{full} \}$. Initially, each ring is \textsc{open}.

Let $r$ be a disk inside a container disk $\mathcal{C}$. The two \emph{tangents} of $r$ are the two rays starting in the midpoint of $\mathcal{C}$ and touching the boundary of $r$. We say that a disk lies \emph{adjacent} to $r_{\text{out}}$ when the disk is touching the boundary of $r_{\text{out}}$ from the inside of $r_{\text{out}}$.


\subsection{Boundary Packing: A Subroutine}\label{sec:diskpacking}
	
\begin{figure}[h!]
  \begin{center}
      \includegraphics[scale=1]{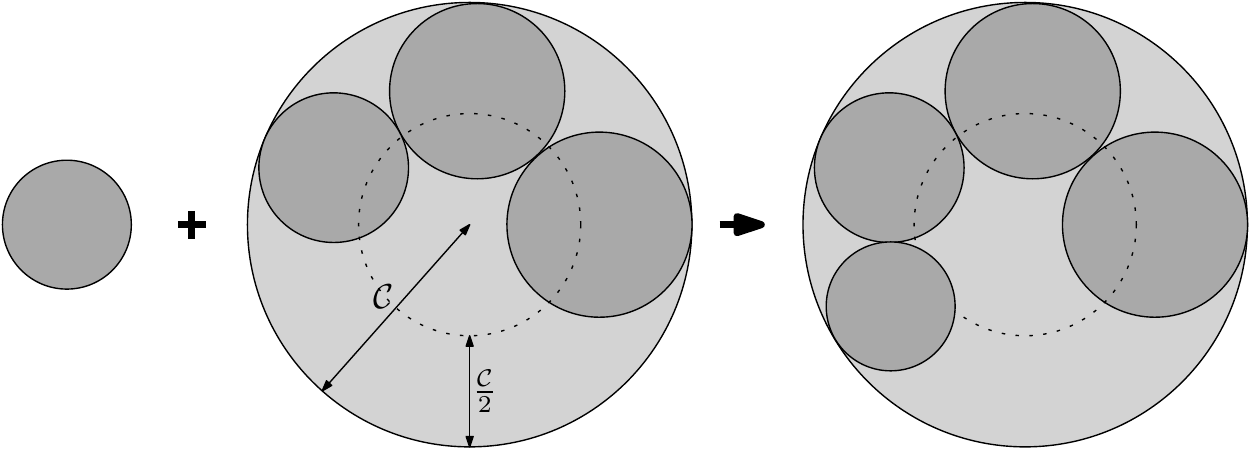} 
  \end{center}
  \caption{\new{Boundary} Packing places disks into a container disk $\mathcal{C}$ adjacent to the boundary of $\mathcal{C}$ as long as the diameter of the disks to be packed is at least as large as \new{a given threshold $\mathcal{T}$} or until \new{the current disk} does no longer fit into~$\mathcal{C}$. \new{Initially, we have $\mathcal{T} = \frac{1}{4}$.}}
  \label{fig:diskpacking}
\end{figure}
	
	Consider a container disk $\mathcal{C}$, a (possibly empty) set $S$ of already packed disks that overlap with $\mathcal{C}$, and another disk $r_i$ to be packed, see Fig.~\ref{fig:diskpacking}. We \emph{pack} $r_i$ into $\mathcal{C}$ adjacently to the boundary of $\mathcal{C}$ as follows: Let $\alpha$ be the maximal polar angle realized by a midpoint of a disk from $S$. We choose the midpoint of $r_i$ realizing the smallest possible polar angle $\beta \geq \alpha$ such that $r_i$ touches the outer boundary of $\mathcal{C}$ from the interior of $\mathcal{C}$ without overlapping another disk from $S$, see Fig.~\ref{fig:diskpacking}. If $r_i$ cannot be packed into $\mathcal{C}$, we say that $r_i$ \emph{does not fit into $R$}.
	
	Let $0 < \mathcal{T} \leq \frac{1}{4}$, called \new{the} \emph{threshold}. \emph{\new{Boundary} Packing} iteratively packs disks in decreasing order into $\mathcal{C}$ until the current disk $r_i$ does not fit into $\mathcal{C}$ or the radius of $r_{i}$ is smaller than $\mathcal{T}$.
	
	\subsection{\textsc{Ring Packing}: A Subroutine}\label{sec:ringpacking}
	
\begin{figure}[b]
  \begin{center}
      \includegraphics[scale=1]{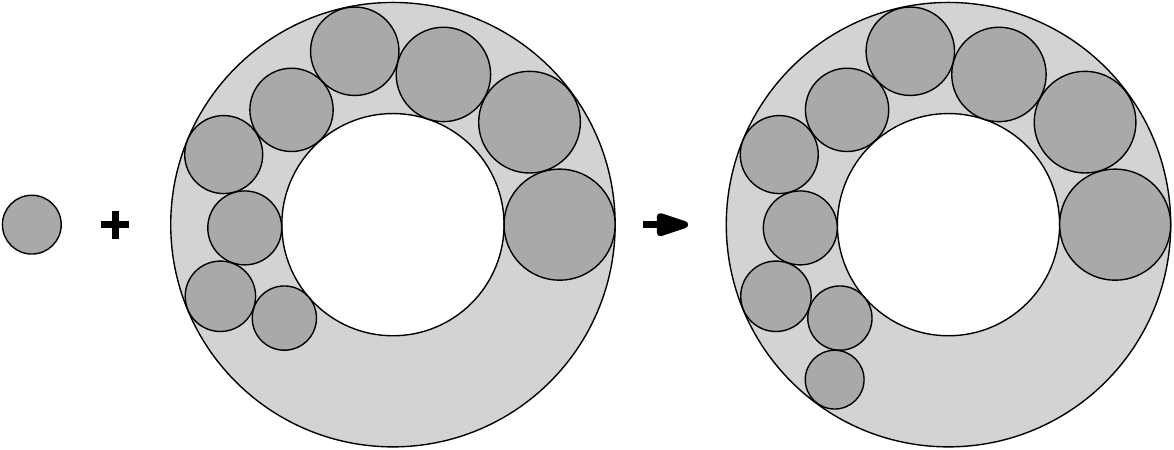} 
  \end{center}
  \caption{Ring Packing packs disks into a ring $R[r_{\text{out}},
	r_{\text{in}}]$, alternating adjacent to the outer and to the inner
	boundary of $R$.}
\label{fig:ringpacking}
\end{figure}
	
	Consider a ring $R:= R[r_{\text{out}}, r_{\text{in}}]$ with inner radius $r_{\text{in}}$ and outer radius $r_{\text{out}}$, a (possibly empty) set $S$ of already packed disks that overlap with $R$, and another disk $r_i$ to be packed, see Fig.~\ref{fig:ringpacking}. We \emph{pack} $r_i$ into $R$ adjacent to the outer (inner) boundary of $R$ as follows: Let $\alpha$ be the maximal polar angle realized by a midpoint of a disk from $S$. We choose the midpoint of $r_i$ realizing the smallest possible polar angle $\beta \geq \alpha$ such that $r_i$ touches the outer (inner) boundary of $R$ from the interior of $R$ without overlapping another disk from $S$. If $r_i$ cannot be packed into $R$, we say that $r_i$ \emph{does not fit into $R$ (adjacent to the outer (inner) boundary)}. 
	
	 \emph{Ring Packing} iteratively packs disks into $R$ alternating adjacent to the inner and outer boundary. If the current  disk $r_i$ does not fit into $R$ Ring Packing stops and we declare $R$ to be \textsc{full}. If $r_{i-1}$ and $r_{i}$ could \new{pass} each other, i.e., the sum of the diameters of $r_{i-1}$ and $r_i$ are smaller than the width of $R$, Ring Packing stops and we declare $R$ to be \textsc{closed}.
	
\subsection{Description of the Algorithm}\label{sec:description}

\begin{figure}[h!]
  \begin{center}
      \includegraphics[scale=1]{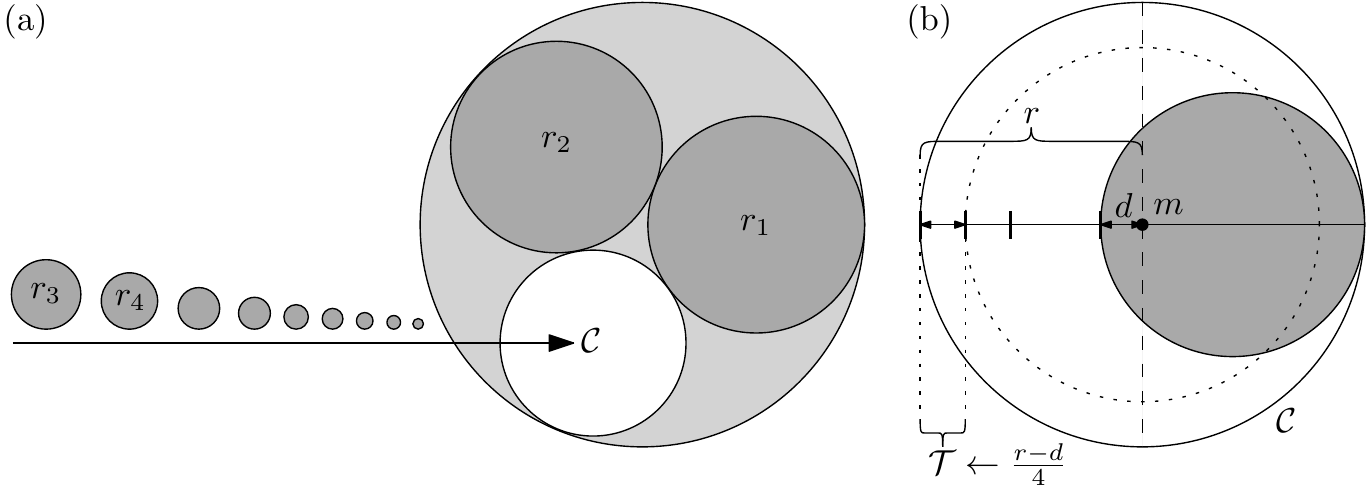} 
  \end{center}
  \caption{(a): If $r_1,r_{2} \geq 0495 \mathcal{C}$, \new{Boundary} Packing packs $r_1,r_{2}$ into $\mathcal{C}$. We update the current container disk $\mathcal{C}$ as the largest disk that fits into $\mathcal{C}$ and recurse on $\mathcal{C}$ with $r_{3}, \dots, r_{n}$. (b): Determining the threshold $\mathcal{T}$ for disks packed by \new{Boundary} Packing.}
  \label{fig:recursionapproach}
\end{figure}
	
	Our algorithm \emph{creates} rings. A ring only exists after it is created. We stop packing at any point in time when all disks are packed. Furthermore, we store the current threshold $\mathcal{T}$ for \new{Boundary} Packing and the smallest inner radius $r_{\min}$ of a ring created during the entire run of our algorithm. Initially, we set $\mathcal{T} \leftarrow \frac{1}{2}, r_{\min} \leftarrow 1$. Our algorithm works in five phases:
	\begin{itemize}
		\item \textbf{Phase 1 - Recursion:} If $r_1,r_{2} \geq 0.495 \mathcal{C}$, apply \new{Boundary} Packing to $r_{1},r_{2}$, update $\mathcal{C}$ as the largest disk that fits into $\mathcal{C}$ and $\mathcal{T}$ as the radius of $\mathcal{C}$, and recurse on $\mathcal{C}$, see Fig.~\ref{fig:recursionapproach}.
		
		\item \textbf{Phase 2 - \new{Boundary} Packing:} Let $r$ be the radius of $\mathcal{C}$. If the midpoint $m$ of $\mathcal{C}$ lies inside a packed disk $r_i$, let $d$ be the minimal distance of $m$ to the boundary of $r_i$, see Fig.~\ref{fig:recursionapproach}(b). Otherwise, we set $d = 0$.
		
		 We apply \new{Boundary} Packing to the container disk $\mathcal{C}$ with the threshold $\mathcal{T} \leftarrow \frac{r-d}{4}$.
			
		\item \textbf{Phase 3 - Ring Packing:} We apply Ring Packing to the ring $R:= R[r_{\text{out}}, r_{\text{in}}]$ determined as follows: Let $r_i$ be the largest disk not yet packed. If there is no open ring inside $\mathcal{C}$, we create a new open ring $R[r_{\text{out}}, r_{\text{in}}] \leftarrow R[r_{\min}, r_{\min} - 2r_i]$. Else, let $R[r_{\text{out}}, r_{\text{in}}]$ be the open ring with the largest inner radius $r_{\text{in}}$.
					
				\item \textbf{Phase 4 - Managing Rings:} Let $R[r_{\text{out}}, r_{\text{in}}]$ be the ring filled in Phase 3. We declare $R[r_{\text{out}}, r_{\text{in}}]$ to be closed and proceed as follows: Let $r_i$ be the largest disk not yet packed.

			If $r_i$ and $r_{i+1}$ can pass one another inside $R[r_{\text{out}}, r_{\text{in}}]$, i.e., if $2r_i + 2r_{i+1} \leq r_{\text{out}} - r_{\text{in}}$, we create two new open rings $R[r_{\text{out}}, r_{\text{out}} - 2r_{i}]$ and $R[r_{\text{out}} - 2r_{i}, r_{\text{in}}]$.
			
			\item \textbf{Phase 5 - Continue:}
			If there is an open ring, we go to Phase 3. Otherwise, we set $\mathcal{C}$ as the largest disk not covered by created rings, set $\mathcal{T}$ as the radius of $\mathcal{C}$, and go to Phase 2.
		
	\end{itemize}

\section{Analysis of the Algorithm}
\label{sec:analysis}


\subsection{Analysis of Phase 1 - The Recursion}
	
	If $r_2 \geq 0.495$, Lemma~\ref{lem:recursion} allows us to recurse on $\mathcal{C}$ as required by Phase 1.

\begin{figure}[h!]
  \begin{center}
      \includegraphics[height=5cm]{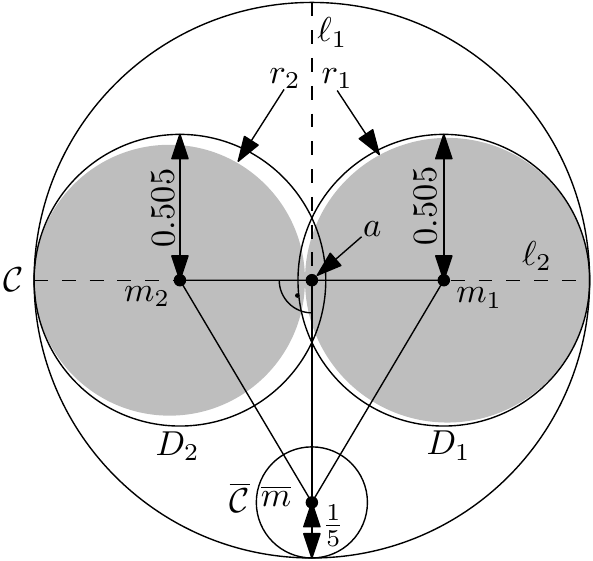} 
  \end{center}

  \caption{If $r_2 \geq 0.495$, we can pack $r_1,r_2$ into container disks $D_1,D_2$ and recurse on a third disk $\overline{c}$ whose area is twice the total area of the remaining disks.}
  \label{fig:recurse}
\end{figure}

\begin{lemma}\label{lem:recursion}
	If $r_1,r_2 \geq 0.495 \mathcal{C}$, the volume of the largest container disk that fits into $\mathcal{C}$ after packing $r_1,r_2$ is at least twice the total volume of \new{$r_{3},\dots,r_n$}, see Fig.~\ref{fig:recurse}.
\end{lemma}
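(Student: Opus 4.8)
The plan is to normalize so that $\mathcal C$ is the unit disk centered at the origin $O$ and to abbreviate $a := r_1 \ge b := r_2 \ge 0.495$. The standing hypothesis that all disks together have area at most $\tfrac12|\mathcal C| = \tfrac\pi2$ gives $a^2 + b^2 \le \tfrac12$, and combined with $b \ge 0.495$ this forces $a \le \sqrt{\tfrac12 - 0.495^2} < 0.505$; so both radii are pinned to the narrow band $[0.495,0.505)$, and $\sum_{i\ge 3}|r_i| = \pi\sum_{i\ge3}r_i^2 \le \pi\bigl(\tfrac12 - a^2 - b^2\bigr)$. Hence it suffices to exhibit, inside $\mathcal C$ and not overlapping the two packed disks, an empty disk $\overline c$ of radius $\rho$ with $\rho^2 \ge 1 - 2a^2 - 2b^2$, since then $|\overline c| = \pi\rho^2 \ge 2\pi(\tfrac12 - a^2 - b^2) \ge 2\sum_{i\ge3}|r_i|$.

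First I would record the geometry produced by Boundary Packing: $r_1$ is placed touching $\partial\mathcal C$, and then $r_2$ is placed touching $\partial\mathcal C$ and touching $r_1$, so writing $P_1,P_2$ for the two centers we have $|OP_1| = 1-a$, $|OP_2| = 1-b$, $|P_1P_2| = a+b$. (This placement exists because $a+b\le1$, again a consequence of $a^2+b^2\le\tfrac12$; at the extreme $a=b=\tfrac12$ the two disks are antipodal, which is exactly the worst case of the lemma.)

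The heart of the argument is a single containment: $r_1\cup r_2$ lies in the $a$-neighborhood of the segment $\overline{P_1P_2}$, because every point of $r_1$ is within $a$ of $P_1\in\overline{P_1P_2}$ and every point of $r_2$ within $b\le a$ of $P_2\in\overline{P_1P_2}$. In particular $r_1\cup r_2$ lies in the slab $\Sigma$ of width $2a$ about the line $L$ through $P_1,P_2$; writing $L = \{x : x\cdot\hat n = p\}$ for a unit normal $\hat n$ with $p\ge0$, we have $0\le p\le|OP_1| = 1-a$ and $\Sigma = \{x : p-a\le x\cdot\hat n\le p+a\}$. Hence the cap $K := \{x\in\mathcal C : x\cdot\hat n\le p-a\}$ has interior disjoint from $\Sigma$, so from $r_1\cup r_2$. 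Since a cap $\{x\in\mathcal C : x\cdot\hat n\le t\}$ with $-1<t<1$ contains a disk tangent to its chord and to $\partial\mathcal C$ of radius $\tfrac12(1+t)$, taking $t = p-a\in(-1,1)$ yields an empty disk $\overline c\subseteq K$ of radius $\rho = \tfrac12(1+p-a)\ge\tfrac12(1-a)$.

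Putting the estimates together finishes the proof: from $a<0.505$ we get $\rho\ge\tfrac12(1-a)>0.247$, so $\rho^2>0.06$, while $a^2+b^2\ge2(0.495)^2 = 0.49005$ gives $1-2a^2-2b^2\le0.0199<\rho^2$, which is the inequality needed. I expect the only genuinely delicate point to be the emptiness claim — verifying that $\overline c$ really does not overlap $r_1$ or $r_2$ — and that is precisely what the containment $r_1\cup r_2\subseteq\Sigma$ together with "$\overline c$ is tangent to $\partial\Sigma$ from the outside'' handles; once the slab is identified, the remaining estimates are routine because $a$ and $b$ are confined to a tiny interval around $\tfrac12$. (An alternative is to take for $\overline c$ the disk internally tangent to $\mathcal C$ and externally tangent to both $r_1$ and $r_2$ and bound its radius by the Descartes circle theorem with curvatures $-1,\tfrac1a,\tfrac1b$; but the cap disk above makes disjointness transparent and avoids having to argue which of the two Descartes circles one lands on.)
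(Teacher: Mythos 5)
Your proof is correct, and it takes a genuinely different — and in one respect cleaner — route than the paper's. Both arguments exhibit a concrete empty disk $\overline c$ inside $\mathcal C\setminus(r_1\cup r_2)$ and then compare areas. The paper first invokes Lemma~\ref{lem:twodisksonediamater} to get $r_1+r_2\le 1$, then asserts that one may w.l.o.g.\ replace $r_1,r_2$ by two disks of radius $0.505$ centered symmetrically on a diameter (claiming this only shrinks the largest empty disk), and finally checks by a Pythagoras computation that a fixed disk of radius $\tfrac15$ on the perpendicular diameter is disjoint from both. You avoid that extremality/normalization step entirely: you observe that $r_1\cup r_2$ sits inside the slab of width $2a$ about the line through the two centers, that the opposite cap $\{x\in\mathcal C: x\cdot\hat n\le p-a\}$ is therefore empty, and that this cap contains an inscribed disk of radius $\tfrac12(1+p-a)\ge\tfrac12(1-a)>0.247$ — which, notably, is even a bit larger than the paper's radius $\tfrac15$. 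The final comparison $\rho^2>0.06>0.0199\ge 1-2a^2-2b^2$ closes the argument with ample slack. The main thing your version buys is rigor at exactly the spot where the paper hand-waves (``This decreases the volume of the largest disk that still fits into $\mathcal C$''), since the slab/cap containment is checked directly against the actual packed positions; the paper's version buys a somewhat more visual ``place a $\tfrac15$-disk opposite the two big disks'' picture. Two small presentational notes: the equality $|P_1P_2|=a+b$ is not actually used anywhere (only $p\ge0$ matters, which holds trivially), so you could drop it; and you reuse $O$ for the origin while the paper reserves $O$ for the original container disk, so a different symbol would avoid a clash.
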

\begin{proof}
	W.l.o.g., assume that the original container disk is the unit disk. Lemma~\ref{lem:twodisksonediamater} implies $r_1+r_2 \leq 1$, which
means $r_1,r_2 \leq 0.505$, because $r_2 \geq 0.495$. Furthermore, $r_1 + r_2
\leq 1$ implies that we can move (w.l.o.g.) $r_1,r_2$ into two disks $D_1,D_2$
with radius $0.505$, touching the boundary of $\mathcal{C}$ and with their
midpoints $m_1,m_2$ on the horizontal diameter of $\mathcal{C}$, see
Fig.~\ref{fig:recurse}. This decreases the volume of the largest disk that
still fits into $\mathcal{C}$. Consider the disk \new{$\overline{\mathcal{C}} :=
\frac{1}{5}$} lying adjacent to $\mathcal{C}$ and with its midpoint
$\overline{m}$ on the vertical diameter $\ell_1$ of $\mathcal{C}$. Pythagoras'
Theorem implies that $\abs{m_1\overline{m}} = \sqrt{\left( 1-0.505 \right)^2 + \left( 1
- \frac{1}{5} \right)^2} \approx 0.94075 > 0.505 + \frac{1}{5}$. Finally, we
observe that the area of $\overline{\mathcal{C}}$ is $\frac{\pi}{25} = 0.4 \pi
> 0.0199 = 2 \left(  \frac{\pi}{2} - 2 \cdot \pi 0.495^2 \right)$. This means
that the area of $\overline{\mathcal{C}}$ is twice the total area of the remaining
disks $r_3,r_4,r_5,...$, concluding the proof.  
\end{proof}

A technical key ingredient in the proof of Lemma~\ref{lem:recursion} is the following lemma:

\begin{lemma}\label{lem:twodisksonediamater}
	The area of two disks $r_1,r_2$ is at least $\frac{\pi}{2} \left(r_1+r_2 \right)^2$.
\end{lemma}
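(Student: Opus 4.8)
The statement is an elementary inequality, so the plan is short. The area of the two disks is $\pi r_1^2 + \pi r_2^2$, so after dividing by $\pi$ the claim reduces to
\[
  r_1^2 + r_2^2 \;\geq\; \tfrac{1}{2}\,(r_1 + r_2)^2 .
\]
First I would expand the right-hand side to $\tfrac12 r_1^2 + r_1 r_2 + \tfrac12 r_2^2$ and subtract it from the left-hand side, obtaining $\tfrac12 r_1^2 - r_1 r_2 + \tfrac12 r_2^2 = \tfrac12 (r_1 - r_2)^2 \geq 0$. This settles the inequality; equivalently one can cite the power-mean (QM--AM) inequality for the two nonnegative numbers $r_1, r_2$.

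Multiplying back by $\pi$ gives $\pi r_1^2 + \pi r_2^2 \geq \tfrac{\pi}{2}(r_1+r_2)^2$, which is exactly the claimed bound on the combined area. I do not anticipate any obstacle here: the only thing to be careful about is that the bound is an identity-plus-square argument rather than anything requiring geometry, and that equality holds precisely when $r_1 = r_2$ — which is consistent with the worst case of two equal half-radius disks mentioned after Theorem~\ref{thm:disk_packing}, and with the way the inequality is invoked in the proof of Lemma~\ref{lem:recursion} (where $r_1 + r_2 \le 1$ is derived from Lemma~\ref{lem:twodisksonediamater} together with the area hypothesis).
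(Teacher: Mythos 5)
Your proof is correct, and it takes a genuinely more elementary route than the paper's. The paper argues variationally: it observes that the derivative of area with respect to radius is the circumference, so for fixed $r_1+r_2$ transferring a small amount $\delta$ from the larger radius to the smaller one strictly decreases $\pi r_1^2 + \pi r_2^2$; hence the minimum over the constraint $r_1+r_2 = \text{const}$ is attained at $r_1=r_2$, where the total area equals $2\pi r_1^2 = \tfrac{\pi}{2}(r_1+r_2)^2$. You instead expand $(r_1+r_2)^2$ and reduce the claim to $\tfrac12(r_1-r_2)^2 \geq 0$, a one-line algebraic identity (equivalently, QM--AM). Both are sound and both identify $r_1=r_2$ as the equality case. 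Your version avoids any appeal to calculus or a ``move mass between disks'' argument, which makes it shorter and arguably more transparent; the paper's phrasing foreshadows the style of area-exchange reasoning it reuses elsewhere, but that is a stylistic rather than a logical advantage here. No gap in your argument.
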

\begin{proof}
	The first derivative of the function mapping a radius onto the area of the corresponding disk is the periphery of the corresponding circle. As \new{$r_1 \geq r_2$}, decreasing \new{$r_1$} and increasing \new{$r_2$} by the same value $\delta$ reduces the total area of \new{$r_1, r_2$},
 while the value \new{$r_1+r_2$} stays the same. Hence, we assume w.l.o.g.\ that $r_1=r_2$. This implies that the total area of \new{$r_1,r_2$} is \new{$2 \pi r_1^2 = \frac{\pi}{2}\left( r_1+r_2 \right)^2$}, concluding the proof.
\end{proof}

	This allows us to assume $r_2 < 0.495 \mathcal{C}$ during the following analysis.

\subsection{Outline of the Remaining Analysis}
	
	Once our algorithm stops making recursive calls, i.e., stops applying Phase 1, Phase 1 is never applied again. W.l.o.g., let $r_1,\dots,r_n$ be the remaining disks and $O$ the container disk after the final recursion call.

	The main idea of the remaining analysis is the following: We cover the original container disk $O$ by a set of \emph{sectors} that are subsets of $O$. Let $r_i$ be a disk packed by \new{Boundary} Packing into the current container disk $\mathcal{C}$. We define the \emph{cone} induced by $r_i$ as the area of $\mathcal{C}$ between the two tangents of~$r_i$. We say that $\mathcal{C}$ is the radius of the cone. \newtext{A \emph{sector} is a subset of $O$.}
	
	 Each disk \emph{pays} portions, called \emph{atomic potentials}, of its volume to different sectors of $O$. The total atomic potential paid by a disk $r$ will be at most the volume of the disk $r$. Let $A_1,\dots,A_k$ be the total atomic potentials paid to the sectors $S_1,\dots, S_k \subset O$. The \emph{potential} of a sector $S \subseteq O$ is the sum of the proportionate atomic potentials from $S_1,\dots,S_k$, i.e., the sum of all $\frac{\abs{S_i \cap S}}{\abs{S_i}} A_i$ for $i = 1,\dots,k$. The \emph{(virtual packing) density} $\rho(S)$ of the sector $S$ is defined as the ratio between the potential of $S$ and the volume of $S$. If a sector achieves a density of $\frac{1}{2}$, we say that the sector is \emph{saturated}, otherwise its \emph{unsaturated}.
	
	Our approach for proving Theorem~\ref{thm:disk_packing} is by induction \new{over $n$}. In particular, we assume that $O \setminus \mathcal{C}$ is saturated; we show that each disk $r_i$ can be packed by our algorithm, as long as $\mathcal{C}$ is unsaturated implying that each set of disks with total volume of at most~$\frac{\abs{O}}{2}$ is packed. For a simplification, we assume for the remainder of the paper that $\mathcal{C}$ is the unit disk, i.e., $\mathcal{C} = 1$.
	
	We consider the configuration achieved after termination.

\begin{observation}\label{obs:assumeallringsfull}
	If there is a ring that is neither full nor closed, all disks are packed.
\end{observation}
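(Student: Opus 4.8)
The plan is to read the statement off the control flow of the algorithm. A ring always has exactly one of the states \textsc{open}, \textsc{closed}, \textsc{full}, so ``neither \textsc{full} nor \textsc{closed}'' means \textsc{open}, and the claim to prove is: if the algorithm has terminated and some ring is still \textsc{open}, then every disk has been packed. The only fact needed is that Phases~1--5 contain no halting instruction of their own --- the run ends solely through the global rule ``stop packing once all disks are packed.'' First I would make this precise by walking through the phases: Phase~1 either recurses or falls through to Phase~2; Phase~2 always proceeds to Phase~3; Phase~3 runs Ring Packing, which terminates --- setting the processed ring's state to \textsc{full} or \textsc{closed}, unless it has exhausted the disks, in which case the global rule ends the run --- and then passes to Phase~4; Phase~4 terminates and passes to Phase~5; and Phase~5 always continues, to Phase~3 when an \textsc{open} ring exists and to Phase~2 otherwise. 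Hence the algorithm never comes to rest except via the global rule.

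It then follows immediately that every configuration reached after termination has all disks packed, and in particular this is so when such a configuration still contains an \textsc{open} ring --- e.g.\ the ring Ring Packing was filling when the final disks were placed, or a sub-ring produced by a Phase~4 split that was never revisited. That is exactly the assertion. (Equivalently, in contrapositive form: if at termination some disk is still unpacked, then every ring is \textsc{full} or \textsc{closed}; this is presumably how the observation is used afterwards to restrict the remaining analysis to \textsc{full} rings.)

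I expect the one place that really needs care --- and hence the main obstacle, such as it is --- is verifying that the phases genuinely form a loop whose sole exit is ``all disks packed.'' Concretely, one has to make sure that the degenerate situations do not strand the algorithm but rather continue or fall under the global rule: Phase~4 creating no new open rings, Ring Packing or Boundary Packing being invoked with no disk small enough to place, the container disk $\mathcal{C}$ degenerating, or Phase~3 asking for ``the largest disk not yet packed'' when none remains. None of these is subtle, but each has to be ticked off for the argument to be watertight.
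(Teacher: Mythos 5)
The paper provides no proof for this observation; the authors treat it as self-evident from the control flow of the algorithm, so there is no argument of theirs to compare against. Your reconstruction is essentially the argument they must have in mind: since a ring's state is exactly one of \textsc{open}, \textsc{closed}, \textsc{full}, ``neither \textsc{full} nor \textsc{closed}'' means \textsc{open}; Ring Packing in Phase~3 declares its ring \textsc{full} or \textsc{closed} whenever it stops of its own accord; and Phases~2--5 form a cycle with no halting instruction of their own, so the only way a run can come to rest while an \textsc{open} ring still exists is that the global ``stop once all disks are packed'' rule fired. You also correctly flag the loose ends (Phase~4 creating no new rings, Ring Packing or Boundary Packing placing nothing, Phase~3 being invoked with no disk left) as exactly the routine bookkeeping one would have to tick off to make the argument airtight --- and which the paper omits by labelling the statement an observation.

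One small caution on framing. Your intermediate claim ``every configuration reached after termination has all disks packed'' is, taken at face value, the paper's entire Theorem~\ref{thm:disk_packing} modulo termination; if it were available this cheaply, the whole density analysis that follows would be unnecessary. What the control-flow inspection actually licenses is only the conditional the observation asserts: \emph{a terminated run in which an \textsc{open} ring survives} must have halted via the global rule. The case the paper actually has to work for --- and the one the rest of Section~\ref{sec:analysis} is devoted to --- is a run that comes to rest with every ring \textsc{full} or \textsc{closed} and some disk possibly unplaced; ruling that out requires the potential/density machinery. Your argument does establish the observation, but the writeup should be phrased as the weaker conditional rather than the blanket statement about all terminations.
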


\begin{figure}[t]
  \begin{center}
      \includegraphics[scale=0.93]{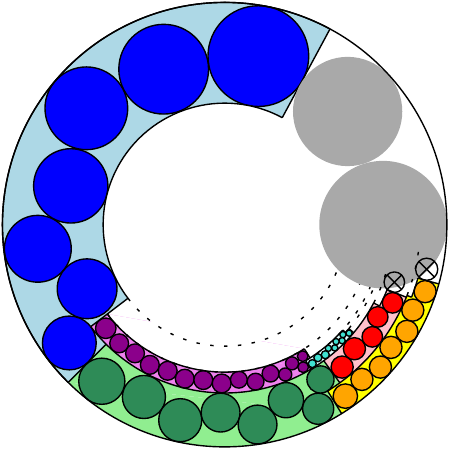} 
  \end{center}
  \caption{Different sequences of rings packed by different applications of Ring Packing. The minimal rings into which the orange and red disks are packed are full. The minimal ring into which the turquoise disks are packed is open. The uncolored, crossed-out circles illustrate that the corresponding disk did not fit into the current ring, causing it to be declared full.}
  \label{fig:statesofrings}
\end{figure}
	
	\begin{figure}[b]
  \begin{center}
      \includegraphics[scale=1]{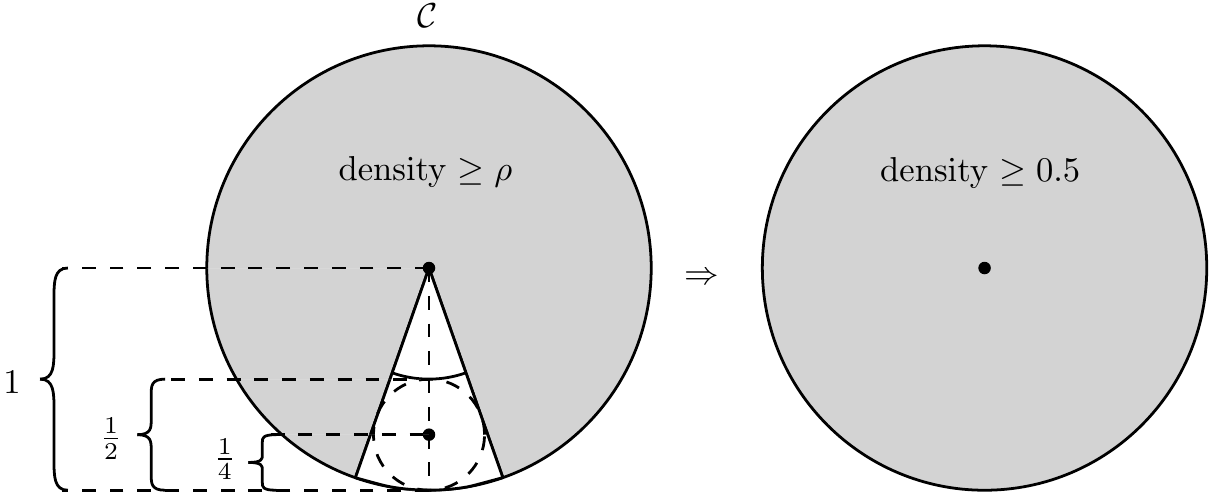} 
  \end{center}
  \caption{Ensuring a density of at least $0.5$ for a ring $R$ needs a density of $0.5606$ for $R \setminus C$.}
  \label{fig:criticaldensity}
\end{figure}

{Thus, we assume that all rings computed by our algorithm are full or closed. In order to avoid that \new{Boundary} Packing stops due to a disk $r$ not fitting, we consider the gap that is left by \new{Boundary} Packing, see
Fig.~\ref{fig:criticaldensity}. \new{This gap achieves its maximum for} $r = \frac{1}{4}$. In order to ensure that even in this case
the entire container $\mathcal{C}$ is saturated, we guarantee that $\mathcal{C}$ has a density of}

\begin{equation*}
	\rho := \frac{180^{\circ}}{360^{\circ} - 2 \arcsin \left( \frac{1/4}{3/4}
\right)} < 0.56065.
\end{equation*}

\subsection{Analysis of \new{Boundary} Packing}\label{sec:diskpacking_ana}

The following lemma is the key ingredient for the analysis of \new{Boundary} Packing.

\begin{lemma}\label{lem:densityCones}
	Let $r \in [0.2019 , \frac{1}{2}]$ be a disk lying adjacent to $\mathcal{C}$. The cone $C$ induced by $r$ has a density better than $\rho$ if $r \in [\frac{1}{4}, 0.495]$ and \new{at} least $\frac{1}{2}$ if $r \in [0.2019, \frac{1}{2}]$, see Fig.\new{~\ref{fig:r1large_main}}.
\end{lemma}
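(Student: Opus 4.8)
The plan is as follows. First I would fix coordinates: put the center of the unit disk $\mathcal{C}$ at the origin and, since $r$ lies adjacent to $\mathcal{C}$, put the center of $r$ at $(1-r,0)$. The angle between each tangent of $r$ and the axis through the two centers is $\theta:=\arcsin\frac{r}{1-r}$, well defined because $r\le\tfrac12$; hence the cone $C$ is exactly the circular sector of $\mathcal{C}$ of opening angle $2\theta$, so $\abs{C}=\theta$. Because $r$ is tangent to both bounding rays of $C$ and to the outer arc of $\mathcal{C}$, and $2r\le 1$, the disk $r$ lies inside $C$ and thus contributes its whole volume $\pi r^2$ to the potential of $C$. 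Therefore the density of $C$ is at least
\[
 f(r)\;:=\;\frac{\pi r^2}{\arcsin\!\frac{r}{1-r}},
\]
and it suffices to prove $f(r)>\rho$ for $r\in[\tfrac14,0.495]$ and $f(r)\ge\tfrac12$ for $r\in[0.2019,\tfrac12]$.

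The heart of the argument would be to show that $f$ is unimodal on $(0,\tfrac12]$: strictly increasing up to some point $r_0$ and strictly decreasing afterwards. Given this, on every closed subinterval $[a,b]\subseteq(0,\tfrac12]$ the minimum of $f$ is attained at an endpoint, i.e.\ $\min_{[a,b]}f=\min\{f(a),f(b)\}$, which reduces both claims to the four values $f(\tfrac14),f(0.495),f(0.2019),f(\tfrac12)$. To obtain unimodality I would write $\theta(r)=\arcsin\frac{r}{1-r}$, compute $\theta'(r)=\frac{1}{(1-r)\sqrt{1-2r}}$, and conclude $f'(r)=\frac{\pi r}{\theta(r)^2}\,h(r)$ with $h(r):=2\arcsin\frac{r}{1-r}-\frac{r}{(1-r)\sqrt{1-2r}}$; hence the sign of $f'$ on $(0,\tfrac12)$ equals the sign of $h$. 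A second differentiation yields the clean formula
\[
 h'(r)\;=\;\frac{1-5r+5r^2}{(1-r)^2\,(1-2r)^{3/2}},
\]
whose numerator is a quadratic with its only root in $(0,\tfrac12)$ at $\tfrac{5-\sqrt5}{10}\approx 0.276$; thus $h$ increases on $(0,\tfrac{5-\sqrt5}{10})$ and decreases afterwards. Since $h(0)=0$ and $h(r)\to-\infty$ as $r\to\tfrac12^{-}$, the function $h$ is positive on $(0,r_0)$ and negative on $(r_0,\tfrac12)$ for a unique $r_0$, which is exactly the claimed unimodality of $f$.

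The last step is to evaluate the four endpoints. For the $\tfrac12$-claim, $f(\tfrac12)=\frac{\pi/4}{\pi/2}=\tfrac12$ and $f(0.2019)=\frac{\pi\cdot 0.2019^2}{\arcsin(0.2019/0.7981)}>\tfrac12$, so $f\ge\tfrac12$ on $[0.2019,\tfrac12]$, with equality only at $r=\tfrac12$. For the $\rho$-claim, $f(\tfrac14)=\frac{\pi/16}{\arcsin(1/3)}\approx 0.578$ and $f(0.495)=\frac{\pi\cdot 0.495^2}{\arcsin(0.495/0.505)}\approx 0.561$, both strictly above $\rho<0.56065$, giving $f>\rho$ on $[\tfrac14,0.495]$.

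I expect the main obstacle to be precisely these last two inequalities: both $f(0.495)>\rho$ and $f(0.2019)>\tfrac12$ hold by a margin of only about $6\cdot 10^{-4}$ and involve the transcendental $\arcsin$ — indeed the constants $0.495$ (the Phase~1 recursion cutoff) and $0.2019$ are tuned exactly so that the lemma is barely true. I would therefore certify these two numerical inequalities rigorously using interval arithmetic on $\arcsin$; by contrast, the coordinate setup, the identity $\abs{C}=\theta$, and the unimodality of $f$ are all elementary and need no machine assistance.
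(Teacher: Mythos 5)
Your proposal is correct and follows essentially the same route as the paper: both reduce the claim to the single-variable function $f(r)=\pi r^2/\arcsin\bigl(\tfrac{r}{1-r}\bigr)$, locate its interior critical behaviour, and finish by evaluating $f$ at $\tfrac14$, $0.2019$, $0.495$, and $\tfrac12$. Your explicit computation of $h'(r)=\frac{1-5r+5r^2}{(1-r)^2(1-2r)^{3/2}}$ actually makes the unimodality (and hence the ``minimum at an endpoint'' step) more rigorous than the paper's numerical solving of $f'(r)=0$, but this is a refinement of the same argument rather than a different approach.
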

\begin{figure}[h!]
  \begin{center}
      \includegraphics[scale=1]{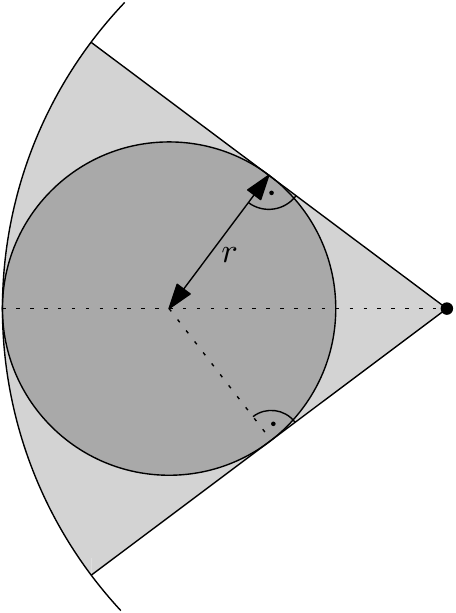} 
  \end{center}

  \caption{A disk $r \in [\frac{1}{4}, 0.495]$ lying adjacent to $\mathcal{C}$ induces a cone with density of at least $0.56127$ if $r \in [\frac{1}{4}, 0.495]$ and of least $\frac{1}{2}$ if $r \in [\frac{1}{4}, \frac{1}{2}]$.}
  \label{fig:r1large_main}
\end{figure}

\begin{proof}
	Let $f(r) := \frac{\pi r^2}{\arcsin \left( \frac{r}{1-r} \right)}$ for $\frac{1}{4} \leq r \leq \frac{1}{2}$. Thus we have 
	\begin{equation*}
		f'(r) = \frac{2 \pi r}{\arcsin \left( \frac{r}{1-r} \right)} - \frac{\pi r^2 \left( \frac{1}{1-r} + \frac{r}{\left( 1-r \right)^2} \right)}{\arcsin \left( \frac{r}{1-r} \right)^2 \sqrt{1 - \frac{r^2}{\left( 1-r \right)^2}}}.
	\end{equation*}
	Solving $f'(r) = 0$ yields $r \approx 0.39464$. Furthermore, we have $f(\frac{1}{4}) \approx 0.57776$, $f(0.39464) = 0.68902$, $f(\frac{1}{2}) = 0.5$, and $f(0.495) \approx 0.56127$. Thus, $f$ restricted to $[\frac{1}{4},0.495]$ achieves at $0.495$ its global minimum $0.56127$.
	
	A similar approach implies that $f$ restricted to $[0.2019,\frac{1}{2}]$ attains its global minimum $\frac{1}{2}$ at $\frac{1}{2}$.
\end{proof}

The following lemma proves that all disks $r_i \geq \frac{\mathcal{C}}{4}$ that are in line to be packed into a container disk $\mathcal{C}$ can indeed be packed into $\mathcal{C}$.

\begin{lemma}\label{lem:largestdisknotadjacenttoboundary}
	All disks $r_i \geq \frac{1}{4}$ that are in line to be packed into~$\mathcal{C}$ by \new{Boundary} Packing do fit into~$\mathcal{C}$.
\end{lemma}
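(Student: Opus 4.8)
The goal is to show that as long as a disk $r_i \ge \tfrac14$ is next in line for \textsc{Boundary} Packing in a container $\mathcal{C} = 1$, the disk actually fits. The plan is to argue by a packing-density bookkeeping contradiction: if $r_i$ did not fit, then the cones already placed along the boundary would have to leave too little angular room, yet each such cone is a subset of $\mathcal{C}$ and, by Lemma~\ref{lem:densityCones}, carries density at least $\rho > 0.5606$ (indeed $\ge \tfrac12$ for the full range $[\tfrac14,\tfrac12]$). Summing the cone angles over all disks already packed adjacent to the boundary would then force the total virtual potential inside $\mathcal{C}$ to exceed $\tfrac12 |\mathcal{C}|$, contradicting the inductive hypothesis that $O\setminus\mathcal{C}$ is saturated and the input has total area at most $|O|/2$.

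\textbf{Key steps.} First I would set up the angular accounting: since every disk packed by \textsc{Boundary} Packing is placed adjacent to the outer boundary of $\mathcal{C}$ at the smallest feasible polar angle $\beta\ge\alpha$, the cones induced by consecutive packed disks are interior-disjoint (they only share tangent rays), so the sum of their opening angles $2\arcsin\!\big(\tfrac{r_j}{1-r_j}\big)$ is at most $360^\circ$. Second, I would show that if $r_i$ does \emph{not} fit, then in fact the already-packed cones plus the cone that $r_i$ \emph{would} occupy overflow $360^\circ$ — equivalently, the packed cones alone cover more than $360^\circ - 2\arcsin\!\big(\tfrac{r_i}{1-r_i}\big) \ge 360^\circ - 2\arcsin\!\big(\tfrac{1/4}{3/4}\big)$, using monotonicity of $r\mapsto \arcsin(r/(1-r))$ and $r_i\le r_1 < 0.495$ (Phase 1 no longer applies, so $r_2<0.495$; for $r_1$ one uses $r_1\le\tfrac12$ and the separate $\ge\tfrac12$ bound, or the fact that $r_1$ alone trivially fits). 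Third, I would convert this angular overflow into a potential overflow: each packed cone $C_j$ satisfies $|C_j\cap\mathcal{C}|\cdot\rho \le$ (potential charged to $C_j$) via Lemma~\ref{lem:densityCones}, and since $\rho = \tfrac{180^\circ}{360^\circ - 2\arcsin(1/3)}$ is exactly calibrated so that a $(360^\circ - 2\arcsin(1/3))$-fraction of cone angle at density $\rho$ already yields potential $\tfrac12|\mathcal{C}|$, the overflow gives strictly more than $\tfrac12|\mathcal{C}|$ of potential inside $\mathcal{C}$. Combined with the saturation of $O\setminus\mathcal{C}$ this exceeds $\tfrac12|O|$, contradicting that the total packed area is at most $\tfrac12|O|$.

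\textbf{Main obstacle.} The delicate point is making the angular bound genuinely tight: I must rule out that $r_i$ fails to fit merely because of an awkward \emph{leftover gap} between the last packed cone and angle $360^\circ$, rather than because the cones truly fill the circle. This is exactly where the choice of threshold $\mathcal{T} = \tfrac{r-d}{4}$ and the value $\rho < 0.56065$ enter: the worst leftover gap left by \textsc{Boundary} Packing corresponds to the smallest admissible disk $r=\tfrac14$, whose cone has half-angle $\arcsin(\tfrac{1/4}{3/4}) = \arcsin(\tfrac13)$, and $\rho$ is defined precisely so that the remaining $360^\circ - 2\arcsin(\tfrac13)$ of angle, filled with cones of density $\ge\rho$, still saturates $\mathcal{C}$. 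So I expect the argument to reduce to: (i) the cones packed so far have density $\ge\rho$ (Lemma~\ref{lem:densityCones}), (ii) their angular measure is $> 360^\circ - 2\arcsin(\tfrac13)$ whenever the next disk $r_i\ge\tfrac14$ would not fit, and (iii) $\rho\cdot(360^\circ - 2\arcsin(\tfrac13)) = 180^\circ$, i.e. density $\tfrac12$ of $\mathcal{C}$ — the contradiction. The only real care is in step (ii), bounding from below how much cone-angle a non-fitting $r_i$ forces, using that the packing is greedy and monotone in polar angle and that all already-packed disks have radius $\ge r_i\ge\tfrac14$ (so their cones are at least as wide as $r_i$'s).
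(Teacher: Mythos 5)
Your high-level plan (density bookkeeping contradiction using Lemma~\ref{lem:densityCones}) matches the paper, but step (ii) of your argument contains a genuine error that would make the proof fail for $r_i > \tfrac14$. You claim the packed cones cover angle $> 360^\circ - 2\arcsin\bigl(\tfrac{r_i}{1-r_i}\bigr) \geq 360^\circ - 2\arcsin\bigl(\tfrac{1}{3}\bigr)$, but the second inequality is reversed: since $r \mapsto \arcsin\bigl(\tfrac{r}{1-r}\bigr)$ is \emph{increasing} and $r_i \geq \tfrac14$, we have $\arcsin\bigl(\tfrac{r_i}{1-r_i}\bigr) \geq \arcsin\bigl(\tfrac13\bigr)$, hence $360^\circ - 2\arcsin\bigl(\tfrac{r_i}{1-r_i}\bigr) \leq 360^\circ - 2\arcsin\bigl(\tfrac13\bigr)$, with equality only at $r_i = \tfrac14$. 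So for $r_i > \tfrac14$, the packed cones may cover strictly less than $360^\circ - 2\arcsin(\tfrac13)$ of angle, your step (iii) calibration $\rho\cdot(360^\circ - 2\arcsin(\tfrac13)) = 180^\circ$ is not reached, and you get no contradiction from the packed cones alone.

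The fix, and the route the paper actually takes, is to bring $r_k$ itself into the accounting rather than trying to produce an absolute lower bound on the packed angular measure. Since $r_k \geq \tfrac14 \geq 0.2019$, Lemma~\ref{lem:densityCones} (the weaker density-$\geq\tfrac12$ part — the $\rho$ bound is not needed here) gives $\pi r_k^2 \geq \tfrac12\arcsin\bigl(\tfrac{r_k}{1-r_k}\bigr)$. When $r_k$ does not fit, the uncovered wedge $L$ has angle $< 2\arcsin\bigl(\tfrac{r_k}{1-r_k}\bigr)$, so $|L| < \arcsin\bigl(\tfrac{r_k}{1-r_k}\bigr) \leq 2\pi r_k^2$. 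Then the saturation of all the cones $C_1,\dots,C_{k-1}$ yields
\[
\sum_{j<k}\pi r_j^2 \;\geq\; \tfrac12\bigl(|\mathcal{C}| - |L|\bigr) \;>\; \tfrac12|\mathcal{C}| - \pi r_k^2,
\]
so $\sum_{j\leq k}\pi r_j^2 > \tfrac12|\mathcal{C}|$, contradicting the volume budget. Notice this is scale-free in $r_k$: as $r_k$ grows, the leftover angle it is allowed to leave grows too, but so does $\pi r_k^2$, and the density bound ties the two together — which is exactly the coupling your angular argument loses by freezing the threshold at $\arcsin(\tfrac13)$.
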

\begin{proof}
	Assume that there is a largest disk $r_k \geq \frac{1}{4}$ not packed
adjacent to $\mathcal{C}$. Each disk $r_i$ from $r_1,\dots,r_{k-1}$ pays its
entire volume to the cone induced by~$r_i$. Lemma~\ref{lem:densityCones}
implies that each cone \new{is saturated}. As $r_k$ does
not fit between $r_1,r_{k-1}$ and is adjacent to $\mathcal{C}$,
Lemma~\ref{lem:densityCones} implies that the area of $\mathcal{C}$ that is
not covered by a cone induced by $r_1,\dots,r_{k-1}$ has a volume smaller than
twice the volume of $r_k$. This implies that the total volume of $r_1,\dots,r_k,$ is
larger than half of the volume of $C$. This implies that the total input volume
of \new{$r_1,\dots,r_n$} is larger than twice the volume of the container. This is a
contradiction, concluding the proof.  
\end{proof}


\begin{corollary}\label{cor:alldisksfitiflarge}
	If $r_n \geq \frac{1}{4} $, our algorithm packs all input disks.
\end{corollary}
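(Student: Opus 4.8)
The plan is to show that when $r_n\geq\frac14$ the algorithm never needs its ring machinery: \new{Boundary} Packing alone packs the entire instance. Recall that, after normalization, $O$ is the unit disk, so every input disk has radius at least $\frac14$. First observe that Phase~1 can fire at most once and only at the top level: if $r_1,r_2\geq0.495$, then a third disk $r_3\geq\frac14$ would force a total area of at least $\pi\!\left(2\cdot0.495^2+0.25^2\right)>\frac\pi2$, contradicting the hypothesis, so $n=2$ and the recursive call that Phase~1 triggers is on the empty instance. In this case \new{Boundary} Packing in Phase~1 must place $r_1$ and $r_2$: by Lemma~\ref{lem:twodisksonediamater} the quantity $\frac\pi2(r_1+r_2)^2$ is at most the combined area of the disks, which is at most $\frac\pi2$, hence $r_1+r_2\leq1$; placing $r_1$ at polar angle $0$ and then $r_2$ adjacent to the boundary of $O$ somewhere in $[0,\pi]$ is non-overlapping (at the antipodal position the distance between the centers is $2-(r_1+r_2)\geq r_1+r_2$), so both disks are packed and the algorithm terminates.

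It remains to treat the case where Phase~1 never fires, i.e.\ $r_2<0.495$ (this also covers $n=1$). Then the algorithm goes directly to Phase~2 and runs \new{Boundary} Packing on $\mathcal{C}=O$ with $d=0$ and threshold $\mathcal{T}=\frac{r-d}{4}=\frac14$. \new{Boundary} Packing halts only when the current disk fails to fit or has radius below $\mathcal{T}$; the latter never happens because every $r_i\geq\frac14=\mathcal{T}$, and the former never happens by Lemma~\ref{lem:largestdisknotadjacenttoboundary}, which guarantees that every disk of radius $\geq\frac14$ in line to be packed fits into $\mathcal{C}$. Hence \new{Boundary} Packing packs all of $r_1,\dots,r_n$, no ring is ever created, and the corollary follows.

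The two area estimates and the threshold bookkeeping are routine; the one point that needs care is to confirm that Lemma~\ref{lem:largestdisknotadjacenttoboundary} genuinely applies to our instance. Its proof runs the cone-density argument of Lemma~\ref{lem:densityCones}, which is stated for radii in $[\tfrac14,\tfrac12]$, whereas a priori $r_1$ could exceed $\tfrac12$ --- in which case the center of $O$ lies inside $r_1$ and the cone of $r_1$ degenerates to all of $O$. In that regime, however, the same area count bounds $n$ by a small constant, and these few disks, each of radius $\geq\frac14$ with $\sum_i\pi r_i^2\leq\frac\pi2$, are directly checked to be packable adjacent to the boundary. Handling this degenerate range cleanly is the only genuine obstacle.
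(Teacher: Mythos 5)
Your proof takes essentially the same route as the paper: Corollary~\ref{cor:alldisksfitiflarge} is stated there without a separate argument precisely because it is meant to follow immediately from Lemma~\ref{lem:largestdisknotadjacenttoboundary} together with the observation that the threshold $\mathcal{T}\le\frac14$ never halts Boundary Packing when every radius is at least $\frac14$. Your explicit treatment of Phase~1 is harmless but redundant; in the paper's framework the corollary is stated after the recursion has already been dispatched (``we assume $r_2<0.495$ during the following analysis''), so the $r_1,r_2\ge0.495$ case is already out of the way by the time the corollary appears.

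The genuinely useful thing you add is the observation that the proof of Lemma~\ref{lem:largestdisknotadjacenttoboundary} leans on Lemma~\ref{lem:densityCones}, which is only stated for $r\in[0.2019,\frac12]$, and that the cone of a disk with $r>\frac12$ is not even well-defined (the container's center lies inside the disk, so there are no tangents from $m$). This is a real subtlety that the paper does not acknowledge at this point; it handles $r_1>\frac12$ by a separate half-disk argument only later, in Section~\ref{sec:appendixanalysisr1huge}. However, you do not actually close this case: ``these few disks $\ldots$ are directly checked to be packable adjacent to the boundary'' is an assertion, not a proof, and for $n=3$ or $n=4$ with $r_1$ slightly above $\frac12$ and the rest near $\frac14$ the geometric check is not entirely trivial (one must verify that greedy Boundary Packing, not just some placement, succeeds). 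So as written your argument has the same unfilled hole you diagnosed in the lemma. To be complete you would either need to carry out that finite check or route the $r_1>\frac12$ case through the half-disk reduction of the paper's Section~\ref{sec:appendixanalysisr1huge}.
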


Thus, we assume w.l.o.g. $r_n < \frac{1}{4}$, implying that our algorithm creates rings.
	
\subsection{Analysis of Ring Packing}\label{sec:ringpacking_ana}

For the following definition, see Fig.~\ref{fig:ring_packing}~(Middle).
\begin{definition}
	A \emph{zipper} \new{$Z$} is a (maximal) sequence $\langle r_k,\dots,r_{\ell} \rangle$ of disks that are packed into a ring $R$ during an application of Ring Packing. The \emph{length} of $Z$ is defined as $k - \ell + 1$. 
\end{definition}

Consider a zipper $\langle r_k,\dots,r_{\ell} \rangle$ packed into a ring $R$. For a simplified presentation, we assume in Section~\ref{sec:ringpacking_ana} that the lower tangent of $r_k$ realizes a polar angle of zero, see Fig.~\ref{fig:ring_packing}.

We refine the potential assignments of zippers as follows. Let $Z = \langle r_k,\dots,r_{\ell} \rangle$ be an arbitrary zipper and $R$ the ring into which $Z$ is packed. In order to subdivide $R$ into sectors corresponding to specific parts of the zipper, we consider for each disk $r_i$ the \emph{center ray}, which is the ray starting from~$m$ and passing the midpoint of $r_i$. Let $t_1, t_2$ be two rays starting in $m$. We say that $t_1$ lies above $t_2$ when the polar angle realized by $t_1$ is at least as large as the polar angle realized by $t_2$. $t_1$ is the \emph{minimum} (\emph{maximum}) of $t_1,t_2$ if $t_1$ does not lie above (below) $t_2$. Furthermore, the \emph{upper tangent} (\emph{lower tangent}) of a disk $r_i$ is the maximal (minimal) tangent of $r_i$.

\begin{figure}[h!]
  \begin{center}
      \includegraphics[height=4.5cm]{} 
  \end{center}

  \caption{A maximal sequence of disks that are packed into a ring during an application of \new{Boundary} Packing. The corresponding sectors are illustrated in light gray \textbf{Left:} A zipper of \new{size} one and the corresponding sector. \textbf{Middle:} A zipper of \new{size~$14$}, the resulting directed adjacency graph (black/red), and the path (red) leading from the largest disk to the smallest disk. The first seven edges of $P$ are diagonal and the remaining edges of $P$ are vertical. \textbf{Right:} The zipper and the sector disassembled into smaller sectors corresponding to the edges of the red path.}
  \label{fig:ring_packing}
\end{figure}

If the zipper $Z$ consists of one disk $r_k$, the \emph{sector} $S$ of $Z$ is that part of $R$ between the two tangents to $r_k$ and $r_k$ pays its entire volume to $S$.

\begin{lemma}\label{lem:zipperlengthone}
	The density of the sector $S$ of a zipper of length one is at least $0.77036$.
\end{lemma}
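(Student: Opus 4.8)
The plan is to compute (or lower-bound) the density $\rho(S) = \frac{\abs{r_k}}{\abs{S}}$ directly, where $S$ is the sub-sector of the ring $R$ cut out by the two tangents to $r_k$ from the center $m$ of the original container. Since $r_k$ pays its whole volume to $S$, we need $\abs{S} \le \abs{r_k}/0.77036$, i.e.\ we must show that the part of the ring between the two tangents is not too much larger than the disk $r_k$ itself. First I would set up coordinates: let $R = R[r_{\text{out}}, r_{\text{in}}]$ with the disk $r_k$ adjacent to one of the two boundaries (outer or inner) and, by the ring invariant, $r_{\text{out}} \le 2 r_{\text{in}}$ and the width $w = r_{\text{out}} - r_{\text{in}} \ge 2r_k$ up to the zipper stopping condition. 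Write $\phi$ for the half-angle subtended by $r_k$ at $m$, so $\sin\phi = r_k / D$ where $D$ is the distance from $m$ to the center of $r_k$ (either $r_{\text{out}} - r_k$ or $r_{\text{in}} + r_k$). The sector $S$ is an annular wedge of half-angle $\phi$, so $\abs{S} = \phi\,(r_{\text{out}}^2 - r_{\text{in}}^2)$, and the bound to prove becomes $\phi\,(r_{\text{out}}^2 - r_{\text{in}}^2) \le \pi r_k^2 / 0.77036$.

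The key step is to reduce this two-parameter optimization (over $r_k$ and the ring geometry) to a worst case. I expect the worst case to occur when the ring is as ``fat'' as the invariant allows, namely $r_{\text{out}} = 2 r_{\text{in}}$, so that $r_{\text{out}}^2 - r_{\text{in}}^2 = 3 r_{\text{in}}^2$ and $w = r_{\text{in}}$, combined with the smallest admissible disk relative to the width — here the zipper-of-length-one case forces $r_k$ close to $w/2$ (if $r_k$ were much smaller the zipper would not have stopped at length one, or the ensuing ring-management in Phase 4 would have split off a thinner ring). Plugging $r_k = w/2 = r_{\text{in}}/2$ and $D = r_{\text{out}} - r_k = \tfrac{3}{2} r_{\text{in}}$ (disk adjacent to the outer boundary) gives $\sin\phi = \tfrac{1}{3}$, hence $\phi = \arcsin\tfrac13 \approx 0.33984$, and the density is
\begin{equation*}
	\rho(S) = \frac{\pi r_k^2}{\phi\,(r_{\text{out}}^2 - r_{\text{in}}^2)} = \frac{\pi (r_{\text{in}}/2)^2}{\arcsin(1/3)\cdot 3 r_{\text{in}}^2} = \frac{\pi}{12 \arcsin(1/3)} \approx 0.77036,
\end{equation*}
matching the claimed constant. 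I would then verify this is genuinely the minimum: differentiating $g(r_k, r_{\text{in}}) = \frac{\pi r_k^2}{\arcsin(r_k/(r_{\text{out}}-r_k))(r_{\text{out}}^2-r_{\text{in}}^2)}$ over the admissible region, one checks monotonicity in each variable separately (the dependence on $r_{\text{out}}$ for fixed $r_k, r_{\text{in}}$, and the dependence on $r_k$ for fixed ring) so that the extremum is attained at the corner $r_{\text{out}} = 2r_{\text{in}}$, $r_k = w/2$, and also handle the symmetric sub-case where $r_k$ is adjacent to the inner boundary, which yields an even larger density and so is not binding.

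The main obstacle will be pinning down precisely which disk sizes and ring proportions are actually reachable for a length-one zipper: the constant $0.77036$ is tight, so the argument must exactly exclude the configurations that would give a smaller ratio rather than just bounding crudely. This requires carefully invoking the stopping/splitting rules of Ring Packing (Phases 3 and 4) and the ring-creation rule $R[r_{\min}, r_{\min} - 2 r_i]$ to argue that whenever a zipper has length exactly one, the disk $r_k$ fills at least half the ring width and the ring's aspect ratio is bounded by $r_{\text{out}} \le 2 r_{\text{in}}$; a monotonicity computation (possibly the one place where interval arithmetic is invoked, as foreshadowed in the introduction) then finishes the estimate. The remaining calculus — computing $g'$ and checking signs — is routine and I would not grind through it here.
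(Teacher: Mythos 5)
Your plan identifies the correct extremal configuration and arrives at the right constant $\tfrac{\pi}{12\arcsin(1/3)}$, and your calculation for that configuration is correct. However, there is a real gap at exactly the point you flag as ``the main obstacle,'' and it is not a minor loose end: the whole argument hinges on the fact that a length-one zipper's disk $r_k$ touches \emph{both} boundaries of $R$, i.e.\ $2r_k = r_{\text{out}} - r_{\text{in}}$ \emph{exactly}. You only say $r_k$ is ``close to $w/2$'' and $w \ge 2r_k$, which is not good enough: without the equality constraint, the two-parameter optimization you set up has no positive lower bound at all — fixing the ring and letting $r_k \to 0$ sends the density $\frac{\pi r_k^2}{\arcsin(r_k/(r_{\text{out}}-r_k))(r_{\text{out}}^2 - r_{\text{in}}^2)} \to 0$, so ``checking monotonicity in each variable and taking the corner'' cannot succeed. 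The paper's proof dispenses with this in one line by simply observing that the single disk of a length-one zipper touches both the inner and outer boundary (which follows from the ring-creation rules in Phases~3 and~4: rings are created with width exactly twice the radius of the first disk to be packed into them). Once that is granted, the problem collapses to a single-parameter monotonicity check in $r_{\text{in}}/r_k$ (equivalently, in $t = r_k/(r_{\text{in}}+r_k)$, where the density is $\frac{\pi t}{4\arcsin t}$, decreasing in $t$), minimized at the ring-invariant boundary $r_{\text{out}} = 2r_{\text{in}}$, i.e.\ $t = 1/3$ — which is the paper's ``density is not increased by assuming $r_{\text{in}} = 2r_k$'' step. Your framing as a two-variable problem with a separate inner/outer adjacency case distinction is therefore also more machinery than needed: once $r_k$ touches both boundaries, there is no inner-vs-outer case, and only the aspect ratio is free. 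To turn your sketch into a proof you must (i) replace the approximate ``$r_k$ close to $w/2$'' with the exact equality, deduced from the ring-creation rule rather than postponed, and (ii) then do the one-variable monotonicity argument rather than a two-variable corner argument.
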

\begin{proof}
	As the zipper consists of only one disk $r_k$, $r_k$ touches both the inner and the outer boundary of $R$. Hence, the density of $S$ is not increased by assuming that the inner radius of $R$ is equal to the diameter of $r_k$. Hence, the density of $S$ is at least $\frac{\pi}{12 \arcsin(1/3)} \approx 0.77036$.
\end{proof}

Assume the zipper $\langle r_k,\dots,r_{\ell} \rangle$ consists of at least two disks. We define the \emph{adjacency graph} $G=(\{ r_k,\dots,r_{\ell} \}, E)$ as a directed graph as follows: There is an edge $(r_j,r_i)$ if (1) $r_i \leq r_j$ and (2) $r_i, r_j$ are touching each other, see Fig.~\ref{fig:ring_packing}~(Right).  As Ring Packing packs each disk $r_i$ with midpoint $m_i$ such that $m_i$ realizes the smallest possible polar angle, there is a path $e_k,\dots,e_{\ell-1} =: P$ connecting $r_k$ to $r_{\ell}$ in the adjacency graph $G$, see Fig.~\ref{fig:ring_packing}~\new{(Middle)}. $e_k$ is the \emph{start} edge of $P$ and $e_{\ell-1}$ is the end edge of~$P$. The remaining edges of $P$ that are neither the start nor the end edge of $G$, are \emph{middle} edges of $P$. Furthermore, an edge $(r_j,r_{m}) = e_i \in P$ is \emph{diagonal} if $r_j,r_{m}$ are touching different boundary components of $R$. Otherwise, we call $e_i$ \emph{vertical}.

	Depending on whether $e_i$ is a start, middle, or an end edge and on whether $e_i$ is diagonal or vertical, we classify the edges of the path $P$ by eight different types T1-T8. For each type we individually define the sector $A_i$ belonging to an edge $(r_j,r_{m}) = e_i \in P$ and the potential assigned to $A_i$, called the \emph{potential of $e_i$}.
	
	Let $t_{\text{lower}}$ be the minimum of the lower tangents of $r_j,r_{m}$ and $t_{\text{upper}}$ the maximum of the upper tangents of $r_j,r_{m}$, see Fig.~\ref{fig:threelines} (a). Furthermore, let $t_1,t_2$ be the center rays of $r_j,r_{m}$, such that $t_1$ does not lie above $t_2$.
	
	For the case that $e_i = (r_j,r_{m})$ is a vertical edge, we consider additionally the disk $r_{p}$ that is packed into $R$ after $r_j$ and before $r_{m}$, see Fig.~\ref{fig:threelines} (f). Let $t_3$ be the maximum of $t_2$ and the upper tangent of $r_{p}$, see Fig.~\ref{fig:threelines}. 
		
\begin{figure}[h!]
  \begin{center}
      \includegraphics[scale=1]{} 
  \end{center}

  \caption{The eight possible configurations of an edge $e_i$ (red) of $P$, the corresponding sectors (light gray), and the potentials (dark gray) payed by the involved disks to the sector.}
  \label{fig:threelines}
\end{figure}
	\begin{itemize}
		\item[T1] \textbf{The sector of $e_i$}: If $e_i = (r_j,r_{m})$ is a diagonal start edge (as shown in Fig.~\ref{fig:threelines}(b)), the sector of $e_i$ is that part of $R$ that lies between $t_{\text{lower}}$ and $t_2$.
		
		\textbf{The potential of $e_i$}: $r_j$ pays its entire volume and $r_{m}$ the half of its volume to the sector of~$e_i$.
		
		\item[T2] \textbf{The sector of $e_i$}: If $e_i = (r_j,r_{m})$ is a diagonal middle edge,  (as shown in Fig.~\ref{fig:threelines}(c)), the sector of $e_i$ is that part of $R$ that lies between $t_1$ and $t_2$.
		
		\textbf{The potential of $e_i$}: $r_j$ and $r_m$ pay the half of its volume to the sector of $e_i$.
		
		\item[T3] \textbf{The sector of $e_i$}: If $e_i = (r_j,r_{m})$ is a diagonal end edge,  (as shown in Fig.~\ref{fig:threelines}(d)), the sector of $e_i$ consists of two parts: (1) The first is the part of $R$ that lies between the upper tangent and the center ray of~$r_j$. (2) Let $R_m$ be the smallest ring enclosing $r_m$. The second part of the sector is that part of $R_m$ that lies between the upper tangent of $r_m$ and the minimum of $t_1$ and the lower tangent of $r_m$.
		
		\textbf{The potential of $e_i$}: $r_j$ pays the half of its volume and $r_m$ its entire volume to the sector of~$e_i$.
		
		\item[T4] \textbf{The sector of $e_i$}: If $e_i =
(r_j,r_{m})$ is a diagonal start and end edge,  (as shown in
Fig.~\ref{fig:threelines}(e)), the sector of $e_i$ is
the union of two sectors: (1)~The first is the part of $R$ that lies between the lower and
the upper tangent of $r_j$. (2)~The second is that part of $R_m$ that lies between the lower
and the upper tangent of $r_m$. 
		
		\textbf{The potential of $e_i$}: $r_j,r_m$ pay their entire volume to the sector of $e_i$.
		
		\item[T5] \textbf{The sector of $e_i$}: If $e_i = (r_j,r_m)$ is a vertical start edge,  (as shown in Fig.~\ref{fig:threelines}(g)), the sector of $e_i$ is that part of $R$ that lies between the minimum of the lower tangents of $r_j,r_p$ and the center ray of $r_m$.
		
		\textbf{The potential of $e_i$}: $r_j,r_p$ pay their entire volume and $r_m$ the half of its volume to the sector of $e_i$.
		
		\item[T6] \textbf{The sector of $e_i$}: If $e_i = (r_j,r_m)$ is a vertical middle edge,  (as shown in Fig.~\ref{fig:threelines}(h)), the sector of $e_i$ is that part of $R$ that lies between the center rays of $r_j,r_m$.
		
		\textbf{The potential of $e_i$}: $r_p$ pays its entire volume and $r_j,r_m$ pay half of their respective volume to the sector of $e_i$.
		
		\item[T7] \textbf{The sector of $e_i$}: If $e_i = (r_j,r_m)$ is a vertical end edge,  (as shown in Fig.~\ref{fig:threelines}(i)), the sector of $e_i$ consists of two parts: (1)~The first is that part of $R$ that lies between the center ray of $r_j$ and the upper tangent of $r_p$. (2)~Let $R_m$ be the smallest ring enclosing $r_m$. The second part of the sector is the part of $R_m$ that lies between the center ray of $r_j$ and the upper tangent of $r_m$. 
		
		\textbf{The potential of $e_i$}: $r_j$ pays the half of its volume and $r_p,r_m$ their entire volumes to the sector of $e_i$.
		
		\item[T8] \textbf{The sector of $e_i$}: If $e_i = (r_j,r_m)$ is a vertical start and end edge,  (as shown in Fig.~\ref{fig:threelines}(j)), the sector of $e_i$ is consists of two parts: (1)~The first is that part of $R$ that lies between the minimum of the lower tangents of $r_j,r_p$ and the maximum of the upper tangents $r_j,r_p$. (2)~Let $R_m$ be the smallest ring enclosing $r_m$. The second part of the sector is that part of $R_m$ that lies between the lower and the upper tangent of $r_m$.
		
		\textbf{The potential of $e_i$}: $r_j,r_p,r_m$ pay their entire volume to the sector of $e_i$.  
	\end{itemize}
	
	For simplicity, we also call the density of the sector of an edge $e_i \in P$ the \emph{density of $e_i$}. The \emph{sector} of a zipper is the union of the sectors of the edges of $P$.
	
\begin{restatable}{lemma}{lemzipperoflengthatleastthree}\label{lemzipperoflengthatleastthree}
	Let $Z = \langle r_k,\dots,r_{\ell} \rangle$ be a zipper of length at least two and $P$ a path in the adjacency graph of $Z$ connecting $r_k$ with $r_{\ell}$. Each edge $e_i \in P$ has a density of at least $\rho$.
\end{restatable}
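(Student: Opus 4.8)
The plan is to check the lower bound $\rho$ separately for each of the eight edge types T1--T8 shown in Fig.~\ref{fig:threelines}, since every type prescribes both the shape of the sector of $e_i$ and which of the disks involved pay (fully or half) into it. Fix an edge $e_i=(r_j,r_m)$ with $r_j\ge r_m$, and, when $e_i$ is vertical, the intermediate disk $r_p$ that Ring Packing places between $r_j$ and $r_m$. In every case the sector of $e_i$ is a union of annular sectors: one inside the ring $R=R[r_{\mathrm{out}},r_{\mathrm{in}}]$ and, for the end-edge types T3, T4, T7, T8, a second one inside the minimal ring $R_m$ enclosing $r_m$. An annular sector of $R$ of angular width $\phi$ has area $\tfrac{\phi}{2}(r_{\mathrm{out}}^2-r_{\mathrm{in}}^2)$, and likewise for $R_m$. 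Since the potential of $e_i$ is, by definition, a fixed combination of $\pi r_j^2$, $\pi r_m^2$ and (for vertical edges) $\pi r_p^2$ with coefficients in $\{\tfrac12,1\}$, the density of $e_i$ becomes an explicit function of the radii $r_j,r_m$ (and $r_p$) and of $r_{\mathrm{in}},r_{\mathrm{out}}$ once the relevant angular widths are written down.

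Those angular widths are elementary. A disk of radius $r$ adjacent to the outer boundary has its center at distance $r_{\mathrm{out}}-r$ from $m$, so each of its tangents encloses the half-angle $\arcsin(\tfrac{r}{r_{\mathrm{out}}-r})$ with its center ray; a disk adjacent to the inner boundary gives $\arcsin(\tfrac{r}{r_{\mathrm{in}}+r})$ instead; and the angle between the center rays of two touching disks follows from the law of cosines in the triangle spanned by $m$ and the two disk centers, whose sides are the two center distances and $r_j+r_m$. Because Ring Packing places every disk at the smallest feasible polar angle, consecutive disks of a zipper touch, the path $P$ exists, and all rays bounding the sectors ($t_{\mathrm{lower}},t_{\mathrm{upper}},t_1,t_2,t_3$ and the center rays) are pinned down by these formulas; in particular no angular room is wasted between the rays that bound a sector and the disks sitting in it, which is what makes the density bounded below at all.

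Next I would trim the free parameters. Scaling removes one degree of freedom (normalize $r_{\mathrm{out}}=1$). Since a ring is only created after Boundary Packing has stopped, every disk packed into a ring has radius $<\tfrac14$; since $r_j$ fits inside $R$ we have $2r_j\le r_{\mathrm{out}}-r_{\mathrm{in}}$; the standing assumption on rings gives $r_{\mathrm{out}}\le 2r_{\mathrm{in}}$; and, unless $e_i$ is an end edge, consecutive disks of the zipper cannot pass each other, so $2r_j+2r_m>r_{\mathrm{out}}-r_{\mathrm{in}}$. Where a clean monotonicity holds --- widening the ring inflates the annular-sector areas and the angular extents faster than it shrinks them, lowering the density --- one may push $r_{\mathrm{in}}$ to the extreme admissible value $\tfrac12 r_{\mathrm{out}}$, leaving after scaling only the ratios $r_m/r_j$ (and $r_p/r_j$) as genuine variables; where no such clean monotonicity is available, one simply keeps $r_{\mathrm{in}}$ as one more bounded variable. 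Each of T1--T8 then reduces to minimizing an explicit expression built from $\arcsin$'s and square roots over a one- or two-dimensional box (resp.\ three-dimensional for the vertical types), to be compared with $\rho$.

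The final step is this bounded minimization, and it is where the interval-arithmetic component does its job: subdivide the parameter box, compute a rigorous lower bound for the density on each cell by interval arithmetic, and keep refining near cells whose bound is close to $\rho$ until every cell certifies a density exceeding $\rho$. The step I expect to be the main obstacle is the four end-edge types T3, T4, T7, T8: there the sector leaves $R$ and continues into the auxiliary ring $R_m$, so one must express the radii of $R_m$ through $r_m$ and the distance from $m$ to the center of $r_m$ and handle two annular sectors at once, and it is precisely among these cases that the infimum comes closest to $\rho$ --- consistent with $\rho$ having been defined as the density a ring needs in order to remain saturated in the worst case. A secondary point requiring care is making each monotonicity-in-width reduction rigorous --- verifying that changing $r_{\mathrm{in}}$ really moves every bounding ray and every sub-sector area in the asserted direction --- and confirming that the "good" diagonal middle edge T2 and vertical middle edge T6 do not quietly become the bottleneck once $r_m$ is much smaller than $r_j$.
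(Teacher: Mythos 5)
Your overall strategy is the same as the paper's: enumerate the eight edge types, express the density as an explicit function of the inner radius $r_{\mathrm{in}}$ (with $r_{\mathrm{out}}$ normalized to $1$) and the two or three disk radii using $\arcsin$ formulas for the bounding rays, constrain the parameter box by the same geometric inequalities you list, and then certify a lower bound by interval arithmetic with refinement. The paper does exactly this and it is the intended method.

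However, there is a genuine gap in how you close out the minimization. Both the potential of $e_i$ (a combination of $\pi r_j^2, \pi r_m^2, \pi r_p^2$) and the sector area scale like $w^2$ as the ring width $w = r_{\mathrm{out}} - r_{\mathrm{in}} \to 0$, so the density is a $0/0$ limit with a finite value. An interval-arithmetic subdivision of the box therefore needs cells of diameter $O(w)$ near the face $r_{\mathrm{in}} = r_{\mathrm{out}}$, and the refinement loop you describe never terminates in that corner; the automatic prover can only certify the bound up to some threshold, which in the paper is $r_{\mathrm{in}}/r_{\mathrm{out}} \le 0.99$. You acknowledge this face by suggesting a monotonicity-in-width reduction that would pin $r_{\mathrm{in}} = \tfrac12 r_{\mathrm{out}}$, but you also concede it may fail and then propose to "simply keep $r_{\mathrm{in}}$ as one more bounded variable," which is precisely the option that does not terminate. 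The paper does not prove such a monotonicity; instead it supplies a separate analytic argument for $r_{\mathrm{in}}/r_{\mathrm{out}} > 0.99$, rescaling any thin-ring configuration onto a reference ring $R[1,0.99]$ via a coordinate map (arc-length plus radial depth) and showing the sector volume increases by at most a factor $\tfrac{1}{1 - 1/200} < 1.00503$, so the certified density $0.5642$ degrades only to $0.5642/1.00503 > \rho$. This thin-ring tail argument is an essential missing piece in your proposal; without it (or a proven monotonicity replacing it) the verification is incomplete.
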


{The proof of Lemma~\ref{lemzipperoflengthatleastthree} is the only
computer-assisted proof. All remaining proofs are analytic.} Due to space
constraints, the proof of Lemma~\ref{lemzipperoflengthatleastthree} is given in
the appendix. Combining Lemmas~\ref{lem:zipperlengthone}
and~\ref{lemzipperoflengthatleastthree} yields the following.

\begin{corollary}\label{cor:densityzippersectors}
	Sectors of zippers have a density of at least $\rho$.
\end{corollary}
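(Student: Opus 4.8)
The plan is a case distinction on the length of the zipper, combined with the elementary fact that density is preserved under interior-disjoint unions of sectors. Recall first that $\rho = \frac{180^{\circ}}{360^{\circ} - 2\arcsin(1/3)} < 0.56065$. If a zipper $Z$ has length one, then Lemma~\ref{lem:zipperlengthone} already gives that its sector has density at least $0.77036 > \rho$, so this case is immediate.

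So assume $Z = \langle r_k,\dots,r_{\ell}\rangle$ has length at least two, let $P = e_k,\dots,e_{\ell-1}$ be a path in its adjacency graph connecting $r_k$ to $r_{\ell}$ (such a path exists by the discussion preceding Lemma~\ref{lemzipperoflengthatleastthree}), and for each edge $e_i \in P$ let $A_i$ be its sector and $P_i$ its potential. By definition, the sector of $Z$ is $S := \bigcup_i A_i$. I would first verify the bookkeeping behind the potential assignment across the eight types T1--T8: the sectors $A_k,\dots,A_{\ell-1}$ have pairwise disjoint interiors and together tile $S$; every disk of $Z$ that is an interior vertex of $P$ pays exactly half its volume to each of its two incident edges (as $r_m$ of one and $r_j$ of the next), every endpoint disk $r_k$ or $r_{\ell}$ pays its whole volume to its single incident edge, and every disk not on $P$ is the disk $r_p$ ``skipped over'' by a unique vertical edge and pays its whole volume to that edge. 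In particular, each disk pays at most its volume, the total potential in $S$ equals $\sum_i P_i$, and $\abs{S} = \sum_i \abs{A_i}$. Now Lemma~\ref{lemzipperoflengthatleastthree} gives $P_i \geq \rho\,\abs{A_i}$ for every $e_i \in P$, so summing yields
\[
  \rho(S) \;=\; \frac{\sum_i P_i}{\sum_i \abs{A_i}} \;\geq\; \frac{\rho \sum_i \abs{A_i}}{\sum_i \abs{A_i}} \;=\; \rho ,
\]
which proves the claim in this case as well; since every zipper falls into one of the two cases, the corollary follows.

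The deep content --- the per-edge bound $\rho(e_i) \geq \rho$ uniformly over all eight edge types --- is entirely inside Lemma~\ref{lemzipperoflengthatleastthree}, so the only thing genuinely requiring care in the corollary itself is the combinatorial bookkeeping of the middle paragraph: that T1--T8 really cut the zipper's sector into interior-disjoint pieces and that the prescribed payments are globally consistent (no disk pays more than its volume, and no part of the sector is left without potential). I expect this to be routine but slightly fiddly, in particular for the end-edge types T3, T4, T7, T8, where part of the edge sector lives in the smaller ring $R_m$ rather than in $R$ and one must check it does not overlap the sectors of earlier edges.
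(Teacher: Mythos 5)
Your proposal is correct and takes essentially the same route the paper does: the paper proves the corollary simply by "combining" Lemma~\ref{lem:zipperlengthone} (length-one zippers) with Lemma~\ref{lemzipperoflengthatleastthree} (per-edge bound for length at least two), exactly your two-case split plus the averaging argument over the interior-disjoint edge sectors. The bookkeeping you flag (each path-interior disk split half/half between its two incident edges, each endpoint and each skipped $r_p$ paying once in full, end-edge sectors in $R_m$ not overlapping earlier ones) is indeed the implicit content the paper glosses over, and your reading of it is the intended one.
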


Ring Packing stops when the sum of the diameters of the current disk $r_i$ and the disk packed last $r_{i-1}$ is smaller than the width $w$ of the current ring, i.e., if $2r_{i-1} + 2r_i < w$. If $2r_{i-1} + 2r_i < w$, Phase 5 partitions the current ring into two new open rings with widths $2r_i, w-2r_i$. Hence, the sectors of zippers packed by Ring Packing become firmly interlocked without leaving any gaps between two zippers, see Fig.~\ref{fig:transit}.
\begin{figure}[]
  \begin{center}
      \includegraphics[scale=1]{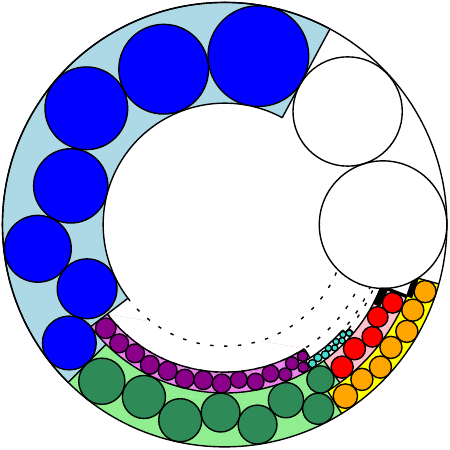} 
  \end{center}

  \caption{The sectors of rings packed by Ring Packing become firmly interlocked without leaving any gaps between two sectors. The minimal rings into which the orange and the red zippers are packed are full. The minimal ring into which the turquoise zipper is packed is open.}
  \label{fig:transit}
\end{figure}
The only sectors that we need to care about are the gaps that are left by Ring Packing due to the second break condition, i.e., the current disk does not fit into the current ring, see the black sectors in Fig.~\ref{fig:transit}.

\new{\begin{corollary}\label{cor:ringpartrho}
	Let $R$ be a minimal ring and $G$ its gap. $R \setminus G$ has a density of at least $\rho$.
\end{corollary}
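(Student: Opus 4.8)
The plan is to prove Corollary~\ref{cor:ringpartrho} by decomposing $R \setminus G$ into the sectors associated with the zippers that Ring Packing produced inside $R$, and then invoking the density bounds already established for those sectors. First I would recall the structure of a minimal ring $R$: because $R$ is minimal, it was not split further by Phase~4, so every disk placed in $R$ during the relevant application of Ring Packing forms a single maximal zipper $Z = \langle r_k,\dots,r_\ell\rangle$ (or possibly $R$ is the child of a split and contains exactly one zipper). By Corollary~\ref{cor:densityzippersectors}, the sector $S(Z)$ of that zipper has density at least $\rho$. The content I need to supply is that $S(Z)$ together with the gap $G$ exactly tiles $R$, i.e. $R \setminus G = S(Z)$ up to measure zero, so that the density of $R \setminus G$ equals the density of $S(Z)$, which is $\geq \rho$.

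The key steps, in order: (1) Describe how Ring Packing partitions the angular range of $R$ into the angular ranges of the sectors $A_i$ of the edges $e_i$ of the path $P$; from the definitions of the sectors in types T1--T8 one checks that consecutive sectors share exactly the boundary ray (a center ray or tangent) that was used to define them, so the sectors of $P$ have pairwise disjoint interiors and their union is $S(Z)$. (2) Argue that $S(Z)$ covers all of $R$ except the angular wedge beyond the last placed disk, which is precisely the gap $G$ created by the break condition "the current disk does not fit'' — this uses the discussion around Figure~\ref{fig:transit} that sectors become "firmly interlocked without leaving any gaps between two zippers.'' In a minimal ring there is only one zipper, so the only uncovered region is $G$ itself. (3) Conclude $|R \setminus G| = |S(Z)|$ and, since the potential assigned to $S(Z)$ is at least $\rho |S(Z)|$ while no potential is assigned to $G$, the density of $R \setminus G$ is at least $\rho$. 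The length-one case is handled separately via Lemma~\ref{lem:zipperlengthone}, whose bound $0.77036 > \rho$ is even stronger.

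The main obstacle I expect is step~(2): making precise that the sectors of $P$ leave no internal gaps and that the only leftover region is $G$. This requires carefully tracking, for each of the eight edge types, which bounding rays are shared between an edge and its predecessor/successor in $P$, and in particular handling the inner-boundary pieces (the parts of the smaller rings $R_m$ appearing in types T3, T4, T7, T8) so that, when a disk touches the inner boundary, the "notch'' it would otherwise leave on the inner side is absorbed into an adjacent sector rather than counted as part of $G$. One also has to confirm that a disk that got declared as not fitting does so adjacent to the \emph{outer} boundary at the top of the angular range (by the way Ring Packing alternates and by the break conditions), so that $G$ is a single connected wedge at the top and not several scattered pieces. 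Once this bookkeeping is done, the density statement is immediate from Corollary~\ref{cor:densityzippersectors}.
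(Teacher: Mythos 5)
Your proposed decomposition $R \setminus G = S(Z)$ is not correct in general, and that is where the argument breaks down. Before Ring Packing ever creates a ring inside the current container $\mathcal{C}$, Boundary Packing (Phase~2) has already placed disks adjacent to the boundary of $\mathcal{C}$ (any disk of radius at least the threshold $\mathcal{T}$); the cones induced by these disks are angular wedges of $\mathcal{C}$ and therefore cut through every ring created inside $\mathcal{C}$, since all those rings are concentric with $\mathcal{C}$. Consequently a minimal ring $R$ generally consists of \emph{three} pieces, not two: the part of $R$ covered by cones of Boundary-Packed disks, the zipper sector(s), and the gap $G$. This is exactly what Observation~\ref{obs:minimalringcoveredbysectorsandgap} records, and the gap $G$ of Definition~\ref{def:gap} is only the third piece --- the wedge between the end of the zipper and the lid, which terminates where it meets the first already-packed disk (typically the first Boundary-Packed one). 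Your step~(2), which reads ``the only uncovered region is $G$ itself,'' silently assumes the cone piece is empty; that fails as soon as Boundary Packing places at least one disk, which is precisely the situation in which the corollary is invoked in the paper (Lemma~\ref{lem:r1middlefinal}, where $r_1 \geq 0.495$).

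The conclusion of the corollary is nevertheless correct, but you need one further ingredient, namely Lemma~\ref{lem:densityCones}. Because atomic potential is paid proportionally to sub-sectors, the restriction of a cone to $R$ inherits the density of the whole cone, and once Phase~1 has stopped recursing the Boundary-Packed disks relevant here have radius in $[\frac{1}{4},0.495]$, so that density is at least $\rho$. Combining this with Corollary~\ref{cor:densityzippersectors} for the zipper sectors then yields density at least $\rho$ for all of $R\setminus G$. As written, your proof handles only the zipper-sector portion and must be supplemented with the cone case.
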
}

	In order to analyze the gaps left by Ring Packing, we first need to observe
for which rings we need to consider gaps. In particular, we have two break
conditions for Ring Packing:

	(1)~The current disk $r_i$ does not fit into the current ring $R$, causing us to close the ring and disregard it for the remainder of the algorithm?
	
	(2)~The current and the last disk $r_{i-1}$
packed into $R$ can pass one another, resulting in $R$ to be partitioned into several rings with
smaller widths. Thus, we obtain that two computed rings $R_1,R_2$ either do not overlap or $R_1$ lies inside $R_2$.
This motivates the following definition.

\begin{definition}\label{def:ringstructure}
Consider the set of all rings $R_1,\dots,R_k$ computed by our algorithm. A ring $R_i$ is \emph{maximal} if there is no ring $R_j$ with $R_i \subset R_j$. A ring $R_i$ is \emph{minimal} if there is no ring $R_j$ with $R_i \supset R_j$.
\end{definition}

\new{By construction of the algorithm,} each ring is partitioned into minimal rings. Thus, we define gaps only for minimal rings, see Figure~\ref{fig:gap} and Definition~\ref{def:gap}.

\begin{figure}[t]
  \begin{center}
      \includegraphics[scale=1]{} 
  \end{center}
  \vspace*{-12pt}
  \caption{The lid, the gap (striped white-gray), and a unit sector of a ring $R$.}
  \label{fig:gap}
\end{figure} 

\begin{definition}\label{def:gap}
	Let $Z = \langle \dots, r_{\ell-1}, r_{\ell} \rangle$ be a zipper of length at least $2$ inserted into a minimal ring $R$. The \emph{lid} \new{$h$} of $R$ is the ray above the upper tangent $u$ of $r_{\ell}$ such that $h$ realizes a maximal polar angle while $h \cap R$ does not intersect an already packed disk $r_f$ with $f \leq \ell-1$, see Fig.~\ref{fig:gap}. The \emph{gap} of $R$ is the part of $R$ between the upper tangent $u$ of $r_{\ell-1}$ and the lid of $R$ which is not covered by sectors of $Z$, see the white-gray striped sectors in Fig.~\ref{fig:gap}.
	
	A \emph{unit sector} of $R$ is a sector of $R$ that lies between the two tangents of a disk touching the inner and the outer boundary of $R$, see Fig.~\ref{fig:gap}. The \emph{unit volume} $U_R$ of $R$ is the volume of a unit sector of $R$.
\end{definition}

The lid of a gap lies either inside a cone induced by a disk packed by \new{Boundary} Packing, see Fig.~\ref{fig:gap}~(Left), or inside the sector of a zipper packed by Ring Packing, see Fig.~\ref{fig:gap}~(Right). This leads to the following observation

\begin{observation}\label{obs:minimalringcoveredbysectorsandgap}
	Each minimal ring $R$ is covered by the union of cones induced by disks packed by \new{Boundary} Packing into $R$, sectors of zippers packed by Ring Packing into $R$, and the gap of $R$.
\end{observation}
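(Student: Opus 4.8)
The plan is to argue by a direct angular sweep around the center $m$ of the ring $R$, showing that every ray emanating from $m$ that meets $R$ passes through one of the three claimed kinds of region (a \textsc{Boundary}-Packing cone, a zipper sector, or the gap). First I would recall how $R$ comes to be populated: $R$ is a minimal ring, so by the ring-management in Phases 3--5 it is never subdivided further, and it is filled by a single application of Ring Packing, producing one zipper $Z = \langle r_k,\dots,r_\ell\rangle$ (together with, possibly, disks placed by \textsc{Boundary} Packing before the ring structure was entered — but for a minimal ring the only disks touching it are those of $Z$ plus the disks whose cones reach into $R$ from the current container $\mathcal{C}$). The sectors of the edges of the path $P$ in the adjacency graph, as defined in types T1--T8, were constructed precisely so that consecutive sectors share their bounding rays (center rays or tangents), and Corollary~\ref{cor:ringpartrho} together with the discussion around Fig.~\ref{fig:transit} already records that these sectors interlock without gaps. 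So the union of the zipper sectors covers the angular interval from the lower tangent of $r_k$ up to the upper tangent of $r_{\ell-1}$.

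Next I would handle the two ends of this interval. Below the lower tangent of $r_k$: by the assumption (made just before the relevant subsection) that $r_2 < 0.495\,\mathcal{C}$ has already been ensured, and by the way Phase 3 chooses $r_{\text{out}} = r_{\min}$ and $r_{\text{in}} = r_{\min} - 2r_i$ and the way \textsc{Boundary} Packing precedes Ring Packing in the phase order, the portion of $\mathcal{C}$ lying at polar angles below $r_k$ is covered either by cones of disks placed by \textsc{Boundary} Packing into $\mathcal{C}$ (Observation~\ref{obs:assumeallringsfull} and Lemma~\ref{lem:largestdisknotadjacenttoboundary} guarantee those disks were actually placed) or by zipper sectors of an earlier zipper in the same ring — in either case it is one of the three permitted region types, and for a \emph{minimal} ring the simplest case is that $r_k$ is the first disk and its lower tangent is at polar angle zero, as assumed in Section~\ref{sec:ringpacking_ana}. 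Above the upper tangent of $r_{\ell-1}$: this is exactly the interval $[u(r_{\ell-1}), h]$ by Definition~\ref{def:gap}, which is by definition the gap of $R$ together with whatever zipper sectors of $Z$ (those of $r_\ell$ itself, via the end-edge types T3, T4, T7, T8) stick into it; the lid $h$ is chosen maximal subject to not hitting an earlier disk, and the preceding paragraph's observation about the lid lying inside a \textsc{Boundary}-Packing cone or a zipper sector shows the remaining angular interval past $h$ is again of a permitted type. Thus every polar angle in $[0, 2\pi)$ for which the ray meets $R$ lies in a cone, a zipper sector, or the gap.

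Finally I would note that covering the angular range is enough to cover $R$ itself, since $R$ is radially convex about $m$ (it is an annulus) and each of the three region types is, along any fixed ray through $m$, a full radial segment of $R$ from inner to outer boundary — cones of disks adjacent to $\mathcal{C}$ span the whole width by Lemma~\ref{lem:densityCones}'s adjacency hypothesis, zipper sectors span the full width of $R$ by the T1--T8 constructions (including the $R_m$-enclosing pieces for the smaller disks), and the gap is defined as a subset of $R$ between two rays and is filled out to both boundaries. Hence the union of the three families covers $R$. The main obstacle I anticipate is not any single estimate but the bookkeeping at the two ends of the zipper: one must be careful that the ray defining the lid, and the ray defining the lower end of the first zipper sector, really do abut cones or sectors already accounted for, rather than leaving a sliver — this is where the precise choice of $h$ in Definition~\ref{def:gap} and the interlocking property from Fig.~\ref{fig:transit} have to be invoked exactly, and where a picture (Fig.~\ref{fig:gap}) does most of the persuading.
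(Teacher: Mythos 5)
Your proposal is correct and follows essentially the same line as the paper, which leaves this observation without a formal proof and instead relies on exactly the two ingredients you assemble: the earlier remark that the zipper sectors interlock without leaving gaps between them, and the sentence immediately preceding the observation stating that the lid of the gap must abut either a cone from Boundary Packing or another zipper sector. Your additional justification that angular coverage implies full coverage (each region type spanning the full width of the annulus along any ray) is a sound fleshing-out of what the paper treats as visually evident via Fig.~\ref{fig:gap} and Fig.~\ref{fig:transit}.
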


Next, we upper bound the volume of the gap of minimal rings.

\begin{lemma}\label{lem:upperboundlock}
	The gap of a minimal ring $R$ has a volume of at most $1.07024 U_R$.
\end{lemma}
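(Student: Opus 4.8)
The plan is to reduce the lemma to an explicit one‑variable optimization over the \emph{closing configuration} of the zipper in the minimal ring~$R$.

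First, set up notation and record the elementary quantities. Write $R = R[r_{\text{out}},r_{\text{in}}]$ with width $w := r_{\text{out}}-r_{\text{in}}$; by the definition of a ring we have $r_{\text{out}}\le 2r_{\text{in}}$, i.e. $w\le r_{\text{in}}$. A disk touching both boundaries has diameter $w$ and its center lies at distance $\tfrac12(r_{\text{in}}+r_{\text{out}})$ from $m$, so a unit sector subtends the angle $2\theta_u$ with $\sin\theta_u=\tfrac{w}{r_{\text{in}}+r_{\text{out}}}$, and $U_R=\theta_u\,(r_{\text{out}}^2-r_{\text{in}}^2)$. Since $R$ is minimal it is full (Observation~\ref{obs:assumeallringsfull} and the surrounding discussion), so Ring Packing stopped on $R$ because some disk did not fit; its zipper $Z=\langle r_k,\dots,r_\ell\rangle$ must wrap all the way around $R$, for otherwise the next disk — of radius $\le r_\ell\le w/2$, hence fitting inside the annulus geometrically — would be placed where the annulus is still empty. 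Without loss of generality $r_\ell$ is adjacent to the outer boundary, $r_{\ell-1}$ to the inner boundary, and the disk $r_{\ell+1}$ that did not fit would go adjacent to the inner boundary; the case ``$r_\ell$ adjacent to the inner boundary'' has the same structure and is obtained from the same computation with the roles of $r_{\text{in}},r_{\text{out}}$ swapped where appropriate, and one checks which of the two cases is binding.

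Second, describe the gap precisely. Let $u'$ be the upper tangent of $r_{\ell-1}$, $u$ the upper tangent of $r_\ell$, and $h$ the lid; wrapping around, $h$ is tangent from below to the first disk $r_f$ ($f\le\ell-1$) met when the polar angle is increased past~$u$. By Definition~\ref{def:gap}, $G$ is the part of $R$ between $u'$ and $h$ with the sectors of $Z$ removed, and among those only the end‑edge sector of the path $P$ overlaps this wedge. For a diagonal end edge (type~T3) this sector covers, inside the angular range $[\alpha_{u'},\alpha_u]$, exactly the thin outer sub‑ring $R_m:=R[r_{\text{out}},r_{\text{out}}-2r_\ell]$ (which contains $r_\ell$), and it covers nothing in the range $[\alpha_u,\alpha_h]$; the vertical case (type~T7) is analogous. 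Hence
\[
 |G| \;=\; \tfrac12(\alpha_u-\alpha_{u'})\bigl((r_{\text{out}}-2r_\ell)^2-r_{\text{in}}^2\bigr)\;+\;\tfrac12(\alpha_h-\alpha_u)\bigl(r_{\text{out}}^2-r_{\text{in}}^2\bigr),
\]
where $\alpha_{u'}<\alpha_u<\alpha_h$ are the polar angles of $u'$, $u$, $h$, so it remains to bound the two angular differences.

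Third, bound the angles. The difference $\alpha_u-\alpha_{u'}$ is determined by the configuration of the two mutually tangent disks $r_{\ell-1}$ (adjacent to the inner boundary) and $r_\ell$ (adjacent to the outer boundary), so it is an explicit function of $r_\ell,r_{\ell-1}$ and $w/r_{\text{in}}$; and since $r_{\ell+1}$, a disk of radius $\le r_\ell\le w/2$, does not fit adjacent to the inner boundary between $r_\ell$ and $r_f$, the difference $\alpha_h-\alpha_u$ is at most the angular footprint $2\arcsin\tfrac{r_{\ell+1}}{r_{\text{in}}+r_{\ell+1}}$ of such a disk (the additional non‑overlap constraint with $r_\ell$ only decreases it). Both bounds are monotone increasing in the involved radii and in the shape ratio $w/r_{\text{in}}$; plugging them in and dividing by $U_R$, a short monotonicity argument forces the worst case to $r_{\ell+1}=r_\ell=r_{\ell-1}$ and $w=r_{\text{in}}$ (so $r_{\text{out}}=2r_{\text{in}}$, $\theta_u=\arcsin\tfrac13$). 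By homogeneity we may set $r_{\text{in}}=1$, leaving a single free parameter $r_\ell\in(0,\tfrac12]$; the ratio $|G|/U_R$ becomes an explicit function (a combination of $\arcsin$‑terms and polynomials), and differentiating it shows its maximum equals $1.07024$, attained at the configuration of Fig.~\ref{fig:gap}.

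The main obstacle is the third step: one must make the estimate on $\alpha_h-\alpha_u$ \emph{both} rigorous and tight — the ``does not fit'' condition is a distance condition between $m_{\ell+1}$ and both $m_\ell$ and $m_f$, not a pure angle condition, and a priori the lid could be blocked by a disk other than the wrap‑around disk $r_k$. I expect this to require treating the diagonal (T3) and vertical (T7) end‑edge cases separately, each with a brief monotonicity argument that discards the non‑extremal configurations before the final one‑variable maximization, which is then routine calculus. All remaining inequalities are analytic, so no interval‑arithmetic input is needed here.
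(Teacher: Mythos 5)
Your plan is in the same spirit as the paper's proof (both reduce to a one-variable $\arcsin$/polynomial optimization over an extremal configuration with $r_{\text{out}} = 2 r_{\text{in}}$, and both identify the configuration of Fig.~\ref{fig:gap} as the maximizer), but your decomposition of the gap and your reduction to one variable differ from the paper's, and one of your structural assumptions is wrong.

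The problematic step is the claim that ``the zipper must wrap all the way around $R$'', from which you conclude that the lid $h$ is tangent to some disk $r_f$ of the same zipper, and that $r_{\ell+1}$ would have to fit in the angular wedge between $r_\ell$ and $r_f$ adjacent to the inner boundary. That is not how the paper sets things up: immediately before Observation~\ref{obs:minimalringcoveredbysectorsandgap} the paper states that the lid can lie either inside a cone of a disk packed by Boundary Packing or inside the sector of a different zipper -- so the obstruction behind $h$ need not belong to $Z$ and need not be tangent to the inner boundary at all. Your bound $\alpha_h-\alpha_u \le 2\arcsin\tfrac{r_{\ell+1}}{r_{\text{in}}+r_{\ell+1}}$ therefore does not follow from ``$r_{\ell+1}$ does not fit'' in general, and the monotonicity argument you invoke to push $r_{\ell+1}$ up to $r_\ell$ is built on top of this unjustified reduction.

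The paper avoids this entirely with assumption~(A1): it introduces a \emph{virtual} disk $\lambda$ that touches the lid from below, the upper tangent of $r_\ell$ from above, and the inner boundary, and argues one may assume $\lambda = r_\ell$. This replaces your two quantities $\alpha_h-\alpha_u$ (governed by whatever is behind the lid) and $r_{\ell+1}$ with the single parameter $\lambda$, makes the whole estimate independent of what actually blocks the lid, and together with (A2)--(A4) (push $r_\ell$ to the inner boundary, bound the pocket $A$ by the lower tangent of $r_\ell$, and normalize $r_{\text{out}}=1,\,r_{\text{in}}=\tfrac12$) gives directly the closed-form $V_{AD}(\lambda)$ whose maximum $\approx 0.01756 = 0.07024 U_R$ yields the constant $1.07024$. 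If you insert a step of this kind (a virtual disk under the lid in place of $r_f$ and $r_{\ell+1}$) your plan becomes essentially the paper's proof; without it, the bound on $\alpha_h-\alpha_u$ is the genuine gap, and it is the one place where ``routine calculus'' does not apply because the constraint is not a pure-angle constraint.
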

\begin{proof}
	As we want to upper bound the volume of the gap w.r.t. the unit volume $U_R$ of $R$, w.l.o.g. we make the following assumptions (A1)-(A4), see Fig.~\ref{fig:upperbounding}:
	
	\begin{figure}[h!]
  \begin{center}
      \includegraphics[scale=1]{} 
  \end{center}
  \vspace*{-12pt}
  \caption{Simplifying assumptions that do not increase the density.}
  \label{fig:upperbounding}
\end{figure}
	
	\begin{itemize}
		\item (A1) The largest disk $\lambda$ inside $R$ touching $h$ from below, the upper tangent of $r_{\ell}$ from above, and the inner boundary of $R$, such that $\lambda$ does not overlap with any other disks from below, has the same radius as $r_{\ell}$, see Fig.~\ref{fig:upperbounding}(a).
		\item (A2) The last disk $r_{\ell}$ packed into $R$ touches the inner boundary of $R$, see Fig.~\ref{fig:upperbounding}(b).
		\item (A3) The empty pocket $A$ left by the sector of the end edge of the zipper inside $R$ is bounded from below by the lower tangent of $r_{\ell}$ but not by the upper tangent of $r_{\ell-1}$, see Fig.~\ref{fig:upperbounding}(c).
		\item (A4) $r_{\text{out}}= 1, r_{\text{in}} = \frac{1}{2}$, see Fig.~\ref{fig:upperbounding}(d).
	\end{itemize}

	Let $B$ be the sector of $R$ that lies between the two tangents of \new{$\lambda$}, see Fig.~\ref{fig:upperbounding}\new{(d)}. We upper bound the volume of the gap of $R$ as \new{$\abs{A} + \abs{B} \leq 1.07024 U_R$}, as follows.
	
	Let $\mu \subset R$ be the disk touching the inner and the outer boundary of $R_1$ and the upper tangent of $r_{\ell}$ from above, see Fig.~\ref{fig:upperbounding}(e). Furthermore, let $D$ be the part of the cone induced by $\mu$ \new{which} lies inside $R$ and between the upper and lower tangent of $\mu$, see Fig.~\ref{fig:upperbounding}(e). 
	
	In the following, we show that $\abs{A} - \abs{D} \leq 0.07024 U_R$.
	
	\begin{eqnarray*}
		 \abs{A} - \abs{D}& \leq &\phantom{-} \frac{2 \arcsin \left(\frac{\lambda}{\frac{1}{2} + \lambda} \right)}{2\pi} \pi \left( 1 - \left( \frac{1}{2} + 2 \lambda \right)^2 \right) \\
		&&- \frac{2\arcsin \left( \frac{1}{3} \right) - 2\arcsin \left( \frac{\lambda}{\frac{1}{2} + \lambda} \right)}{2 \pi} \pi \left( \frac{3}{4} \right)\\
		&=&  \phantom{-}\arcsin \left(\frac{\lambda}{\frac{1}{2} + \lambda} \right) \left( \frac{7}{4} - \left( \frac{1}{2} + 2 \lambda \right)^2 \right) \\
		&&- \frac{3}{4}\arcsin \left( \frac{1}{3} \right) =: V_{AD}.
	\end{eqnarray*}
	The first derivative of $V_{AD}$ is
	\begin{eqnarray*}
		\frac{d \; V_{AD} \; \lambda}{d \; \lambda} &=& \phantom{-} \frac{\left( \frac{1}{\frac{1}{2}+ \lambda} - \frac{\lambda}{\left(\frac{1}{2} + \lambda\right)^2} \right) \left( \frac{7}{4} - \left( 2 \lambda + \frac{1}{2} \right)^2 \right)}{\sqrt{1 - \frac{\lambda^2}{\left(\frac{1}{2} + \lambda\right)^2}}}\\
		&&- 4 \arcsin \left( \frac{\lambda}{\frac{1}{2} + \lambda} \right) \left( 2\lambda + \frac{1}{2} \right).
	\end{eqnarray*}
	
	Solving $\frac{d \; V_{AD} \; \lambda}{d \; \lambda} = 0$ yields $\lambda \approx 0.196638$. Finally, we observe that \new{$V_{AD}\left(\frac{1}{8} \right) \approx -0.01576$}, \new{$V_{AD}\left( 0.196638 \right) \approx 0.01756$}, \new{$V_{AD}\left(\frac{1}{4} \right) = 0$}. This implies that $\abs{A} - \abs{D} \leq 0.01756 \leq 0.07024 U_R$, because $U_R \geq \frac{1}{4}$. 
\end{proof}
		
\subsection{Analysis of the Algorithm for the Case $r_1\leq 0.495$}\label{sec:ronenothuge}

We show that each computed minimal ring is saturated, see Corollary~\ref{cor:eachminimalringsaturated}. Let $R_1,\dots,R_h \subseteq \mathcal{C}$ be the created minimal rings ordered decreasingly w.r.t.\ their outer radii. The inner boundary of $R_i$ is the outer boundary of $R_{i+1}$ for $i = 1,\dots,h-1$.

	We show by induction over $h$ that $R:=R[r_{\text{out}}, r_{\text{in}}]:=R_h$ is saturated. Thus, we assume that $R_1,\dots,R_{h-1}$ are saturated, implying that $\mathcal{C} \setminus r_{\text{out}}$ is saturated, where $r_{\text{out}}$ is the outer radius of $R_h$.

		For the remainder of Section~\ref{sec:ronenothuge}, each disk $r_i$ packed by \new{Boundary} Packing pays its entire volume to the cone induced by $r_i$.


	
	\begin{lemma}\label{lem:atleastonesmall}
		Assume $r_n < \frac{1}{4}$. There is at least one disk $r_k$ packed into $R$ and touching both the inner and the outer boundary of $R$.
	\end{lemma}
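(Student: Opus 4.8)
\textbf{Proof plan for Lemma~\ref{lem:atleastonesmall}.}
The plan is to argue by contradiction: suppose no disk packed into $R = R[r_{\text{out}}, r_{\text{in}}]$ touches both boundary components of $R$. Recall how $R$ was created in Phase~3: the first ring created inside $\mathcal{C}$ has the form $R[r_{\min}, r_{\min} - 2r_i]$, where $r_i$ is the largest disk not yet packed at that moment; and every subsequently created ring arises in Phase~4 by splitting a closed ring $R[r_{\text{out}}, r_{\text{in}}]$ into $R[r_{\text{out}}, r_{\text{out}} - 2r_i]$ and $R[r_{\text{out}} - 2r_i, r_{\text{in}}]$, again with $r_i$ the largest unpacked disk. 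In either case the width of the newly created ring is exactly $2r_i$ for the then-largest unpacked disk $r_i$. Since disks are packed in decreasing order and $r_i$ is available to be packed into $R$ when Ring Packing is invoked on $R$ in Phase~3, the first disk $r_k$ that Ring Packing places into $R$ satisfies $2r_k \geq w := r_{\text{out}} - r_{\text{in}}$ — indeed $r_k$ is the largest unpacked disk, which is at least as large as the disk whose diameter defined the width $w$. A disk of diameter at least $w$ placed adjacent to the outer boundary of $R$ necessarily reaches the inner boundary as well, so $r_k$ touches both boundaries, giving the claim with this $r_k$.

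First I would make precise the invariant that the width of any created ring equals $2r_i$ for some disk $r_i$ that has not been packed at the time of creation, and that this $r_i$ is the largest such disk. This follows directly by induction on the order in which rings are created, reading off Phases~3 and~4 of the algorithm description. Second, I would observe that between the creation of $R$ and the moment Ring Packing processes $R$ in Phase~3, only disks larger than the defining disk $r_i$ can have been packed (by the decreasing-order discipline and the fact that Phase~5 returns to Phase~3 only while open rings remain), so the largest currently-unpacked disk $r_k$ satisfies $r_k \geq r_i$, hence $2r_k \geq w$. Third, I would note that $r_k$ is indeed packed into $R$: since $2r_k \geq w$, the disk $r_k$ cannot pass any later disk inside $R$, so Ring Packing does not immediately declare $R$ \textsc{closed} before placing anything, and by Observation~\ref{obs:assumeallringsfull} (all rings are full or closed) at least one disk gets inserted. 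Finally, a disk of diameter $\geq w$ placed adjacent to one boundary of a ring of width $w$ touches the opposite boundary as well, completing the argument.

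The hard part will be handling the edge cases in the bookkeeping rather than any geometry: specifically, confirming that when Ring Packing starts on $R$ the largest unpacked disk is still at least as large as the disk that defined $w$, and that this disk actually ends up inside $R$ rather than being diverted elsewhere. This requires carefully tracking that once $R$ becomes the open ring with the largest inner radius (the one selected in Phase~3), no competing open ring can "steal" $r_k$, and that the hypothesis $r_n < \tfrac14$ guarantees $R$ is entered by Ring Packing at all (as opposed to all remaining disks being packed by \new{Boundary} Packing in Phase~2, which is exactly what Corollary~\ref{cor:alldisksfitiflarge} rules out under $r_n < \tfrac14$). Once the scheduling is pinned down, the concluding geometric step — diameter at least the width forces contact with both boundaries — is immediate.
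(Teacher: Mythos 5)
Your approach is genuinely different from the paper's, and unfortunately it contains a concrete error in the key bookkeeping step. You assert that ``in either case the width of the newly created ring is exactly $2r_i$ for the then-largest unpacked disk $r_i$.'' This is true for the ring created in Phase~3 and for the \emph{outer} sub-ring $R[r_{\text{out}}, r_{\text{out}} - 2r_i]$ produced in Phase~4, but it is false for the \emph{inner} sub-ring $R[r_{\text{out}} - 2r_i, r_{\text{in}}]$: its width is $r_{\text{out}} - r_{\text{in}} - 2r_i$, which is at least $2r_{i+1}$ by the split condition $2r_i + 2r_{i+1} \leq r_{\text{out}} - r_{\text{in}}$, and is not the diameter of any designated disk. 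Moreover, by the time Ring Packing reaches this inner sub-ring (the outer sub-ring, which has the larger inner radius, is handled first and consumes $r_i$ and possibly $r_{i+1}, r_{i+2}, \dots$), the largest unpacked disk $r_k$ satisfies $r_k \leq r_{i+1}$, so $2r_k \leq 2r_{i+1} \leq w$. The inequality you need, $2r_k \geq w$, is therefore pointing the wrong way precisely in the case you most need it, and the conclusion that the first disk placed in $R$ spans both boundary circles does not follow from the width accounting alone.

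The paper proceeds quite differently: it is a density-contradiction argument, not a bookkeeping argument. It supposes toward a contradiction that the algorithm packed no disk of radius less than $\tfrac{1}{4}$ adjacent to $\mathcal{C}$ and lets $r_k$ be the largest disk that was not packed adjacent to $\mathcal{C}$; Lemma~\ref{lem:largestdisknotadjacenttoboundary} forces $r_k < \tfrac{1}{4}$, so the uncovered angular wedge left by the cones of $r_1,\dots,r_{k-1}$ has measure at most $2\arcsin\bigl(\tfrac{1}{3}\bigr)$. Since each of those cones has density at least $\rho$ by Lemma~\ref{lem:densityCones} (using $r_1,\dots,r_n \leq 0.495$), the combined volume of $r_1,\dots,r_{k-1}$ already exceeds $\rho(\pi - \arcsin(1/3)) > \pi/2$, contradicting the total-area bound. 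The conclusion about a disk spanning $R$ is then read off from the structure of the zipper that Ring Packing produces. If you want to salvage your scheduling-based route, you would at a minimum need to handle the inner sub-ring case separately, for instance by arguing that such a sub-ring is either further split (so its minimal descendants are outer sub-rings and your width claim applies) or that the previously packed disks already supply the required spanning disk; as written, the proposal does not close this gap.
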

	\begin{proof}
		Assume that our algorithm did not pack a disk with radius smaller than $\frac{1}{4}$ adjacent to~$\mathcal{C}$. Let $r_k$ be the largest disk not packed adjacent to~$\mathcal{C}$ \new{into $R$}. 
		
		By Lemma~\ref{lem:largestdisknotadjacenttoboundary}, we obtain that $r_k$ is smaller than $\frac{1}{4}$. This implies that the volume of the sector that is not covered by the cones induced by $r_1,\dots,r_{k-1}$ is upper bounded by $\arcsin \left( \frac{1}{3} \right)$, see Fig.~\ref{fig:atleastonesmalldiskadjacentoboundary}.
		
\begin{figure}[h!]
  \begin{center}
      \includegraphics[scale=1]{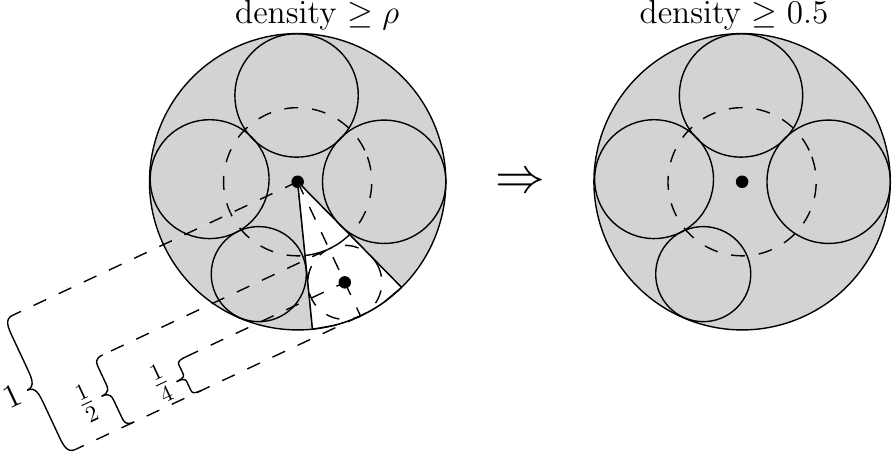} 
  \end{center}

  \caption{Ensuring density of at least $\rho$ for all cones induced by disks packed by \new{Boundary} Packing implies a density of at least $0.5$ for the entire container disk.}
  \label{fig:atleastonesmalldiskadjacentoboundary}
\end{figure}

	Each disk $r_i$ from $r_1,\dots,r_{k-1}$ pays its entire volume to the cone induced by~$r_i$. Lemma~\ref{lem:densityCones} implies that each cone has a density of at least $\rho$, because $r_1,\dots,r_n \leq 0.495$. This implies that the total volume of $r_1,\dots,r_{k-1}$ is at least $\pi \cdot \rho \cdot \frac{2\pi - 2\arcsin \left( 1/3 \right)}{2\pi} = \rho (\pi - \arcsin \left( 1/3 \right)) > \frac{\pi}{2}$ \new{contradicting the assumption that the total input volume is no larger than $\frac{\pi}{2}$}.
	\end{proof}
	
\begin{lemma}\label{lem:r1smalldenserings}
	$R_h$ is saturated.
\end{lemma}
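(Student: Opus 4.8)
The plan is to carry out the induction over $h$ set up just above, so assume $R_1,\dots,R_{h-1}$ are already saturated; equivalently, the part $\mathcal{C}\setminus r_{\text{out}}$ of the container lying outside $R:=R_h$ has density at least $\tfrac12$. By Observation~\ref{obs:minimalringcoveredbysectorsandgap} the ring $R$ is covered by the cones induced by the disks \new{Boundary} Packing placed into $R$, the sectors of the zippers Ring Packing placed into $R$, and the gap $G$ of $R$. Since we are in the case $r_1\le 0.495$ and \new{Boundary} Packing only places disks of radius at least $\mathcal{T}=\tfrac14$, Lemma~\ref{lem:densityCones} makes every such cone have density at least $\rho$, and Corollaries~\ref{cor:densityzippersectors} and~\ref{cor:ringpartrho} make every such zipper sector have density at least $\rho$; as the gap itself receives no potential, summing these potentials gives $\operatorname{pot}(R)\ge\rho\,(\abs{R}-\abs{G})$. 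It therefore suffices to show $\abs{G}\le\bigl(1-\tfrac1{2\rho}\bigr)\abs{R}$, since then $\operatorname{pot}(R)\ge\tfrac12\abs{R}$ and $R$ is saturated.

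To estimate $\abs{G}$, recall that by Lemma~\ref{lem:atleastonesmall} --- which applies because $r_n<\tfrac14$ --- some packed disk $r_k$ touches both boundaries of $R$, so $R$ has a genuine unit sector and the unit volume $U_R$ is defined. Lemma~\ref{lem:upperboundlock} then gives $\abs{G}\le1.07024\,U_R$. Moreover, the full-width disk $r_k$ has radius $\tfrac12(r_{\text{out}}-r_{\text{in}})$ and centre at distance $\tfrac12(r_{\text{out}}+r_{\text{in}})$ from $m$, and since $R$ is a ring we have $r_{\text{out}}\le2r_{\text{in}}$, whence $\tfrac{r_k}{r_{\text{in}}+r_k}\le\tfrac13$; thus the unit sector spans an angle at most $2\arcsin\tfrac13$ and $U_R\le\tfrac{\arcsin(1/3)}{\pi}\abs{R}$. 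This bound is exactly what the threshold is calibrated against: $\rho=\frac{180^{\circ}}{360^{\circ}-2\arcsin(1/3)}$ is precisely the value for which $\rho\bigl(1-\tfrac{\arcsin(1/3)}{\pi}\bigr)=\tfrac12$, so if the gap were no larger than one unit sector we would be done.

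The delicate point --- and the step I expect to be the main obstacle --- is the factor $1.07024>1$ in Lemma~\ref{lem:upperboundlock}: substituting the bounds above directly yields only $\operatorname{pot}(R)\ge\rho\bigl(1-1.07024\cdot\tfrac{\arcsin(1/3)}{\pi}\bigr)\abs{R}\approx0.4957\abs{R}$, about one percent short of $\tfrac12\abs{R}$, so the proof must reclaim this slack from the earlier lemmas. There are two essentially equivalent ways to do so. One can observe that the cone of the full-width disk $r_k$ from Lemma~\ref{lem:atleastonesmall} is disjoint from the gap and, by the computation underlying Lemma~\ref{lem:zipperlengthone}, has density at least $\tfrac{\pi}{12\arcsin(1/3)}\approx0.770$; accounting for that $U_R$-sized portion at this higher density rather than at $\rho$ more than offsets the $0.07024\,U_R$ overshoot. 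Alternatively one goes into the proof of Lemma~\ref{lem:upperboundlock} and uses that the excess term $\abs{A}-\abs{D}\le0.07024\,U_R$ sits in a pocket $A$ pressed against the inner boundary of $R$, which occupies far less area than a full-radial-width sector of the same angular width; hence the gap's \emph{angular} extent is still at most $2\arcsin\tfrac13$ and the calibration of $\rho$ applies verbatim. Either way $\operatorname{pot}(R)\ge\tfrac12\abs{R}$, so $R=R_h$ is saturated; this closes the induction over $h$, and combined with the hypothesis that $\mathcal{C}\setminus r_{\text{out}}$ is saturated it shows that the whole nested ring structure inside $\mathcal{C}$ is saturated, as needed for Theorem~\ref{thm:disk_packing}.
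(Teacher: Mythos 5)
Your proposal is correct and takes essentially the same approach as the paper: the paper also covers $R_h$ by the covered part (cones and zipper sectors, density $\geq \rho$), the unit sector of a full-width disk $r_k$ (density $\geq 0.77036$), and the gap (volume $\leq 1.07024\,U_{R_h}$), then observes that the excess potential of the $0.77036$-density sector over $\rho$, together with the excess of the remaining covered part over $\tfrac12$, suffices to saturate the gap --- exactly your route (a), phrased in the paper via explicitly moving potentials $\delta_1,\delta_2$ into a pool $\Delta$ and then onto the gap. Your alternative route (b) is not what the paper does and, as stated, is shakier (the gap's angular extent in Lemma~\ref{lem:upperboundlock} is in fact larger than $2\arcsin\tfrac13$, since the pocket $A$ sits below the full-width sector $B$ rather than inside it); but route (a) is self-contained, so the proposal as a whole is sound and equivalent to the paper's.
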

\begin{proof}
	Let $S_1$ be the sector of $R_h$ that is covered by cones induced by disks packed by \new{Boundary} Packing or by sectors of zippers packed by Ring Packing. Lemma~\ref{lem:atleastonesmall} implies that there is a disk $r_k$ packed into $R_h$ such that $r_k$ touches the inner and the outer boundary of $R_h$. Let $S_2$ be the sector of $R_h$ between the lower and the upper tangent of $r_k$.
	
	We move potentials $\delta_1, \delta_2$ from $S_1,S_2$ to a potential \new{variable} $\Delta$ and guarantee that $\Delta$ is at least $\frac{1}{2}$ times the volume of the gap $G$ of $R_h$. Finally, we move $\Delta$ to $G$, implying that $G$ is saturated, which in turn implies that $R_h$ is saturated.
	
	Lemma~\ref{lem:zipperlengthone} implies that the density of $S_2$ is at least $0.77036$. We move a potential $\delta_2 := \left( 0.77036 - \rho \right) |S_1| > 0.20971 U_{R_h}$ from $S_2$ to $\Delta$, implying that $S_2$ has still a density of $\rho$.
	
	Combining Lemma~\ref{lem:densityCones} and Corollary~\ref{cor:densityzippersectors} yields that $S_1$ has a density of at least $\rho$. Lemma~\ref{lem:upperboundlock} implies that the volume of the gap of $R_h$ is at most $1.07024 U_R$. The volume of $R_h$ is at least $\frac{2 \pi}{2 \arcsin \left( \frac{1}{3} \right)} U_{R_h} > 9.24441U_{R_h}$. Thus, the volume of $S_1$ is at least $(9.24441 - 1.07024) U_{R_h} = 8.17417U_{R_h}$. Hence, we move a potential $\delta_1:=\left( \rho -  \frac{1}{2} \right) 8.17417U_{R_h} > 0.49576 U_{R_h}$ to $\Delta$.
	
	We have $\Delta = \delta_1 + \delta_2 > 0.49576 + 0.20971 = 0.70547$, which is large enough to saturate a sector of volume $V_{\Delta} = 2 \cdot 0.70547= 1.41094 U_{R_h}$.
	
	As $\abs{G} \leq 1.07024$, moving $\Delta$ to $G$ yields that $G$ is saturated, which implies that $R_h$ is saturated. This concludes the proof.
\end{proof}

\begin{corollary}\label{cor:eachminimalringsaturated}
	Each minimal ring is saturated.
\end{corollary}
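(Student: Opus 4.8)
The statement follows by an induction that has, in effect, already been carried out: Lemma~\ref{lem:r1smalldenserings} is the per-ring work, and Corollary~\ref{cor:eachminimalringsaturated} is the wrap-up that applies it to the whole nested family of minimal rings. Concretely, the plan is to induct over $h$, where $R_1,\dots,R_h\subseteq\mathcal{C}$ are the minimal rings created during the entire run (on the container $\mathcal{C}=O$ left after the last recursion), ordered by decreasing outer radius, so that $\partial_{\mathrm{in}}R_i=\partial_{\mathrm{out}}R_{i+1}$. The inductive step is: assuming $R_1,\dots,R_{i-1}$ are saturated, conclude via Lemma~\ref{lem:r1smalldenserings} that $R_i$ is saturated.

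To make the step go through I would first nail down the structural claim that the minimal rings form a single nested family tiling the annular region of $\mathcal{C}$. By the remark preceding Definition~\ref{def:ringstructure} any two computed rings are either disjoint or nested, and Phase~4 only ever subdivides a ring by radii; moreover a new Phase~2/Phase~3 round entered from Phase~5 starts its first ring at outer radius $r_{\min}$, which is the inner radius reached by the previous round, so the tiling resumes seamlessly. Hence, up to a null set, $\mathcal{C}\setminus r_{\text{out}}^{(i)}=R_1\cup\dots\cup R_{i-1}$, which by the induction hypothesis is saturated; Lemma~\ref{lem:r1smalldenserings} then gives that $R_i$ is saturated. Here Lemma~\ref{lem:atleastonesmall} (applicable because $r_n<\tfrac14$, see Corollary~\ref{cor:alldisksfitiflarge}) supplies the full-width disk whose unit sector carries the surplus potential filling the gap of $R_i$, and Lemma~\ref{lem:densityCones} (applicable because $r_1\le 0.495$, also inside sub-containers after rescaling, since we assume $\mathcal{C}=1$) is what makes the Boundary-Packing cones appearing in $R_i$ dense enough for that argument; Observation~\ref{obs:minimalringcoveredbysectorsandgap} together with the gap-free interlocking of consecutive zipper sectors (Fig.~\ref{fig:transit}) guarantees there is no uncovered leftover inside $R_i$ beyond its gap. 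The base case $i=1$ is immediate: $\partial_{\mathrm{out}}R_1$ is the boundary of $\mathcal{C}$, so $\mathcal{C}\setminus r_{\text{out}}^{(1)}$ is empty and hence trivially saturated, and Lemma~\ref{lem:r1smalldenserings} applies directly.

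Since Lemma~\ref{lem:r1smalldenserings} already performs the quantitative estimates for one arbitrary minimal ring, the corollary itself is short, and the only point that genuinely needs care is structural rather than numerical: one must be sure that the nested-annulus decomposition accounts for \emph{all} of $\mathcal{C}\setminus r_{\text{out}}^{(i)}$ — in particular that the transition between a Ring-Packing round and the following Boundary-Packing round on the smaller sub-container glues the two partitions with neither overlap nor slack — so that the induction hypothesis ``everything outside the current minimal ring is saturated'' is literally true and not merely true up to uncontrolled area. This bookkeeping, together with the (easy, by scale-invariance once $\mathcal{C}=1$ is assumed) observation that the constants $\rho<0.56065$, the gap bound $1.07024\,U_R$, the bound $U_R\ge\tfrac14$, and $\abs{R}\ge 9.24441\,U_R$ used in Lemma~\ref{lem:r1smalldenserings} hold uniformly for every minimal ring regardless of its radius, is the step I would expect to need the most attention, even though every individual ingredient is already in hand.
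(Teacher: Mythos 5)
Your proposal matches the paper's argument exactly: the paper sets up the same induction over the minimal rings $R_1,\dots,R_h$ ordered by decreasing outer radius (just before Lemma~\ref{lem:atleastonesmall}), assumes $R_1,\dots,R_{h-1}$ saturated so that $\mathcal{C}\setminus r_{\text{out}}$ is saturated, and invokes Lemma~\ref{lem:r1smalldenserings} for the inductive step. Your extra remarks on the nested-annulus bookkeeping and scale-invariance of the constants are sound elaborations of the same route, not a different one.
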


	As each ring can be partitioned into minimal rings, we obtain the following.
	
	\begin{corollary}\label{cor:allringssaturated}
		All rings are saturated.
	\end{corollary}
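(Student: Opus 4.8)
The plan is to deduce this directly from Corollary~\ref{cor:eachminimalringsaturated}, using two facts: every ring created by the algorithm is an essentially disjoint union of minimal rings, and the potential of a sector is additive under such unions. First I would establish the decomposition. Whenever Ring Packing declares a ring $R[r_{\text{out}}, r_{\text{in}}]$ to be \textsc{closed}, Phase~4 replaces it by the two rings $R[r_{\text{out}}, r_{\text{out}} - 2r_i]$ and $R[r_{\text{out}} - 2r_i, r_{\text{in}}]$, whose union is exactly $R[r_{\text{out}}, r_{\text{in}}]$ and whose interiors are disjoint; a ring that is declared \textsc{full} is never subdivided and is therefore itself minimal. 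Hence, by induction on the creation order of the rings, every ring $R$ decomposes as $R = \bigcup_j R_j$ for minimal rings $R_j$ with pairwise disjoint interiors, and in particular $\abs{R} = \sum_j \abs{R_j}$.

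Next I would invoke additivity of the potential. By definition, the potential of a sector $S \subseteq O$ is $\sum_i \frac{\abs{S_i \cap S}}{\abs{S_i}} A_i$; since $\abs{S_i \cap (S' \cup S'')} = \abs{S_i \cap S'} + \abs{S_i \cap S''}$ whenever $S'$ and $S''$ have disjoint interiors, the potential is additive over essentially disjoint unions of sectors. Applying this to a ring $R = \bigcup_j R_j$ as above, the potential of $R$ equals the sum of the potentials of the $R_j$. By Corollary~\ref{cor:eachminimalringsaturated} each $R_j$ is saturated, so its potential is at least $\tfrac12\abs{R_j}$; summing over $j$ gives that the potential of $R$ is at least $\tfrac12\sum_j \abs{R_j} = \tfrac12\abs{R}$, i.e., $\rho(R) \geq \tfrac12$, so $R$ is saturated.

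This is a pure bookkeeping step and I expect no genuine difficulty. The only point requiring (minor) care is that Phase~4 truly partitions a closed ring — leaving no uncovered sliver and creating no overlap — which is immediate from the explicit endpoints $r_{\text{out}}$, $r_{\text{out}} - 2r_i$, $r_{\text{in}}$ used there, together with the observation that no ring is ever created strictly inside $R[r_{\text{out}}, r_{\text{in}}]$ except via this split. All of the substantive work has already been done in Corollary~\ref{cor:eachminimalringsaturated} and the lemmas it rests on.
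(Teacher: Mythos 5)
Your proposal is correct and takes essentially the same route as the paper, which also deduces the corollary from Corollary~\ref{cor:eachminimalringsaturated} together with the fact that each ring is partitioned into minimal rings (as the paper notes just after Definition~\ref{def:ringstructure}). You simply spell out two steps the paper leaves implicit — that the Phase~4 split yields an exact partition and that the potential is additive over interior-disjoint unions — both of which are routine and hold for exactly the reasons you give.
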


Combining Lemma~\ref{lem:largestdisknotadjacenttoboundary} and Corollary~\ref{cor:allringssaturated} yields that all disks are packed.
	
	\begin{lemma}\label{lem:alldisksfit}
		Our algorithm packs all input disks.
	\end{lemma}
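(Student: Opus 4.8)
I would prove this by contradiction, using exactly the two facts highlighted just before the statement: Boundary Packing never fails on a ``large'' disk (Lemma~\ref{lem:largestdisknotadjacenttoboundary}) and every ring is saturated (Corollary~\ref{cor:allringssaturated}). Suppose some input disk is never placed by the algorithm, and enumerate the ways the algorithm could get stuck. In Phases~1 and~2, Boundary Packing halts only because the current radius drops below the threshold $\mathcal{T}\le\frac14$; by Lemma~\ref{lem:largestdisknotadjacenttoboundary} it never halts because a disk of radius at least $\frac14$ fails to fit into $\mathcal{C}$. Ring Packing (Phase~3) never discards a disk either: when the current disk does not fit the current ring, the ring is only declared \textsc{full} (or \textsc{closed}), and Phases~4--5 either hand that disk to another open ring or restart Phase~2 on a freshly chosen container $\mathcal{C}$; moreover, by Observation~\ref{obs:assumeallringsfull}, if some ring ends up neither \textsc{full} nor \textsc{closed} then all disks are already packed. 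Hence, under our standing assumption we may take every ring to be \textsc{full} or \textsc{closed}, and the only remaining possibility is that at some stage the current container $\mathcal{C}$ has no room left for the next disk.

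Next I would run the potential accounting to show that this cannot happen. By the induction invariant, $O\setminus\mathcal{C}$ is saturated. By Corollary~\ref{cor:allringssaturated} every ring inside $\mathcal{C}$ is saturated, and by Lemma~\ref{lem:densityCones} every cone induced by a disk that Boundary Packing placed into $\mathcal{C}$ has density at least $\rho>\frac12$ and hence is saturated. By Observation~\ref{obs:minimalringcoveredbysectorsandgap} together with the gap-free interlocking of successive rings guaranteed by Phase~4 (see Fig.~\ref{fig:transit}), these cones and minimal rings cover $\mathcal{C}$ up to a set of measure zero, and every atomic-potential source is contained in one of them; since the potential is additive over such a decomposition, the potential of $\mathcal{C}$ is at least $\frac12\abs{\mathcal{C}}$. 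Combined with the saturation of $O\setminus\mathcal{C}$, the potential of the whole unit disk $O$ is therefore at least $\frac12\abs{O}=\frac\pi2$. On the other hand, each disk pays total atomic potential at most its own volume, and an unplaced disk pays nothing, so the potential of $O$ is strictly smaller than the total input volume $\frac\pi2$ --- a contradiction. Hence the algorithm packs all input disks.

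The step I expect to be the real obstacle is the coverage claim in the second paragraph: one must verify that, at the hypothetical moment of failure, $\mathcal{C}$ is genuinely exhausted by the cones of Boundary Packing and the minimal rings of Ring Packing, with no uncovered leftover region and with the rings nested gap-free as in Fig.~\ref{fig:transit}. This is precisely where Observation~\ref{obs:minimalringcoveredbysectorsandgap}, Corollary~\ref{cor:eachminimalringsaturated}, and the ring-splitting rule of Phase~4 must be combined with care. A secondary point to check is the global consistency of the potential transfers used to absorb the gaps (as in the proof of Lemma~\ref{lem:r1smalldenserings}): the redistributions carried out for the different rings and cones must not conflict, so that no region is drained below density $\frac12$.
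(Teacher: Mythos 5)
Your argument is essentially the paper's, made explicit as a contradiction: both proofs combine the induction hypothesis that $O\setminus\mathcal{C}$ is saturated, Corollary~\ref{cor:allringssaturated}, and the volume accounting behind Lemma~\ref{lem:largestdisknotadjacenttoboundary}. The paper states this more tersely by invoking Lemma~\ref{lem:largestdisknotadjacenttoboundary} on the residual empty disk $\overline{\mathcal{C}}$ that remains after removing all created rings, while you unfold the same reasoning via the global potential of $O$.
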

	\begin{proof}
		By induction assumption we know that $O \setminus \mathcal{C}$ is saturated and Corollary~\ref{cor:allringssaturated} implies that all rings inside $\mathcal{C}$ are also saturated.
		
		Let $\overline{\mathcal{C}}$ be the disk left after removing all  rings from $\mathcal{C}$, implying that $\overline{\mathcal{C}}$ is empty. Lemma~\ref{lem:largestdisknotadjacenttoboundary} implies that a final iteration of \new{Boundary} Packing to $\overline{\mathcal{C}}$ yields that all remaining disks are packed into $\overline{\mathcal{C}}$. This concludes the proof.
	\end{proof}

\subsection{Analysis of the Algorithm for the \new{Case} $0.495 \leq r_1$}\label{sec:r1middle}
{
In this section we prove that all disks are packed if $0.495 \leq r_1$ by distinguishing whether $0.495 \leq r_1 \leq \frac{1}{2}$ or $\frac{1}{2} < r_1$.

If $0.495 \leq r_1 \leq \frac{1}{2}$, we apply a similar approach as used for the case $r_1 \leq 0.495$. The additional difficulty for the case of $0.495 \leq r_1 \leq \frac{1}{2}$ is that the cone induced by $r_1$ may have a density of $\frac{1}{2}$. Thus, we have to generate some extra potential from the remaining sectors in order to ensure that the gaps of the rings are saturated, see Section~\ref{sec:appendixanalysisr1large} for details.

\begin{restatable}{lemma}{lemalldisksfittwo}\label{lem:alldisksfit2}
		If $0.495\le r_1 \leq \frac{1}{2}$, our algorithm packs all disks into the container disk.
	\end{restatable}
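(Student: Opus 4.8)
The plan is to mimic the structure of the proof for the case $r_1 \le 0.495$ (Lemmas~\ref{lem:atleastonesmall}--\ref{lem:alldisksfit} and Corollary~\ref{cor:eachminimalringsaturated}), but to account for the fact that the cone induced by $r_1$ may only achieve density exactly $\frac12$ rather than $\rho$. First I would dispose of the trivial subcase: if $r_1 \ge \frac14$ then Corollary~\ref{cor:alldisksfitiflarge} already covers things unless some later disk drops below $\frac14$; in general, since $0.495 \le r_1 \le \frac12$, Boundary Packing places $r_1$ adjacent to $\mathcal{C}$ and, by Lemma~\ref{lem:densityCones}, the cone $C_1$ induced by $r_1$ has density at least $\frac12$ (the endpoint case $r=\frac12$ gives exactly $\frac12$, and the interval $[0.2019,\tfrac12]$ is exactly the range where the bound $\tfrac12$ is guaranteed). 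All subsequent disks $r_2,\dots,r_n$ satisfy $r_i < 0.495$ by the reduction after Lemma~\ref{lem:twodisksonediamater}, so their cones have density at least $\rho$ and the zipper/ring machinery (Corollaries~\ref{cor:densityzippersectors}, \ref{cor:ringpartrho} and Lemma~\ref{lem:upperboundlock}) applies verbatim.

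Next I would run the same induction over the minimal rings $R_1,\dots,R_h$ as in Section~\ref{sec:ronenothuge}, assuming $O\setminus \mathcal{C}$ is saturated and that $R_1,\dots,R_{h-1}$ are saturated, and proving $R_h$ is saturated. The potential bookkeeping in Lemma~\ref{lem:r1smalldenserings} produces a surplus $\Delta = \delta_1 + \delta_2 > 0.70547\, U_{R_h}$ from the "dense" sectors $S_1$ (cones and zipper sectors of density $\ge \rho$) and $S_2$ (the unit sector of density $\ge 0.77036$), which comfortably covers the gap of volume $\le 1.07024\, U_{R_h}$. The only place the weaker cone bound for $r_1$ enters is in the analogue of Lemma~\ref{lem:atleastonesmall} (guaranteeing that each ring contains a disk spanning both boundaries) and in the global area count: I would redo the computation $\sum |\text{cones of } r_1,\dots,r_{k-1}| \ge \frac12 |C_1| + \rho\,|C_1^c \cap (\text{covered part})|$, and check that this still exceeds $\frac{\pi}{2}$ when $C_1$ is the cone of a disk of radius between $0.495$ and $\frac12$ — the slack here is tight but positive, exactly the point flagged in the section's introductory remark that "we have to generate some extra potential from the remaining sectors."

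The main obstacle I expect is precisely this tightness: when $r_1 = \frac12$ the cone $C_1$ is a half-disk of density exactly $\frac12$ and contributes no surplus, so the entire saturation of $\mathcal{C}$ must be carried by the rings and by the remaining Boundary-Packing cones of radius $< 0.495$ (density $\ge \rho \approx 0.56$). I would therefore need a careful case split according to whether, after $r_1$ is placed, there is still a disk $\ge \frac14$ placed adjacent to $\mathcal{C}$ (handled by Lemma~\ref{lem:largestdisknotadjacenttoboundary} and a direct area argument as in Lemma~\ref{lem:atleastonesmall}), or whether the remaining disks are all small, in which case rings are created and Corollary~\ref{cor:eachminimalringsaturated}'s argument, with the modified $S_1$-density bound accounting for $C_1$'s contribution of only $\frac12$, closes the induction. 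Finally, exactly as in Lemma~\ref{lem:alldisksfit}, combining the saturation of all rings with Lemma~\ref{lem:largestdisknotadjacenttoboundary} applied to the leftover empty central disk $\overline{\mathcal{C}}$ shows every input disk is packed, completing the proof.
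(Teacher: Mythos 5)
Your plan is to rerun the minimal-ring induction of Section~\ref{sec:ronenothuge} and hope that the per-ring surplus $\delta_1+\delta_2 > 0.70547\,U_{R_h}$ from Lemma~\ref{lem:r1smalldenserings} survives with "tight but positive" slack. It does not. The value $\delta_1 > 0.49576\,U_{R_h}$ in that lemma comes from assuming the \emph{entire} sector $S_1$ (the part of $R_h$ covered by cones and zipper sectors) has density at least $\rho$. When $r_1\in[0.495,\tfrac12]$ the cone $C_1$ induced by $r_1$ has angle up to $\pi$ and guaranteed density only $\tfrac12$, so the part of $R_h$ inside $C_1$ contributes no surplus at all. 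The part of $R_h$ outside $C_1$ (and outside the gap) has volume at most about $3.55\,U_{R_h}$, giving a surplus of only roughly $(\rho-\tfrac12)\cdot 3.55 \approx 0.215\,U_{R_h}$. Together with the unit-sector contribution $\approx 0.27\,U_{R_h}$ you land near $0.49\,U_{R_h}$, which saturates a region of volume about $0.97\,U_{R_h}$ --- strictly less than the gap bound $1.07024\,U_{R_h}$. So the calculation you plan to "redo and check" actually fails by a genuine margin; the remark in the paper that "we have to generate some extra potential from the remaining sectors" is not a warning that the same bookkeeping is tight, it is a statement that it is insufficient.

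The missing idea is a \emph{transfer of potential across rings}. The paper works with the maximal rings $R_1,\dots,R_h$ ordered by outer radius, isolates the outer rings $R_1,\dots,R_m$ whose inner radius is at least $\tfrac12$ (so the cone of $r_1$ does not split their zippers, using Lemma~\ref{lem:halfcirclefitsinpocket} and Lemma~\ref{lem:helping}/Lemma~\ref{lem:key} to show each contains a unit-sector disk), and proves (Lemma~\ref{lem:technical}, Lemma~\ref{lem:removepotentialssmallrings}, Corollary~\ref{cor:removingpotentials}) that one can extract from these rings a surplus of $\tfrac14\arcsin(\tfrac13)\,r_m^2$ while keeping them saturated. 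This surplus is banked in an auxiliary cone $C_\Delta$ sitting just below the upper tangent of $r_1$, and is spent as the extra term $\delta_3 = U_{R_h}/4$ in Lemma~\ref{lem:r1middlefinal} to saturate the gaps of the inner rings $R_{m+1},\dots,R_h$, whose own surplus is insufficient precisely because $C_1$ eats half their circumference. Your proposal contains no analogue of this transfer, no distinction between the rings that overlap the lid of $C_\Delta$ and those that do not, and none of the auxiliary lemmas (Lemma~\ref{lem:halfcirclefitsinpocket}, Lemma~\ref{lem:helping}, Lemma~\ref{lem:allmaximalringssmall1}) needed to make the construction of $C_\Delta$ and the ring-by-ring induction well-founded. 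Without these, the induction cannot be closed for the inner rings, so the argument has a real gap rather than just being terse.
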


If $\frac{1}{2} < r_1$, we need to refine our analysis because the midpoint of the container disk $\mathcal{C}$ lies inside $r_1$. In particular, we consider a \emph{half disk} $H$ lying inside $\mathcal{C}$ such that $H$ and $r_1$ are touching each other. The volume of $H$ is at least twice the volume of the remaining disks to be packed, see Figure~\ref{fig:reducingr1hugetor1largetwo}. Finally, applying a similar approach as used in the case of $0.495 \leq r_1 \leq \frac{1}{2}$ to $H$ yields that all disks are packed, see Section~\ref{sec:appendixanalysisr1huge} for details.

\begin{figure}[h!]
  \begin{center}
      \includegraphics[scale=1]{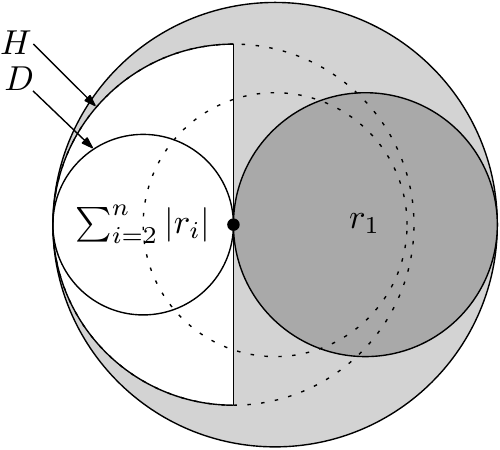} 
  \end{center}
  \caption{The total volume of the remaining disks to be packed is smaller than the volume of the white disk $D$. As $\abs{H} = 2 \abs{D}$, it suffices to guarantee that $H$ is saturated.}
  \label{fig:reducingr1hugetor1largetwo}
\end{figure}

\begin{restatable}{lemma}{lemalldisksfitthree}\label{lem:alldisksfitthree}
	If $\frac{1}{2} < r_1$, our algorithm packs all \new{disks} into the original container disk.
\end{restatable}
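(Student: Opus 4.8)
The plan is to reduce the case $r_1 > \tfrac12$ to the previously handled regime $0.495 \le r_1 \le \tfrac12$ by replacing the residual container by a suitable half-disk. A new argument is needed because once Boundary Packing has placed $r_1$ adjacent to $\partial O$, its center sits at distance $1-r_1 < \tfrac12 < r_1$ from the center $m$ of $O$, so $m$ lies in the interior of $r_1$ and the cone of $r_1$ used throughout Section~\ref{sec:analysis} is undefined; we therefore need a different carrier for the potential of the remaining disks.

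First I would record the a priori bound on $r_1$: since the total area to be packed is at most $\tfrac\pi2$ we have $\pi r_1^2 \le \tfrac\pi2$, hence $r_1 \in (\tfrac12, \tfrac{1}{\sqrt{2}}]$, and (as already assumed after Lemma~\ref{lem:twodisksonediamater}) $r_2 < 0.495$. Let $P$ be the contact point of $r_1$ with $\partial O$ and $P'$ its antipode on $\partial O$; the point of $r_1$ closest to $P'$ lies on the segment $mP'$ at distance $2r_1-1$ from $m$. Let $\ell$ be the line through that point orthogonal to $mP$, and let $H$ be the half-disk with diameter on $\ell$ bulging towards $P'$, taken of the largest radius $\sigma$ for which $H \subseteq O$. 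A short Pythagorean computation---the point of such a half-disk farthest from $m$ is at distance $\sigma + (2r_1-1)$ from $m$---gives $\sigma = 2(1-r_1)$, and by construction $H$ is internally disjoint from $r_1$ and touches it along $\ell$.

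The arithmetic heart is the identity
\[
  \abs{H} - 2\sum_{i=2}^{n}\abs{r_i} \;=\; 2\pi(1-r_1)^2 - \pi\bigl(1-2r_1^2\bigr) \;=\; \pi\,(2r_1-1)^2 \;\ge\; 0 ,
\]
where we used $\sum_{i\ge 2}\abs{r_i} \le \tfrac\pi2 - \pi r_1^2$. Thus $r_2,\dots,r_n$ have combined area at most $\tfrac12\abs{H}$; equivalently $\abs{O\setminus H} = \pi - 2\pi(1-r_1)^2$ and the same identity gives $\abs{r_1} = \pi r_1^2 \ge \tfrac12\abs{O\setminus H}$, so $r_1$ by itself saturates $O\setminus H$ (with nonnegative surplus when $r_1 > \tfrac12$). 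It therefore suffices to show that the algorithm packs $r_2,\dots,r_n$ into $H$ and that $H$ becomes saturated. For this I would rerun the analysis of Lemma~\ref{lem:alldisksfit2} with $H$ in place of the disk-shaped container: anchor the construction at the center of the bounding circle of $H$, treat the circular part of $\partial H$ as the boundary seen by Boundary Packing with threshold $\tfrac{\sigma}{4} = \tfrac{1-r_1}{2}$ (consistent with Phase~2, where $d = 2r_1-1$ and $\tfrac{r-d}{4} = \tfrac{1-r_1}{2}$), then run Ring Packing, cover $H$ by the cones of Lemma~\ref{lem:densityCones}, the zipper sectors of Corollary~\ref{cor:densityzippersectors}, and the gaps bounded in Lemma~\ref{lem:upperboundlock}, and carry out the same potential bookkeeping as in Lemma~\ref{lem:r1smalldenserings}. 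Together with the half-disk analogue of Lemma~\ref{lem:largestdisknotadjacenttoboundary}, every disk in line to be packed actually fits, which proves the claim.

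The main obstacle---where I expect essentially all the work to lie---is making the half-disk adaptation in the last step rigorous: the worst-case densities $\rho$, $0.77036$ and the gap bound $1.07024\,U_R$ were derived for a circular container, and near the straight edge $\ell$ the rings and cones are truncated, so one must verify that these estimates are not degraded there, including at the corner where two truncated rings meet $\ell$ and in the contact region with $r_1$, and one must handle the degenerate sub-case $r_1 \to \tfrac{1}{\sqrt{2}}$ in which the area left for $r_2,\dots,r_n$ shrinks to zero. Everything else is the routine algebra already carried out in the analysis for $r_1 \le \tfrac12$.
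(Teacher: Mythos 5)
Your setup is the same as the paper's: you define the half-disk $H$ of radius $\sigma=2(1-r_1)$ tangent to $r_1$, verify $\abs{H}-2\sum_{i\geq 2}\abs{r_i}=\pi(2r_1-1)^2\geq 0$ and that $r_1$ alone saturates $O\setminus H$, and then propose to transfer the $r_1\leq\tfrac12$ analysis to $H$. This matches the paper's reduction, and your arithmetic is correct, including the observation that Phase~2's threshold $\tfrac{r-d}{4}$ becomes $\tfrac{\sigma}{4}$.

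However, the step you defer as the ``main obstacle'' is precisely the content of the paper's proof, and the route you sketch to fill it does not match how the algorithm actually behaves. You propose to ``anchor the construction at the center of the bounding circle of $H$'' and treat the circular arc of $\partial H$ as the boundary seen by Boundary Packing. But the algorithm never packs relative to $\overline m$: Boundary Packing and Ring Packing always run in the disk $\mathcal{C}$ centered at $m$, so the rings $R_1,\dots,R_h$ are concentric annuli around $m$ that wrap past the straight edge $\ell$ and are not contained in $H$. There is therefore no ``truncated ring'' analysis to perform directly in $H$; the geometry is different. The paper's actual mechanism is an indirect comparison: after scaling so $D$ has radius $\tfrac12$, it pairs each maximal ring $R_i$ in the $r_1>\tfrac12$ configuration with the corresponding ring $\overline R_i$ in the boundary case $r_1=\tfrac12$, and shows (Lemma~\ref{lem:measuringvolumeoflanes} and Lemma~\ref{lem:volumerings}) that the relevant half-ring sector volume does not decrease when passing from $\overline R_i$ to $R_i$; together with Lemma~\ref{lem:key} (which guarantees a disk of radius $\tfrac12$ can be placed in $R_1$ on $r_1$'s side of $\ell$, so each ring actually receives a disk) this lets the $r_1=\tfrac12$ density accounting be imported wholesale via Corollary~\ref{cor:removingpotentials} and Lemma~\ref{lem:r1middlefinal}. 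This ring-by-ring comparison, and the fact that it is a comparison to the $r_1=\tfrac12$ case rather than a fresh density verification inside $H$, is the missing idea; without it your outline does not close the argument, and the paper also needs its Lemma~\ref{lem:r1hugenoring} and Lemma~\ref{lem:r1hugesmallestmaximalringlarge} to dispose of the no-ring and small-inner-radius sub-cases that your sketch does not separate out.
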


Lemma~\ref{lem:alldisksfitthree} concludes the proof of Theorem~\ref{thm:disk_packing}.
}

\section{Hardness}
\label{sec:hardness}
It is straightforward to see that the hardness proof for packing disks into a square can
be adapted to packing disks into a disk, as follows.

\begin{theorem}
It is NP-hard to decide whether a given set of disks fits into a circular container.
\end{theorem}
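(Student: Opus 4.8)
The plan is to mimic the NP-hardness proof for packing disks into a square by Demaine, Fekete, and Lang~\cite{DFL2010circle}, which reduces from the strongly NP-hard problem \textsc{3-Partition}. Recall that an instance of \textsc{3-Partition} consists of $3m$ positive integers $a_1,\dots,a_{3m}$ with $\sum_i a_i = mB$ and $B/4 < a_i < B/2$ for each $i$, and asks whether the $a_i$ can be partitioned into $m$ triples each summing to exactly $B$. First I would recall how the square reduction works: one encodes each integer $a_i$ as a disk whose radius is a carefully chosen (strictly increasing, strictly convex) function of $a_i$, and constructs a ``frame'' of large fixed disks inside the square container that leaves exactly $m$ congruent ``pockets''; the size bounds $B/4 < a_i < B/2$ force exactly three item-disks into each pocket, and the convexity of the radius-encoding together with the pocket geometry forces those three to have integer-labels summing to exactly $B$. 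The only thing that needs to change for a circular container is the shape of the ambient region into which the frame is placed.

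The key steps, in order, are as follows. First, take the container to be the unit disk rather than the unit square. Second, inscribe an axis-aligned square (or any fixed convex polygon) inside the unit disk; this is a fixed region of constant size, and all the frame disks and pocket disks of the original square-reduction construction are placed inside this inscribed square exactly as before, with all radii scaled down by the fixed factor needed to fit the square inside the disk. Third, observe that a set of disks fits inside the inscribed square if and only if it fits inside the unit disk together with the (fixed, adversarially placed) frame disks that fill up the region between the inscribed square and the disk boundary — i.e. we pad the instance with a constant number of fixed ``blocker'' disks so that the only free space in the unit disk is precisely the free space of the square sub-construction. Fourth, carry over verbatim the combinatorial equivalence: the padded disk instance is packable iff the original square instance is packable iff the \textsc{3-Partition} instance is a yes-instance. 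Fifth, note that all radii are computable in polynomial time (they are algebraic numbers described by polynomially many bits, or one may use a suitable rational approximation argument as in~\cite{DFL2010circle}), so the reduction is polynomial; since \textsc{3-Partition} is strongly NP-hard, NP-hardness follows even when the disk radii are given in unary.

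The main obstacle — and it is a mild one — is making the ``padding'' rigorous: one must exhibit an explicit finite arrangement of fixed disks inside the annular region between the inscribed square and the unit circle such that (a) these disks are pairwise non-overlapping and contained in the unit disk, and (b) no item-disk of the construction can escape the inscribed square, because every remaining gap in the annulus is too small to accommodate even the smallest item-disk. This is a routine compactness/covering argument: the annular region is a fixed compact set, the item-disks have radii bounded below by a fixed positive constant (depending only on $m$ and $B$, which are part of the construction's parameters, not on the bit-length), so finitely many fixed disks suffice to ``choke off'' the annulus, and one can even cover it with disks of a single small fixed radius. Once this padding lemma is in hand, the rest is an immediate transfer of the square argument, and no genuinely new geometric difficulty arises — which is exactly why the paper states the result as ``straightforward.'' Hence it is NP-hard to decide whether a given set of disks fits into a circular container.
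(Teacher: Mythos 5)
Your proposal takes a different route from the paper, and it contains a genuine gap that is not merely a ``mild obstacle.''

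The paper does not try to carve a square out of the disk and transfer the square reduction wholesale. Instead, it mimics the DFL construction directly in the circular container: it packs three identical disks of radius $\frac{\sqrt{3}}{2+\sqrt{3}}$, which is exactly the radius at which three mutually tangent disks inscribe perfectly in the unit disk. Because this radius is extremal, any valid packing of these three disks must (up to rotation) put them in the mutually tangent configuration, which forces a rigid curved-triangular pocket at the center; the DFL subdivision and encoding machinery is then applied inside that pocket. The crucial point is rigidity: the frame is made of few, very large disks that have essentially no freedom of placement, so in any valid packing the pocket geometry is as intended.

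Your ``pad the annulus with blocker disks'' step destroys exactly this rigidity, and the compactness/covering argument you invoke does not repair it. The packing problem does not let you ``fix'' or ``adversarially place'' any disk: once the blocker disks are added to the instance, the packer may put them anywhere, including inside the inscribed square or inside the pockets, while placing item disks in the annulus. So even if you arrange finitely many blocker disks that choke off the annulus when placed as you intend, you have not shown that every valid packing places them there. To close this gap you would need a second, independent rigidity argument showing that each blocker disk has essentially a unique feasible position --- but your blockers are necessarily many and small (the annulus between an inscribed square and the circle cannot be choked off by a constant number of large disks), and many small nearly-interchangeable disks are precisely the regime where such uniqueness arguments fail. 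Worse, in a strongly NP-hard 3-Partition instance the smallest item radius can shrink polynomially with the input, so the number and fineness of blockers must grow with the instance, compounding the problem. The paper sidesteps all of this by using a three-disk gadget whose rigidity is immediate; your reduction-to-the-square-problem strategy is not wrong in spirit, but the padding lemma you label ``routine'' is where the real work lies, and as stated it does not hold.
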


The proof is completely analogous to the one by Demaine, Fekete, and Lang in 2010~\cite{DFL2010circle}, 
who used a reduction from \textsc{3-Partition}.
Their proof constructs a disk instance which first forces some symmetrical free “pockets” in the 
resulting disk packing. The instance's remaining disks can then be packed into these pockets if and only 
if the related \textsc{3-Partition} instance has a solution. Similar to their construction, we 
construct a symmetric triangular pocket by 
using a set of three identical disks of radius $\frac{\sqrt{3}}{2+\sqrt{3}}$ that can only be packed into a unit disk
by touching each other. Analogous to \cite{DFL2010circle}, this is further
subdivided into a sufficiently large set of identical pockets. The remaining disks
encode a \textsc{3-Partition} instance that can be solved if and only if the disks can be 
partitioned into triples of disks that fit into these pockets.

\begin{figure}[h!]
  \begin{center}
      \includegraphics[scale=0.9]{} 
  \end{center}
  \caption{Elements of the hardness proof: (1) A symmetric triangular pocket from \cite{DFL2010circle}, allowing
three disks with centers $p_{i_1}$, $p_{i_2}$, $p_{i_3}$ to be packed if and only if the sum
of the three corresponding numbers from the \textsc{3-Partition} instance is small enough.
(2) Creating a symmetric triangular pocket in the center by packing three disks of radius $\frac{\sqrt{3}}{2+\sqrt{3}}$ and the adapted argument from \cite{DFL2010circle} for creating a sufficiently large set of symmetric
triangular pockets.}
  \label{fig:NPhard}
\end{figure}

\section{Details of the Analysis of Ring Packing}\label{sec:detailsanalysis}

\restatethm{\lemzipperoflengthatleastthree*}{lemzipperoflengthatleastthree}

\begin{proof} In order to prove the lemma, we apply \emph{interval arithmetic}.
	In interval arithmetic, mathematical operations such as addition, multiplication or computing the sine are performed on intervals instead of real numbers.
	Operations on intervals are defined based on their real counterparts as follows.
	For two intervals $A=[a_1,b_1],B=[a_2,b_2]$ and some binary operation $\circ$, the result $A \circ B$ is defined as
	\[A\circ B \coloneqq \left[\min\limits_{x_1 \in A, x_2 \in B} x_1 \circ x_2, \max\limits_{x_1 \in A, x_2 \in B} x_1 \circ x_2\right].\]
	In other words, the result of an operation is the smallest interval containing all values $x \circ y$ for $x \in A, y \in B$.
	Unary operations are defined analogously.
	If the input interval(s) contain values for which the corresponding operation on real numbers is undefined, the result is undefined.
	
	In order to use interval arithmetic for our proof, we consider the cases T1--T8 as depicted in Figure~\ref{fig:threelines} separately.
	For each of these cases, we consider the following 3--4 variables.
	The first variable is $\lambda$, the inner radius of the ring; we assume the outer radius to be $1$.
	Additionally, we have 2--3 variables $r_1 \geq r_2 \geq r_3$ corresponding to the radii of the disks involved in the case.
	For each of the cases, it is straightforward to implement an algorithm that computes the density of any configuration T1--T8, given some real values for $\lambda,r_1,\ldots,r_3$.
	Such an algorithm needs to perform basic arithmetic operations as well as square root and inverse sine computations.
	Instead of implementing such an algorithm using real numbers, we can also implement it using interval arithmetic.
	As input, instead of concrete real values, we are given intervals $I_{\lambda},I_1,I_2,I_3$ for $\lambda$ and $r_1,r_2,r_3$.
	As output, we compute an interval $I_d$ for the density of any given configuration from T1--T8.
	We know that $I_d$ contains all possible density values that an implementation using real numbers can produce given inputs from $I_{\lambda},I_1,I_2,I_3$.
	Therefore, if the lower bound of $I_d$ is above a lower bound $b_d \coloneqq 0.5642$, we know that the density is at least $b_d$ for all possible values these intervals.

	Furthermore, we can bound our variables as follows.
	We only have to consider the case $\frac{1}{2} \leq \lambda < 1$.
	For any given value of $\lambda$, we know that $r_1 \leq \frac{1-\lambda}{2}$, because otherwise, $r_1$ would not fit into the ring.
	Moreover, we can lower-bound $r_2$ by $\frac{1-\lambda - 2r_1}{2}$, because otherwise, $r_2$ would be the first disk in a new ring; similar statements hold for $r_3$ in the three-disk cases.
	This gives us lower and upper bounds for all involved variables.
	By subdividing the ranges for $\lambda,r_1,r_2,r_3$ into sufficiently small intervals, we subdivide the space spanned by $\lambda,r_1,r_2,r_3$ into finitely many hypercuboids; these hypercuboids cover the entire space.
	
	We can feed each hypercuboid into the interval arithmetic implementation of each of the configurations T1--T8.
	If, for a hypercuboid, the result is a density interval with lower bound of at least $b_d$, we do not have to consider that hypercuboid anymore.
	Implementing this idea yields an automatic prover\footnote{The source code of this prover is available online:\\ \url{https://github.com/phillip-keldenich/circlepacking}} which we can use to prove that the density is at least $b_d$ for all hypercuboids with $\lambda \leq 0.99$.
	
	The number of hypercuboids that we have to consider in this process is large; therefore, we implemented the approach outlined above using a CUDA-capable GPU in order to perform the computations for individual hypercuboids in a massively parallel fashion.
	Our implementation of interval arithmetic handles rounding errors arising from the use of limited-precision floating-point arithmetic by using appropriate rounding modes (where available) or error bounds guaranteed by the underlying platform.
	In this way, we can ensure that the interval resulting from any operation contains all possible results of the corresponding operation on real numbers given values from the input intervals.
	This ensures soundness of our results in the presence of rounding errors.

	Let $R := R[r_{\text{out}}, r_{\text{in}}]$ be the ring into which $Z$ is packed. Our prover shows that the density of $e_i$ is $0.5642$, if the ratio of~$R$ is at most $0.99$. Hence, we assume w.l.o.g. that the ratio of \new{$r_{\text{in}}$ and $r_{\text{out}}$} is at least $0.99$. Furthermore, we assume w.l.o.g. that $R$ has a width of $0.01$.
	
	Before we lower bound the density of $e_i$, we need to define some notations. The \emph{midpoint} of a ring $R$ is the midpoint of the disk induced by the outer boundary. Let $q$ be a point in $R$ and $\ell_q$ the ray starting from \new{the center $m$ of the container disk} and shooting into the direction of $q$, see Figure~\ref{fig:coordinates}. Let $\overline{q}$ be the intersection point of $\ell_p$ with the outer boundary of $R$. Given a ray $\ell$, called \emph{reference axis}, the \emph{first coordinate} of $q$ is the distance between $q$ and $\overline{q}$. The \emph{second coordinate} of $q$ is \new{the} length of the curve $\beta$ on the outer boundary between $\ell$ and $\ell_q$. 
	
	\new{We scale $R$ and the disks' radii so that the reference axes match}. In particular, we denote the sector of $e_i$ by $B$. Let $(\overline{r}_j,\overline{r}_m) =: \overline{e}_i$ be the edge with $\overline{r}_m = r_m, \overline{r}_j := r_j$ packed by Ring Packing into a ring $\overline{R} := R[1,0.99]$ and $\overline{B}$ the sector of $\overline{e}_i$ such that $e_i$ and $\overline{e}_i$ are of the same type. We construct from $\overline{B} \subset R[1,0.99]$ a sector $A \subset R$ such that $A$ is a superset of the sector $B$ of $e_i$. We guarantee $\frac{\abs{A}}{\abs{\overline{B}}} \leq 1.00503$ implying that $B$ has a density of at least $\frac{0.5642}{1.00503} > 0.56137 > \rho$ because $|\overline{B}|$ has a density of at least $0.5642$.
	
\begin{figure}[h!]
  \begin{center}
      \includegraphics[scale=1]{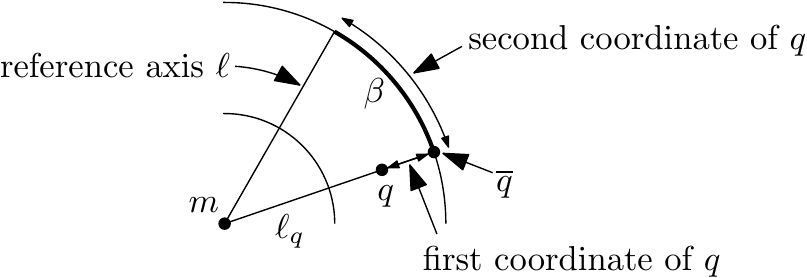} 
  \end{center}
  
  \caption{Regarding the reference axis $\ell$, the coordinates of the point $q$ are the length of $\beta$ and the distance between $q$ and $\overline{q}$.}
  \label{fig:coordinates}
\end{figure}

	Let $\overline{e}_i$ be of type T1, see Figure~\ref{fig:prover_remainingcase}(a).
\begin{figure}[h!]
  \begin{center}
      \includegraphics[scale=1]{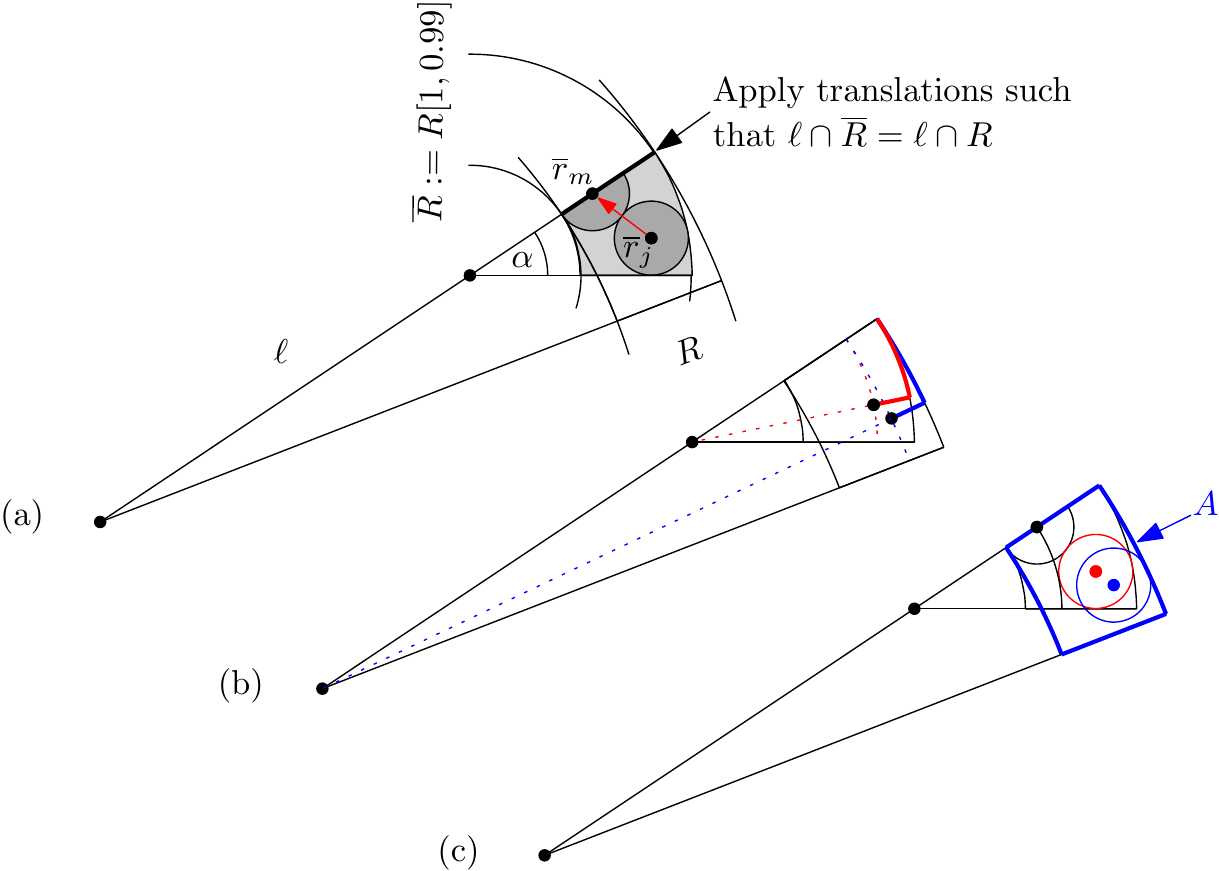} 
  \end{center}
  
  \caption{Mapping the midpoint of $r_j$ onto a point in $R$ having the same coordinates as the midpoint of $r_j$.}
  \label{fig:prover_remainingcase}
\end{figure}
W.l.o.g., we assume that $r_m$ touches the inner boundary of $\overline{R}$. Let $\ell$ be the ray shooting from the midpoint of $\overline{R}$ into the direction of $r_m$. We scale and place $R,r_j,r_m$ such that $\ell \cap \overline{R}$ is equal to $\ell \cap R$, see Figure~\ref{fig:prover_remainingcase}~(Left). Thus, both $R$ and $\overline{R}$ have the width $\frac{1}{100}$. We use $\ell$ as reference axis for both rings $R$ and $\overline{R}$. We map each point from the sector of $\overline{e}_i$ onto the point in $R$ having the same coordinates, see Figure~\ref{fig:prover_remainingcase}(b). Let $\overline{m}_m, \overline{m}_j$ be the midpoints of $\overline{r}_m, \overline{r}_j$. By construction, $\overline{m}_m$ is mapped onto itself. Furthermore, $\overline{m}_j$ is mapped onto a point that lies farther away from $\overline{m}_m$ as $\overline{m}_j$ does, because $\overline{m}_j$ lies closer to the outer boundary as $\overline{m}_m$ in the initial configuration.
	
	Let $A$ be the union of all points from $R$ onto which points from the sector of $e_i$ are mapped, see the blue sector in Figure~\ref{fig:prover_remainingcase}(c). The volume $\abs{A}$ of $A$ achieves its supremum for $r_{\text{in}}$ approaching $1$. Thus, we upper bound $\abs{A} \leq \frac{\beta}{100}$, where $\beta$ is the length of the part of the outer boundary of $\overline{R}$ inside the sector of $e_i$. As the outer radius of $\overline{R}$ is one, $\beta$ is also the angle induced by the sector of $e_i$, see Figure~\ref{fig:prover_remainingcase}(a). Furthermore, the volume \new{$\abs{\overline{B}}$} of the sector \new{$\overline{B}$} of \new{$\overline{e}_i$} is equal to $\frac{\beta}{2 \pi} \cdot \pi \cdot \left( 1 - \left( 1 - \frac{1}{100} \right)^2 \right)$. Thus, $\frac{\abs{A}}{\abs{\overline{B}}} \leq \frac{1}{1 - \frac{1}{200}} < 1.00503$. 

	The same approach as used for edges of type T1 applies to edges of type T2.
	
	\begin{figure}[t]
  \begin{center}
      \includegraphics[scale=1]{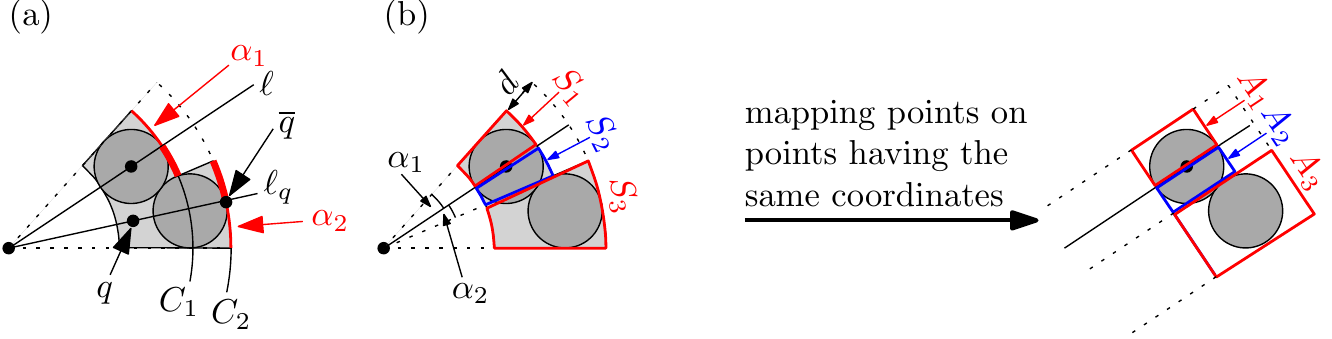} 
  \end{center}
  
  \caption{The coordinates of $q$ are the distance between $q$ and $\overline{q}$ and the sum of the lengths of the parts of $\alpha_1,\alpha_2$ that lie between $\ell$ and $\ell_q$.}
  \label{fig:coordinates2}
\end{figure}
	
	In order to analyze edges of type T3 and T4, we need a refined approach. We maintain the first coordinate of $q$ as the distance between $q$ and $\overline{q}$ and refine the second coordinate as follows: W.l.o.g., let $\overline{r}_m$ be the disk touching the inner boundary of $\overline{R}$. Let $\ell$ be the ray shooting from the midpoint of $\overline{R}$ into the direction of $\overline{m}_m$, see Figure~\ref{fig:coordinates2}. Let $C_1, C_2$ be the circles having the same midpoint as $\overline{R}$ such that $r_m,r_j$ touch $C_1,C_2$ from the interior. Let $\alpha_1 \subset C_1,\alpha_2 \subset C_2$ be the circular arcs lying on the boundary of $\overline{B}$. The second coordinate of $q$ is the sum of the lengths of the parts of $\alpha_1, \alpha_2$ between $\ell_q$ and $\ell$, see the fat red curves in Figure~\ref{fig:coordinates2}.
	
	Using the refined definition of coordinates, we apply the approach used for edges of type T1 and T2 to that part of $\overline{B}$ that lies below $\ell$. Furthermore, we apply the symmetric approach reflected at $\ell$ to that part of $\overline{B}$ that lies above $\ell$. The same argument as used for edges of type T1 and T2 implies that the distance between $\overline{m}_j, \overline{m}_m$ is not decreased.
	
	In order to upper bound the volume $\abs{A}$ of $A$, we partition $\overline{B}$ into three sectors $S_1, S_2,S_3$, see Figure~\ref{fig:coordinates2}(b). $S_1$ is that part of $\overline{B}$ which lies above $\ell$. $S_2$ that part of $\overline{B}$ which lies above the upper tangent to $r_m$ and below $\ell$. $S_3$ is $\overline{B} \setminus S_1 \cup S_3$. We consider the width $d$ of the pocket left by $\overline{B}$, see Figure~\ref{fig:coordinates2}(b). Furthermore, let $A_1,A_2,A_3$ be the images of $S_1,S_2,S_3$ under mapping points from $\overline{R}$ onto points from $R$. The volumes $\abs{A_1}, \abs{A_2}, \abs{A_3} $ of $A_1,A_2,A_3$ achieve their suprema for $r_{\text{in}}$ approaching $1$. Hence, we upper bound $|A_1|$ by $\frac{\alpha_1}{2 \pi} 2 \pi \frac{1-d}{100}$, where $\alpha_1$ is the angle induced by $S_1$, see Figure~\ref{fig:coordinates2}(b). Furthermore, we have $|S_1|=\frac{\alpha_1}{2 \pi} \pi  \left( (1-d)^2 - \left( (1-d) - \frac{1}{100} \right)^2 \right)$. Thus, we upper bound $\frac{\abs{A_1}}{\abs{S_1}}$ by $\frac{1}{1 - \frac{\frac{1}{100} - d}{2(1-d)}} \leq \frac{1}{1 - \frac{1}{200}} < 1.00503$.
	
	Using the same approach as used to upper bound $\frac{|A_1|}{|S_1|}$, yields $\frac{\abs{A_2}}{\abs{S_2}} < 1.00503$.
	
	Using the same approach as used for edges of type T1 we obtain $\frac{\abs{A_3}}{\abs{S_3}} \leq 1.00503$. This implies $\frac{\abs{A}}{\abs{B}} \leq 1.00503$ because $A = A_1 \cup A_2 \cup A_3$ and $B = S_1 \cup S_2 \cup S_3$ \new{where $A \cap B = \emptyset$}.
	
	 If $e_i = (r_j,r_m)$ of type T5, we assume w.l.o.g. that the disk $r_p$ packed between $r_j$ and $r_m$ lies adjacent to the inner boundary of $R$. Otherwise, the area of the sector of $e_i$ is monotonically decreasing in the ratio of $R$ implying that the sector of $e_i$ has a density of at least $0.5642$. Let $m_p$ be the midpoint of $r_p$ and $\ell$ the ray shooting from the midpoint of $R$ into the direction of the $m_p$. We apply the same approach as used edges of T1 to that part of $B$ which lies above $\ell$ and the symmetric approach to that part of $B$ which lies below $\ell$. This yields a lower bound of $0.56137$ for the density of $e_i$. 
	 
	 If $e_i$ is of type T6, applying the same approach as used for edges of type T5 yields $0.56137$ as a lower bound for the density of $e_i$.
	 
	 If $e_i$ is of type T7 \new{or of type T8}, applying the same approach as used for edges of type T3 yields that the density of $e_i$ is lower-bounded by $0.56137$.
	 
	 This concludes the proof.
\end{proof}

\section{Details of the Analysis for the Case of $0.495 \leq r_1 \leq \frac{1}{2}$}\label{sec:appendixanalysisr1large}

In this section, we show that our algorithm packs all disks if $0.495 \leq r_1 \leq \frac{1}{2}$.
	
	Let $R_1,\dots,R_{h}$ be the maximal rings ordered decreasingly w.r.t.\ 
their outer radii. As $R_1,\dots,R_{h}$ are maximal, $R_1,\dots,R_{h}$ are also
ordered decreasingly w.r.t.\ their widths, because our algorithm processes the
disks $r_1,\dots,r_n$ in decreasing order. The inner radius of $R_i$ is equal
to the outer radius of $R_{i+1}$ for $i = 1,\dots,h$.
	
	We distinguish whether there is a maximal ring with inner radius smaller than $\frac{1}{2}$ or not.
	
	\begin{lemma}\label{lem:allmaximalringssmall1}
		If all inner radii of maximal rings are \new{larger} than $\frac{1}{2}$, all disks are packed.
	\end{lemma}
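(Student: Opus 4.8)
The plan is to mirror the analysis of Section~\ref{sec:ronenothuge}: show by induction on the created minimal rings, ordered by decreasing outer radius, that every one of them is \emph{saturated}, and then conclude, exactly as in Lemma~\ref{lem:alldisksfit} via Lemma~\ref{lem:largestdisknotadjacenttoboundary}, that all disks are packed. Since we are in the case $r_2 < 0.495$, every disk that Boundary Packing places into $O$ other than $r_1$ has radius in $[\tfrac14,0.495]$, so its cone has density at least $\rho$ by Lemma~\ref{lem:densityCones}, and zipper sectors have density at least $\rho$ by Corollary~\ref{cor:densityzippersectors}. The only region whose cone is deficient is the cone $C_1$ induced by $r_1$: since $r_1\in[0.495,\tfrac12]$, Lemma~\ref{lem:densityCones} only guarantees $C_1$ a density of at least $\tfrac12$, so $C_1$ contributes no surplus potential toward saturating the gaps of the rings.

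To keep each minimal ring saturated despite $C_1$, I would first localize $C_1$. Because $r_1\le\tfrac12$, its cone has half-angle $\arcsin\!\big(\tfrac{r_1}{1-r_1}\big)\le\tfrac{\pi}{2}$, so $C_1$ meets every concentric ring in an arc of angular width at most $\pi$, i.e.\ $C_1$ covers at most half of every ring; the hypothesis that every maximal ring has inner radius larger than $\tfrac12$ guarantees moreover that all maximal rings lie inside $R[1,\tfrac12]$ and that $C_1$ reaches through each of them. Then I would re-run the potential-shifting argument of Lemma~\ref{lem:r1smalldenserings} in each minimal ring $R$: the part $R\setminus C_1$ has volume at least $\tfrac12\cdot\tfrac{\pi}{\arcsin(1/3)}\,U_R$ and is covered by the cones of the disks of radius $<0.495$, by zipper sectors, and by the gap of $R$, so (after subtracting the gap, of volume at most $1.07024\,U_R$ by Lemma~\ref{lem:upperboundlock}) it carries a surplus $\delta_1\ge(\rho-\tfrac12)\big(\tfrac12\cdot\tfrac{\pi}{\arcsin(1/3)}-1.07024\big)U_R$; a disk touching both boundaries of $R$ (the analogue of Lemma~\ref{lem:atleastonesmall}) supplies, through Lemma~\ref{lem:zipperlengthone}, a unit sector of density at least $0.77036$ and hence a further surplus $\delta_2\ge(0.77036-\rho)U_R$; and whenever $r_1<\tfrac12$ one also collects $\big(\tfrac{\pi r_1^2}{\arcsin(r_1/(1-r_1))}-\tfrac12\big)\abs{C_1\cap R}$ from the slack in Lemma~\ref{lem:densityCones}. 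Moving $\delta_1+\delta_2$ (plus the extra term) into the gap of $R$ then shows the gap, and hence $R$, is saturated.

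Once every maximal ring is saturated, the annulus $A:=R[1,r_{\min}]$ with $r_{\min}>\tfrac12$ is saturated, and every sub-sector of $C_1$ — in particular $C_1\cap A$ — has density $\tfrac{\pi r_1^2}{\arcsin(r_1/(1-r_1))}\ge\tfrac12$, so the total potential received inside $A$ is at least $\tfrac12\abs{A}$; since every disk pays at most its own volume, the disks packed before the algorithm descends to the central disk $\overline{\mathcal C}$ of radius $r_{\min}$ have combined volume at least $\tfrac12\abs{A}$, and hence the still-unpacked disks have total volume at most $\tfrac{\pi}{2}-\tfrac12\abs{A}=\tfrac12\abs{\overline{\mathcal C}}$ and are packed into $\overline{\mathcal C}$ by the induction hypothesis on $n$ (a final Boundary Packing step into $\overline{\mathcal C}$, as in Lemma~\ref{lem:alldisksfit}). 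I expect the main obstacle to be the tightness of the bookkeeping in the middle step: in the extreme case $r_1=\tfrac12$ the cone $C_1$ fills exactly half of a widest ring $R[1,\tfrac12]$ with zero surplus, so one has to harvest every remaining fraction — the $\rho-\tfrac12$ slack on the outer half of each ring, the $0.77036-\rho$ slack of the unit sector, and the difference between the true gap volume and $1.07024\,U_R$ — and, if even that is too tight, exploit that for $r_1$ this large the total budget of $\tfrac{\pi}{2}$ forces the remaining sectors of $\mathcal C$ to be nearly empty, which relaxes the count.
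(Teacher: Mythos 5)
Your plan is a genuinely different route from the paper's, and in its present form it has a quantitative gap that you yourself flag but do not close. The paper does not re-run the potential bookkeeping of Lemma~\ref{lem:r1smalldenserings} in a cone-restricted form; it instead invokes Lemma~\ref{lem:halfcirclefitsinpocket} (the pocket under a tangent of a disk $r_1\ge 0.495$ accommodates a disk of radius $\tfrac14$) together with the hypothesis $r_{\min}>\tfrac12$ to conclude that the center of the \emph{last} disk packed in each minimal ring lies above the upper tangent of $r_1$. This geometric fact places the gap of each ring in a position where Lemma~\ref{lemzipperoflengthatleastthree} (together with the cone of $r_1$, which has density at least $\tfrac12$) already saturates the ring, without any fresh surplus accounting. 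Your proof never uses Lemma~\ref{lem:halfcirclefitsinpocket}, which is the crux of the paper's argument here.

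The concrete numerical obstruction to your bookkeeping: at $r_1=\tfrac12$ the cone $C_1$ covers exactly half of each concentric ring and contributes zero slack, so the surplus you can harvest is $\delta_1\approx(\rho-\tfrac12)\bigl(\tfrac12\cdot\tfrac{\pi}{\arcsin(1/3)}-1.07024\bigr)U_R\approx 0.215\,U_R$ from the non-$C_1$ half and $\delta_2\approx(0.77036-\rho)U_R\approx 0.210\,U_R$ from the unit sector, totaling roughly $0.425\,U_R$. Saturating a gap of volume up to $1.07024\,U_R$ requires about $0.535\,U_R$, leaving a shortfall of roughly $0.11\,U_R$. Your closing remark that one can ``exploit that for $r_1$ this large the total budget of $\tfrac{\pi}{2}$ forces the remaining sectors to be nearly empty'' is not an argument: the total-budget constraint does not obviously shrink an individual ring's gap, and nothing in the rest of your sketch converts that intuition into a bound. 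To make your approach work you would either have to sharpen the gap bound of Lemma~\ref{lem:upperboundlock} under the hypothesis of this lemma, or show (as the paper does, via Lemma~\ref{lem:halfcirclefitsinpocket}) that the gap of each minimal ring is positioned inside $C_1$ so that it never needs surplus from the ring at all.
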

	\begin{proof}
		Lemma~\ref{lem:halfcirclefitsinpocket} implies that the center of the smallest disk inside each minimal ring $R_{\min}$ that lies inside $R_1,\dots,R_h$ lies above the upper tangent of $r_1$. Thus, Lemma~\ref{lemzipperoflengthatleastthree} implies that all rings of $R_1,\dots,R_h$ are saturated. A final application of Lemma~\ref{lem:largestdisknotadjacenttoboundary} implies that all disks are packed.
	\end{proof}
	
	Hence, we assume w.l.o.g.\  that there is a maximal ring with inner radius smaller than $\frac{1}{2}$.

	Let $R_{m}$ be the maximal ring with outer radius \new{r_m} larger and inner radius not larger than $\frac{3}{4}$.
	
	First, we show that we can move a potential of $\frac{1}{4} \arcsin \left( \frac{1}{3} \right)r_m^2$ from $R_1,\dots,R_m$ to potential function $\Delta$ while guaranteeing that $R_1, \dots,R_m$ is saturated, see Corollary~\ref{cor:removingpotentials}. We distinguish whether $m = 1$ or $m > 1$, see Lemma~\ref{lem:technical} and Lemma~\ref{lem:removepotentialssmallrings}.
	
	For the remainder of Section~\ref{sec:r1middle}, each disk $r_i$ packed by \new{Boundary} Packing in Phase 1 pays its entire volume to the cone induced by $r_i$.

Let $r_m$ be the inner radius of $R_m$. 

\begin{lemma}\label{lem:technical}
	Let $0.495 \leq r_1 \leq \frac{1}{2}$ and $m = 1$. We can move a potential of $\frac{1}{4} \arcsin \left( \frac{1}{3} \right)r_m^2$ from~$R_1$ to potential function $\Delta$ while guaranteeing that $R_1$ is saturated. 
\end{lemma}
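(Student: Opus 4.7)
For $m=1$, the ring $R_1$ has outer radius equal to the radius of the container $\mathcal{C} = 1$ and inner radius $r_m \leq \tfrac{3}{4}$. The lemma is only needed when rings actually get created, so by Corollary~\ref{cor:alldisksfitiflarge} I may assume $r_n < \tfrac{1}{4}$; in particular Lemma~\ref{lem:atleastonesmall} applies to $R_1$. The plan is to imitate the excess-bookkeeping of Lemma~\ref{lem:r1smalldenserings} while isolating the cone of $r_1$, the one sector whose density drops from $\rho$ to merely $\tfrac{1}{2}$ in the regime $r_1 \in [0.495,\tfrac{1}{2}]$ and thus contributes no extractable excess.

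Concretely, I would decompose $R_1$ into four disjoint regions: the intersection $C := R_1 \cap (\text{cone of } r_1)$; the sector $S_2$ between the two tangents of a disk $r_k$ touching both boundaries of a minimal sub-ring of $R_1$ (whose existence is guaranteed by Lemma~\ref{lem:atleastonesmall}); the union $G$ of the gaps of the minimal sub-rings of $R_1$; and the remainder $S_1 := R_1 \setminus (C \cup S_2 \cup G)$. By Lemma~\ref{lem:densityCones}, $C$ has density $\geq \tfrac{1}{2}$ and yields no excess; by Lemma~\ref{lem:zipperlengthone}, $S_2$ has density $\geq 0.77036$; every piece of $S_1$ is either the cone of a Boundary-Packed disk of radius in $[\tfrac{1}{4}, 0.495)$ (density $\geq \rho$ by Lemma~\ref{lem:densityCones}) or a zipper sector (density $\geq \rho$ by Corollary~\ref{cor:densityzippersectors}), so $S_1$ itself has density $\geq \rho$. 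I then transfer $\delta_1 := (\rho - \tfrac{1}{2})|S_1|$ from $S_1$ and $\delta_2 := (0.77036 - \rho)|S_2|$ from $S_2$ into the potential variable $\Delta$, each leaving the density $\rho \geq \tfrac12$ behind.

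By Lemma~\ref{lem:upperboundlock}, $|G| \leq 1.07024\,U_{R_1}$, so saturating $G$ costs at most $\tfrac{1}{2}|G|$ of potential. The lemma then reduces to the inequality
\[
  \delta_1 + \delta_2 \;\geq\; \tfrac{1}{2}|G| \;+\; \tfrac{1}{4}\arcsin\!\big(\tfrac{1}{3}\big)\,r_m^2,
\]
which I would verify using the closed-form expressions $|R_1| = \pi(1 - r_m^2)$, $|C| = \arcsin\!\big(r_1/(1-r_1)\big)(1-r_m^2)$, $U_{R_1} = \arcsin\!\big(\tfrac{1-r_m}{1+r_m}\big)(1-r_m^2)$, together with $|S_2| \geq U_{R_1}$ and $|S_1| \geq |R_1| - |C| - |S_2| - |G|$. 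Since $\arcsin(r_1/(1-r_1))$ is monotonically increasing in $r_1$, pushing $r_1$ to its worst case value $\tfrac{1}{2}$ maximises $|C|$ and minimises $|S_1|$, reducing the claim to a one-variable analytic check in $r_m \in (0,\tfrac{3}{4}]$.

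The main obstacle will be the numerical tightness of this last step when $r_m$ is close to $\tfrac{3}{4}$ and $r_1 = \tfrac{1}{2}$: the cone of $r_1$ consumes almost half of $R_1$ with no extractable excess, the ring $R_1$ is as thin as permitted, and the target $\tfrac{1}{4}\arcsin(\tfrac{1}{3})\,r_m^2$ is largest relative to $U_{R_1}$. Meeting the inequality in this regime relies crucially on the full $0.77036$ bound of Lemma~\ref{lem:zipperlengthone} for $S_2$ rather than a weaker $\rho$ bound; if even this is not quite enough, the fallback is to refine the lower bound on $|S_1|$ by observing that the Boundary-Packed disks filling the angular complement of the cone of $r_1$ have radii strictly below $0.495$ and thus supply strictly more than $\rho\,|S_1|$ of potential on average.
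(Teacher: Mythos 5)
Your decomposition captures the paper's $\delta_1$ (excess from the unit sector, bound $0.77036$) and $\delta_2$ (excess $\rho-\tfrac12$ from the portion of $R_1$ outside the cone), but it is missing the third harvest, which is indispensable. The paper's proof of Lemma~\ref{lem:technical} explicitly introduces a sector $S_3 := C \cap (\text{zipper sectors})$ and a potential $\delta_3$ equal to the full amount the zipper disks pay to $S_3$. This potential can be diverted to $\Delta$ because $C$ is already saturated at density $\geq \tfrac12$ by the potential of $r_1$ alone (Lemma~\ref{lem:densityCones}); everything the zipper contributes to $C$ is therefore surplus. By treating $C := R_1 \cap (\text{cone of }r_1)$ as a region that ``yields no excess,'' you throw away exactly this surplus, and that is not a harmless simplification: the zipper wraps around into the cone, so $S_3$ is a sizable piece of $R_1$.

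Concretely, your target inequality $\delta_1 + \delta_2 \geq \tfrac12\abs{G} + \tfrac14 \arcsin(\tfrac13) r_m^2$ fails numerically. Take $r_1 = \tfrac12$ and $r_m = \tfrac34$: then $U_{R_1} = \arcsin(\tfrac17)(1 - \tfrac{9}{16}) \approx 0.063$, $\abs{G} \leq 1.07024\,U_{R_1} \approx 0.067$, $\abs{C \cap R_1} = \tfrac12 \abs{R_1} \approx 0.687$, and $\abs{S_1} \leq \abs{R_1} - \abs{C\cap R_1} - U_{R_1} - \abs{G} \approx 0.557$, giving $\delta_1 + \delta_2 \approx (\rho - \tfrac12)(0.557) + (0.77036 - \rho)(0.063) \approx 0.047$, while the right side is $\approx 0.034 + 0.048 = 0.082$. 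The shortfall is nearly a factor of two, so this is not a tightness issue that a slightly sharper lower bound on the density of $S_1$ could fix; the cone and zipper density bounds of Lemma~\ref{lem:densityCones} and Corollary~\ref{cor:densityzippersectors} are both close to $\rho$, leaving no such slack. Your fallback (claiming the Boundary-Packed disks supply more than $\rho\abs{S_1}$ ``on average'') does not reach $\delta_3$'s contribution, which lives entirely inside the cone, not in $S_1$. The paper's parametrization of the final inequality via $c.x = r_m^2$ (the squared inner radius) and the resulting calculus check $f(c.x) \geq 0$ on $[\tfrac14, 0.5625]$ is precisely what quantifies how much the $\delta_3$ term contributes as $r_m$ varies; without it, the argument does not close.
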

\begin{proof} Let $C$ be the cone induced by $r_1$, see Fig.~\ref{fig:r1large}. Lemma~\ref{lem:densityCones} implies that $C$ is saturated. W.l.o.g., we assume $r_1 = 0.495$ and that $C$ has a density of $\frac{1}{2}$.
\begin{figure}[h!]
  \begin{center}
      \includegraphics[scale = 1]{} 
  \end{center}
  
  \caption{The case in which the largest maximal ring has a width at least as large as $\frac{1}{4}$.}
  \label{fig:r1large}
\end{figure}

	In the following, we consider three sectors $S_1,S_2,S_3 \subset R_1$. We prove that the potentials of $S_1,S_2,S_3$ can be reduced by certain potentials (denoted by $\delta_1,\delta_2,\delta_3$) while still ensuring that a density of $\frac{1}{2}$ for $R_1$. We iteratively remove $\delta_1,\delta_2,\delta_3$ from $S_1,S_2,S_3$ and pay it to $\Delta$. 

	As we want to lower bound the potential that we can remove from $R_1$ while maintaining that $R_1$ is saturated, we upper bound the sum of the gaps of all minimal rings inside $R_1$. Thus, we assume w.l.o.g.\  that $R_1$ is minimal, because this does not increase the sum of the gaps inside $R_1$. Lemma~\ref{lem:upperboundlock} implies that the gap of $R_1$ is at most $1.07024 U_{R_1}$.
	
	\begin{itemize}
		\item Construction of $S_1,\delta_1$: Lemma~\ref{lem:helping} implies that there is a disk $r_k$ packed into $R_1$ touching the inner and the outer boundary of $R_1$, see Fig.~\ref{fig:r1large}(a). Let $\alpha$ be the angle of the cone induced by $r_k$. Lemma~\ref{lem:zipperlengthone} implies that the part $S_1$ of $R_1$ that lies between the two tangents of $r_k$ has a density of at least $0.77036$. We remove a potential of $\delta_1 := \left( 0.77036 - \rho \right) |S_1| = 0.20936 |S_1| = 0.20936 U_{R_1}$ from $S_1$ and pay it to~$\Delta$. This implies that the density of $S_1$ is equal to $\rho$ and that we have saved a potential enough for achieving a density of $\rho$ for a sector of volume $\frac{0.20936}{\rho}U_{R_1} \approx 0.37319 U_{R_1}$.
	
		\item Construction of $S_2,\delta_2$: Lemma~\ref{lem:halfcirclefitsinpocket} implies that the midpoint of the last disk $r_{\ell}$ packed into~$R_1$ lies above the upper tangent $t_u$ of $r_1$, see Fig.~\ref{fig:r1large}(a). Combining Lemma~\ref{lemzipperoflengthatleastthree} and Lemma~\ref{lem:densityCones} implies that the part $S_2$ of $R_1$ that lies above the lower tangent of $r_1$ and below the center ray $c_{\ell}$ of $r_{\ell}$ has a density of $\rho$, see Fig.~\ref{fig:r1large}(a). The angle of the cone $C$ induced by $r_1$ is upper bounded by $\pi$, because $r_1 \leq \frac{1}{2}$. This implies that the angle between the upper tangent and the lower tangent of $r_1$ in counterclockwise order is lower bounded by $\pi$, see Fig.~\ref{fig:r1large}(a). Hence, the volume of $S_2$ is lower bounded by $\frac{\pi}{\arcsin \left( \frac{1}{3} \right)} U_{R_1} > 4.6222 U_{R_1}$. As~$S_2$ has a density of $\rho$, we move a potential of $\left( \rho - \frac{1}{2} \right)\abs{S_2} > \left( \rho - \frac{1}{2} \right) 4.6222 U_{R_1} > 0.28033 U_{R_1} =: \delta_2$ from $S_2$ to $\Delta$ while ensuring that $S_2$ is saturated and thus $R_1$ as well. The value of $\Delta$ is at least as large as the potential needed to achieve a density of $\rho$ for a sector of volume $(\frac{0.28033}{\rho} + 0.37319)\abs{S_1} >  (\frac{1}{2} + 0.37319)\abs{S_1} = (1-0.12681)\abs{S_1}$.
	 
	 	\item Construction of $S_3,\delta_3$: Let $S_3$ be the sector defined as that part of $C$ that is covered by sectors of the unique\footnote{There is only one zipper inside $R_1$, because we have assumed w.l.o.g.\ that $R_1$ is also minimal.} zipper inside $R_1$, see Fig.~\ref{fig:r1large}(b). We move the sum $\delta_3$ of the potentials payed by disks of the zipper to $S_3$ to $\Delta$. Lemma~\ref{lem:densityCones} implies that removing $\delta_3$ from~$S_3$ still ensures that $C$ is saturated.
	
		\item Lower-bounding $\Delta := \delta_1 + \delta_2 + \delta_3$: We lower bound the value of $\Delta$ in terms of that part~$A$ of $R_1$ that lies between the \new{lower} tangent $t_u$ of $r_1$ and the lid $\ell_1$ of $R_1$, see the blue sector in Fig.~\ref{fig:r1large}(b). In particular, we lower bound the volume $V_{\Delta}$ of an arbitrary sector such that $\frac{\Delta}{V_{\Delta}} \geq \rho$.
	
	Lemma~\ref{lem:upperboundlock} implies that the gap of $R_1$ has a volume of at most $1.07024 U_{R_1}$, implying $$V_{\Delta}\geq \abs{A} - 1.07024U_{R_1} + 0.87319 U_{R_1} = \abs{A} - 0.19705 U_{R_1}.$$
	
	Recall that $C$ is the cone induced by $r_1 = 0.495$.  Furthermore, we consider the intersection point $c = (c.x,c.y)$ of the boundary of $r_1$ and the inner boundary of $R_1$, where $c.x,c.y$ are the coordinates of $c$, see Fig.~\ref{fig:r1large}(b). As $c$ lies on the boundary of $r_1$, we have $\left(c.x - \frac{1}{2} \right)^2 + c.y^2 = \frac{1}{4}$, which is equivalent to $c.y^2 = c.x - c.x^2$. \new{Let $m$ be the center of the container disk}. Thus, $|mc| = \sqrt{c.x}$, implying that the angle $\beta$ between the center ray of $r_1$ and $\ell_1$ is $\arccos \left( \sqrt{c.x} \right)$.
	
	Furthermore, we obtain from $\abs{mc} = \sqrt{c.x}$ that the volume of $R_{1}$ is equal to $\pi \left( 1-c.x \right)$.
	
	We observe that the angle between the center ray of $r_1$ and the upper tangent of $r_1$ is $\frac{\pi}{2} - \arccos \left( \frac{r_1}{1-r_1} \right) \geq \frac{\pi}{2} - \arccos \left( \frac{0.495}{0.505} \right) \geq 0.4365 \pi$. This yields $\abs{A}\geq \frac{0.4365 \pi - \arccos (\sqrt{c.x})}{2 \pi} \pi (1-c.x)$.
	
	The angle $\alpha$ of the cone induced by $r_k$ is equal to $\arcsin \left( \frac{1- \sqrt{c.x}}{1+\sqrt{c.x}} \right)$, because $|mc| = \sqrt{c.x}$. Hence, we upper bound $\abs{S_1} \leq \frac{2\arcsin \left( \frac{1- \sqrt{c.x}}{1+\sqrt{c.x}} \right)}{2 \pi} \pi (1-c.x)$.
	
	We lower bound as follows. 
	
	\begin{eqnarray*}
		V_{\Delta} - \frac{1}{2}\arcsin \left( \frac{1}{3} \right) c.x &\geq & \phantom{-}|A| - 0.19705 \abs{S_1} - \frac{1}{2}\arcsin \left( \frac{1}{3} \right) c.x\\
			& \geq &\phantom{+} \frac{1}{2}\left( 0.4365 \pi - \arccos \left( \sqrt{c.x} \right) \right) \left( 1 - c.x \right)\\
			&&- 0.19705 \cdot \frac{1}{2}\left(2\arcsin \left( \frac{1- \sqrt{c.x}}{1+ \sqrt{c.x}} \right) \right) \left( 1 - c.x \right) \\
		&& - \frac{1}{2}\arcsin \left( \frac{1}{3} \right) c.x \eqqcolon f(c.x).
	\end{eqnarray*}
	Furthermore, we have $\sqrt{c.x} \in [\frac{1}{2}, \frac{3}{4}]$, because the width of $R_1$ is lower bounded by~$\frac{1}{4}$ and upper bounded by $\frac{1}{2}$.
	Thus, $c.x \in [\frac{1}{4},0.5625]$.

	We continue by proving $f(c.x) \geq 0$.
	Let \begin{align*}
		f_1(c.x) &\coloneqq \frac{1}{2} \cdot \left(0.4356\pi -\arccos\left(\sqrt{c.x}\right)\right)\cdot (1-c.x)\\
		f_2(c.x) &\coloneqq 0.19705 \cdot \arcsin\left(\frac{1-\sqrt{c.x}}{1+\sqrt{c.x}}\right)\cdot (1-c.x) + \frac{\arcsin(1/3)}{2}c.x,
	\end{align*}
	such that $f(c.x) = f_1(c.x) - f_2(c.x)$.
	Because $x$ is positive, we have \begin{align*}
		\frac{df_1(c.x)}{dc.x} &= -0.4365\pi + \overbrace{\frac{\sqrt{1-c.x}}{2\sqrt{c.x}}}^{\text{mon. decr.}} + \overbrace{\arccos\left(\sqrt{c.x}\right)}^{\text{mon. decr.}}\\
		\frac{df_2(c.x)}{dc.x} &= \frac{\arcsin(1/3)}{2} + 0.19705 \cdot \left(\frac{\sqrt{c.x}-1}{2c.x^{3/4}}+ \underbrace{\arcsin\left(1-\frac{2}{\sqrt{c.x}+1}\right)}_{\text{mon. incr.}}\right)\text{, and}\\
		\frac{\sqrt{c.x}-1}{2c.x^{3/4}} &= \frac{-1}{2c.x^{1/4}}\cdot\left(\frac{1}{\sqrt{c.x}}-1\right)\text{ is monotonically increasing due to }c.x<1.
	\end{align*}
	The first-order derivative of $f_1(c.x)$ is monotonically decreasing, and the first-order derivative of $f_2(c.x)$ is monotonically increasing on $c.x \in [\frac{1}{4},0.5625]$.
	Furthermore, we have \begin{alignat*}{3}
		\frac{df_1(c.x)}{dc.x}\Big(\frac{1}{4}\Big) &\approx 0.5447 > 0,&\ \frac{df_1(c.x)}{dc.x}\Big(0.5625\Big) &\approx -0.2048 < 0,\\
		\frac{df_2(c.x)}{dc.x}\Big(\frac{1}{4}\Big) &\approx -0.0364 < 0,&\ \frac{df_2(c.x)}{dc.x}\Big(0.5625\Big) &\approx 0.1037 > 0,
	\end{alignat*}
	which together with the monotonicity of the derivatives means that $f_1$ attains its minimum and $f_2$ attains its maximum on $c.x \in \left[\frac{1}{4},0.5625\right]$ at either $c.x = \frac{1}{4}$ or $c.x = 0.5625$.
	Therefore, we find that $\min f(c.x) \geq \min f_1(c.x) - \max f_2(c.x) \approx 0.1215 - 0.1079 > 0$.
	
	This means that moving the potentials $\delta_1,\delta_2,\delta_3$ from $S_1,S_2,S_3$ to $\Delta$ yields $\Delta \geq \frac{1}{4}\arcsin \left( \frac{1}{3} \right) r_m$ while maintaining that $S_1,S_2,S_3$ are saturated, implying that $R_1$ is saturated.
	\end{itemize}
	 This concludes the proof.
\end{proof}

In the proof of Lemma~\ref{lem:technical}, we require that there is a disk packed
into $R_i$ for each $i = 1,\dots,m$, see Lemma~\ref{lem:helping}. In order to
prove this, we need the following technical auxiliary lemma.

\begin{lemma}\label{lem:halfcirclefitsinpocket}
	Let $r_1 \geq 0.495$ placed adjacent to $\mathcal{C}$. The largest disk $r$ whose midpoint can be placed on a tangent $t$ of $r_1$ such that $r_1,r$ do not overlap has a radius of at least $\frac{1}{4}$, see Fig.~\ref{fig:halfdiskfits}.
\end{lemma}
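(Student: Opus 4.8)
The plan is to reduce the claim to a one-variable feasibility computation and finish with elementary algebra. First I would fix coordinates: take $\mathcal C$ to be the unit disk centered at the origin $m$, and observe that the tangents of $r_1$ are only defined when $m$ is not interior to $r_1$, so we may assume $0.495\le r_1\le \tfrac12$. Since $r_1$ is adjacent to $\mathcal C$, its center $m_1$ lies at distance $1-r_1$ from $m$; place it at $m_1=(1-r_1,0)$. Let $t$ be one of the two tangent rays of $r_1$ and let $p$ be its point of tangency with the boundary of $r_1$. Then $m_1p\perp t$ and $|m_1p|=r_1$, so Pythagoras in the triangle $m,m_1,p$ gives $|mp|=\sqrt{(1-r_1)^2-r_1^2}=\sqrt{1-2r_1}$, a quantity lying in $[0,0.1]$ on our range.

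Next I would parametrize the candidate disk $r$ by its radius $s$ and the distance $d=|mc|$ of its center $c$ from $m$ along $t$. Two conditions must hold: $r\subseteq\mathcal C$, i.e. $d+s\le 1$; and $r$ does not overlap $r_1$, i.e. $|c-m_1|\ge s+r_1$. Because $c$ lies on $t$ and $m_1p\perp t$, we have $|c-m_1|^2=r_1^2+(d-\sqrt{1-2r_1})^2$, so for $c$ on the far side of $p$ the non-overlap condition becomes $d\ge\sqrt{1-2r_1}+\sqrt{s(s+2r_1)}$. Hence a valid placement of a radius-$s$ disk exists iff $\sqrt{1-2r_1}+\sqrt{s(s+2r_1)}\le 1-s$. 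Squaring (both sides are nonnegative in the feasible range) and simplifying, the $s^2$-terms cancel and the inequality becomes $s\bigl(1+r_1-\sqrt{1-2r_1}\bigr)\le 1-r_1-\sqrt{1-2r_1}$. Since $1+r_1-\sqrt{1-2r_1}>0$, the largest admissible radius is
\[
 s_{\max}=\frac{1-r_1-\sqrt{1-2r_1}}{1+r_1-\sqrt{1-2r_1}}.
\]
Clearing this positive denominator, $s_{\max}\ge\tfrac14$ is equivalent to $4\bigl(1-r_1-\sqrt{1-2r_1}\bigr)\ge 1+r_1-\sqrt{1-2r_1}$, i.e. $3-5r_1\ge 3\sqrt{1-2r_1}$; both sides are nonnegative here, and squaring yields $25r_1^2-12r_1\ge 0$, i.e. $r_1\ge\tfrac{12}{25}=0.48$, which holds because $r_1\ge 0.495$. (If one prefers to avoid computing $s_{\max}$ exactly, it suffices to exhibit the placement $s=\tfrac14$ and verify $\sqrt{1-2r_1}+\tfrac12\sqrt{\tfrac14+2r_1}\le\tfrac34$; the left side is decreasing in $r_1$, hence at most its value $\approx 0.657$ at $r_1=0.495$.)

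The only point requiring care is the reduction step: I must confirm that placing $c$ beyond the tangency point $p$ is the branch yielding the largest disk. Placing $c$ between $m$ and $p$ instead forces $\sqrt{s(s+2r_1)}\le|mp|=\sqrt{1-2r_1}\le 0.1$, hence $s(s+2r_1)\le 0.01$ and $s$ far below $\tfrac14$, so that branch is irrelevant. I also need the squaring step to be reversible, which it is since both sides of $\sqrt{1-2r_1}+\sqrt{s(s+2r_1)}\le 1-s$ are nonnegative throughout the feasible range. Everything else is a short computation, so I do not anticipate a genuine obstacle.
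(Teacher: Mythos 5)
Your proof is correct and rests on the same elementary geometry as the paper's: place $m_1$ on a coordinate axis, use Pythagoras to locate the foot of the tangent at distance $\sqrt{(1-r_1)^2-r_1^2}=\sqrt{1-2r_1}$ from $m$, and then reason about how far along $t$ a disk of radius $1/4$ can sit. The difference is one of rigor rather than idea. The paper first reduces ``w.l.o.g.\ $r_1=0.495$'' with only an informal justification (that shrinking $r_1$ shrinks the room next to the tangent), and then verifies a single candidate placement of the radius-$\tfrac14$ disk at distance $\tfrac14$ from the boundary along $t$. You instead solve the feasibility problem exactly, obtaining the closed form
\[
 s_{\max}(r_1)=\frac{1-r_1-\sqrt{1-2r_1}}{1+r_1-\sqrt{1-2r_1}},
\]
and the clean equivalence $s_{\max}\ge\tfrac14\iff r_1\ge\tfrac{12}{25}$. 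This dispenses with the paper's monotonicity heuristic (indeed it proves it: $s_{\max}(0.48)=\tfrac14$ and $s_{\max}$ is increasing on $[0.48,\tfrac12]$), handles the whole parameter range at once, and also records the mild side condition $r_1\le\tfrac12$ needed for the tangents to exist, which the lemma statement leaves implicit. The branch analysis (center beyond versus before the tangency point) and the reversibility of the squaring step are the only subtleties, and you address both. Your verification checks out: at $r_1=0.495$ one gets $\sqrt{1-2r_1}=\tfrac1{10}$ and $s_{\max}\approx 0.29>\tfrac14$, matching the paper's numerical check $\sqrt{0.495^2+0.65^2}\approx 0.817>0.755$.
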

\begin{proof}
	W.l.o.g., we assume $r_1 = 0.495$, because this does not increase  the area between $t$ and $r_1$. Let $m_1$ be the midpoint of $r_1$, $c_1$ the orthogonal projection of $m_1$ onto $t$, and $c_2$ the intersection point of $t$ with the boundary of $\mathcal{C}$, see Fig.~\ref{fig:halfdiskfits}.
	
	\begin{figure}[h!]
  \begin{center}
      \includegraphics[height=4cm]{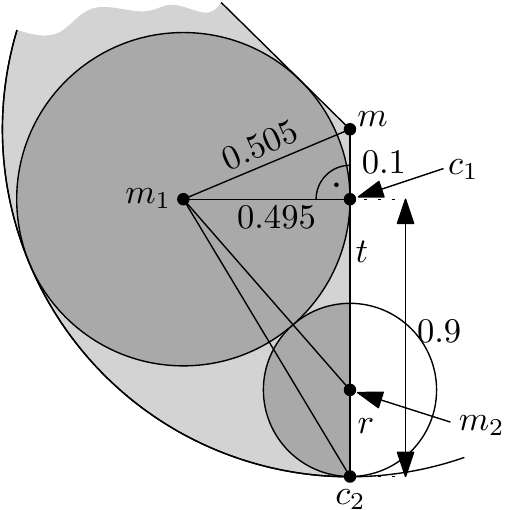} 
  \end{center}
  
  \caption{For $r_1 \geq 0.495$ placed adjacent to $D$, the largest disk $r_2$ with its midpoint on a tangent to $r_1$ has a radius of at least $\frac{1}{4}$.}
  \label{fig:halfdiskfits}
\end{figure}
	
	Pythagoras' Theorem implies $|m c_1| = \sqrt{0.505^2 - 0.495^2} = \frac{1}{10}$, which yields $|c_1 c_2| = \frac{9}{10}$. Assume $\abs{c_2m_2} = \frac{1}{4}$. Another application of Pythagoras' Theorem implies that $|m_1 m_2|$ is equal to $\sqrt{0.495^2 + 0.65^2} \approx 0.81702 > 0.505 + \frac{1}{4} = |m_1 m_2|$. This concludes the proof.
\end{proof}

The pocket below $r_1$ is large enough to ensure that there is at least one disk packed into~$R_{1}$ \new{allowing the following lemma}.

\begin{lemma}\label{lem:helping}
	For each $i = 1,\dots,m$, there is a disk $r_k$ packed into $R_{i}$.
\end{lemma}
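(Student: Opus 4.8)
The plan is to prove Lemma~\ref{lem:helping} by induction on $i$, starting with the base case $i=1$ already sketched in the paragraph preceding it: the pocket between $r_1$ and each tangent of $r_1$ is, by Lemma~\ref{lem:halfcirclefitsinpocket}, large enough to host a disk of radius at least $\frac14$; since $R_1$ is the maximal ring with the largest width (width at least $\frac14$ because $r_1\ge 0.495$ forces the inner radius of the first ring to be at most $\frac12$), the first unpacked disk $r_k$ entering Phase~3 fits into $R_1$ adjacent to its outer boundary, hence is packed into $R_1$.

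For the inductive step, assume there is a disk packed into each of $R_1,\dots,R_{i-1}$ and consider $R_i$. First I would argue that $R_i$ is nonempty by showing that when the algorithm reaches Phase~3 with $R_i$ as the open ring of largest inner radius, the disk $r_k$ that is next in line is small enough to be placed adjacent to the outer boundary of $R_i$: by construction (Phase~4), $R_i$ was created with width exactly $2r_{k'}$ for the disk $r_{k'}$ that triggered the split, and every disk not yet packed has radius at most $r_{k'}$, so it fits into $R_i$ in the width direction. The remaining point is that it also fits angularly — but since $r_k$ is placed adjacent to the outer boundary starting from polar angle zero (we may assume, as in Section~\ref{sec:ringpacking_ana}, the lower tangent of the first disk is at angle zero), there is no obstruction from previously packed disks in $R_i$, so $r_k$ is packed into $R_i$. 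I would also invoke Lemma~\ref{lem:largestdisknotadjacenttoboundary} or the reasoning behind Lemma~\ref{lem:atleastonesmall} to rule out the degenerate situation where all remaining disks have already been consumed by $R_1,\dots,R_{i-1}$: if that were the case, the total packed volume combined with the saturation of $O\setminus\mathcal{C}$ and of the outer rings would already exceed $\frac{\pi}{2}$, contradicting the bound on the input volume, so there is indeed a disk left to enter $R_i$.

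The main obstacle I anticipate is making the angular (as opposed to radial) fitting argument fully rigorous: one must show that the largest unpacked disk, when inserted adjacent to the outer boundary of $R_i$ at the smallest feasible polar angle, does not immediately collide with a disk packed in a neighboring ring or with $r_1$ itself — here Lemma~\ref{lem:halfcirclefitsinpocket} is the key input, since it certifies that the wedge-shaped region between $r_1$'s tangent and the container boundary accommodates a disk of radius $\ge\frac14$, hence certainly any disk of radius $< \frac14$ that is in line to be packed. A secondary subtlety is bookkeeping the ring widths: because the maximal rings $R_1,\dots,R_h$ are ordered by decreasing outer radius and, as noted at the start of Section~\ref{sec:appendixanalysisr1large}, also by decreasing width, the width of $R_i$ is at least the radius of the next disk to be packed into it, which is exactly what the radial fitting step needs. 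I would therefore organize the write-up as: (1) state the induction hypothesis; (2) verify a disk remains to be packed via the volume argument; (3) verify radial fit from the width ordering; (4) verify angular fit from Lemma~\ref{lem:halfcirclefitsinpocket} and the "smallest polar angle" rule of Ring Packing; and conclude that $r_k$ is packed into $R_i$.
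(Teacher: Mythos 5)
Your approach is genuinely different from the paper's, and it has a gap in the angular-fit step. The paper proves Lemma~\ref{lem:helping} by a density contradiction, not by a direct geometric fitting argument and not by citing Lemma~\ref{lem:halfcirclefitsinpocket} at all (that lemma is reserved for the proof of Lemma~\ref{lem:technical}). Concretely: assume toward a contradiction that no disk is packed into $R_1$, i.e.\ the first disk $r_k < \frac{1}{4}$ that should start the zipper does not fit. Then every disk $r_1,\dots,r_{k-1}$ has radius $\geq \frac{1}{4}$ and was placed by Boundary Packing adjacent to $\mathcal{C}$. By Lemma~\ref{lem:densityCones}, the cones induced by $r_2,\dots,r_{k-1}$ have density at least $\rho$ and the cone of $r_1$ (which has angle at most $\pi$) has density at least $\frac{1}{2}$. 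Since $r_k$ does not fit, the angular gap left uncovered by these cones is smaller than $\arcsin\!\left(\frac{r_k}{1-r_k}\right) < \arcsin\!\left(\frac{1}{3}\right)$, so the cones cover nearly the entire disk and the total packed volume already exceeds $\frac{\pi}{2}$, contradicting the input bound.

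The gap in your proposal is in step (4). Lemma~\ref{lem:halfcirclefitsinpocket} certifies only that a disk of radius $\geq \frac{1}{4}$ \emph{can} be centered on a tangent of $r_1$ without overlapping $r_1$; it says nothing about whether that wedge is still \emph{unoccupied} when Ring Packing runs. The scenario you must exclude is precisely that the Boundary-Packing disks (each of radius $\geq \frac{1}{4}$) have wrapped around so far that the last one, $r_{k-1}$, sits nearly tangent to $r_1$ on the far side, leaving an angular gap too narrow for $r_k$. To rule this out you need the same volume/density argument you already sketch for the ``no disks remain'' degenerate case — applied instead to the ``disk present but does not fit'' case — which is exactly what the paper does, making your invocation of Lemma~\ref{lem:halfcirclefitsinpocket} superfluous. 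A secondary correction: the maximal rings $R_1,\dots,R_m$ are created in Phase~3 (when no open ring exists), not by the splitting step of Phase~4; the splitting step produces sub-rings of an existing ring, which are therefore not maximal. Your observation that the width of a newly created maximal ring equals twice the radius of the disk that triggers its creation is nonetheless correct, and that disk trivially fits radially.
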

\begin{proof}
	W.l.o.g., we assume $i = 1$. Assume there is no disk packed
into~$R_{1}$, i.e., there is a disk $r_k < \frac{1}{4}$ that did not fit
into~$R_{1}$. This implies that all previously packed disks $r_1,\dots,r_{k-1}$
have radii lower bounded by~$\frac{1}{4}$. Thus, $r_1,\dots,r_{k-1}$ are packed
by \new{Boundary} Packing in Phase 1 adjacent to the container disk.
Lemma~\ref{lem:densityCones} implies that the cones induced by
$r_2,\dots,r_{k-1}$ have densities lower bounded by $\rho$, because $r_2 \leq
0.495$. As the cone induced by $r_1$ has an angle of at most $\pi$, the union
$M$ of the cones induced by $r_1,\dots,r_k$ has a density of $\rho$. As $r_k <
\frac{1}{4}$ does not fit between $r_{k-1}$ and $r_1$, the angle between the
upper tangents of $r_{k-1}$ and $r_1$ is smaller than $\arcsin \left(
\frac{r_k}{1-r_k} \right) < \arcsin \left( \frac{1}{3} \right)$. Hence, the
angle of $M$ is larger than $\pi - \arcsin \left( \frac{1}{3} \right)$. Thus,
the potential payed to $\Delta$ is lower bounded by $\pi - \arcsin \left(
\frac{1}{3} \right)$

	 This implies that the entire container disk is saturated, which implies that the total volume of $r_1,\dots,r_n$ is larger than the half of the container disk. This concludes the proof.
\end{proof}

	Next, we consider the case $m >1$, i.e., there are at least two rings with outer radius not smaller than $\frac{1}{4}$. Recall that $r_m$ is the inner radius of $R_m$.

\begin{lemma}\label{lem:removepotentialssmallrings}
	Let $m > 1$. We can remove a potential of $\frac{1}{4} \arcsin \left( \frac{1}{3} \right)r_m^2$ from $R_1,\dots,R_m$ while guaranteeing that $R_1, \dots,R_m$ are saturated. 
\end{lemma}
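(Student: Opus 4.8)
The plan is to reduce the case $m > 1$ to the single-ring argument of Lemma~\ref{lem:technical} by treating the rings $R_1,\dots,R_m$ essentially independently, extracting a small slice of surplus potential from each of them and summing the contributions. First I would note that, because $0.495 \le r_1 \le \frac{1}{2}$, the only ring whose covering cones/zippers could fail to have density better than $\rho$ on their own is the ring (or rings) that the cone $C$ induced by $r_1$ passes through; for all other rings $R_i$ the combination of Lemma~\ref{lem:densityCones} (applied to disks with radius $\le 0.495$) and Corollary~\ref{cor:densityzippersectors} already gives density $\ge \rho$ on the part $R_i \setminus G_i$ that is not a gap, exactly as in Corollary~\ref{cor:ringpartrho}. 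Since $r_1 \le \frac{1}{2}$ its cone has angle at most $\pi$, and its inner radius is at least $\frac12$, so $C$ reaches down at most to radius $\frac12$; hence $C$ is contained in $R_1 \cup \dots \cup R_m$ (the rings with outer radius $\ge \frac14$, equivalently inner radius $\ge$ some value) and in particular $C$ influences only these rings.

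Next I would handle the ``problematic'' part. Only in $R_1$ (the maximal ring, of width $\ge \frac14$) does $C$ behave like the density-$\frac12$ cone analyzed in Lemma~\ref{lem:technical}; in $R_2,\dots,R_m$ the portion of $C$ that intrudes is a thin slice whose density we can still lower-bound by $\rho$ using Lemma~\ref{lem:densityCones} on the truncated disk, or we simply absorb it into the zipper/cone bookkeeping. So the construction of $S_1,\delta_1$, $S_2,\delta_2$, $S_3,\delta_3$ from the proof of Lemma~\ref{lem:technical} goes through verbatim for $R_1$, yielding a movable potential of at least $\frac14 \arcsin\!\big(\frac13\big) r_m^2$ — wait: in the $m=1$ case the quantity extracted is $\frac14\arcsin(\frac13) r_m$ where $r_m$ is the inner radius of $R_1$; for $m>1$ the inner radius $r_m$ of $R_m$ is \emph{smaller}, so the required target potential $\frac14\arcsin(\frac13) r_m^2$ is correspondingly \emph{smaller}. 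I would exploit exactly this: the same surplus that Lemma~\ref{lem:technical}'s argument produces inside $R_1$ (which scales with the \emph{outer} radius of $R_1$, i.e.\ with $1$) dominates $\frac14\arcsin(\frac13) r_m^2$ since $r_m \le 1$. Concretely, I would re-run the $f(c.x)\ge 0$ computation with $r_m$ now the inner radius of $R_m$ rather than of $R_1$, observing that replacing $r_m$ by the smaller value only relaxes the inequality, so the monotonicity/endpoint checks already carried out suffice; the surplus $\Delta \ge \frac14\arcsin(\frac13) r_m^2$ then follows while $R_1$ (hence, together with the already-$\rho$-dense rings $R_2,\dots,R_m$) remains saturated.

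The main obstacle I anticipate is making precise the claim that the rings $R_2,\dots,R_m$ are genuinely saturated ``for free,'' since the cone of $r_1$ does slice through them and a naive application of Lemma~\ref{lem:densityCones} requires the disk to lie adjacent to the \emph{current} container; here $r_1$ is adjacent to $\mathcal C = O$, so the truncation of $C$ by the inner boundary of $R_i$ needs a separate monotonicity observation (the density of the truncated cone is at least that of the full cone, which is $\ge \rho$ for $r_1 \le 0.495$ but only $\ge \frac12$ at $r_1 = \frac12$). I would resolve this by noting that for $R_2,\dots,R_m$ we only need density $\ge \rho$ on $R_i \setminus G_i$ and that there the slice of $C$ can be \emph{replaced} by zipper sectors or by the cone of a smaller disk without loss — or, failing that, the deficit in $R_2,\dots,R_m$ is itself of order $r_m^2$ and can be folded into the same $\Delta$-budget. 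Once that bookkeeping is pinned down, summing the (single, dominant) contribution from $R_1$ with the non-negative contributions from $R_2,\dots,R_m$ gives $\Delta \ge \frac14\arcsin(\frac13) r_m^2$ with all of $R_1,\dots,R_m$ saturated, which is the assertion of the lemma.
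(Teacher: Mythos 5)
Your decomposition is structurally different from the paper's and has a genuine gap. The paper does not analyze $R_1,\dots,R_m$ ring-by-ring; it observes that because the widths of $R_1,\dots,R_m$ are all below $\frac14$ (as $R_1,\dots,R_{m-1}$ lie above radius $\frac34$ and the rings are ordered by decreasing width), the \emph{union} $R_1\cup\dots\cup R_m$ covers $R[1,\tfrac34]$ and lies inside $R[1,\tfrac12]$ --- i.e.\ it is a ring-shaped region of exactly the same shape profile as the single ring $R_1$ in Lemma~\ref{lem:technical}. The argument of Lemma~\ref{lem:technical} is then rerun on this union, not on $R_1$ in isolation.

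Your plan of applying Lemma~\ref{lem:technical} ``verbatim'' to $R_1$ alone breaks down precisely because, when $m>1$, $R_1$ does \emph{not} have width $\ge \frac14$: its inner radius is at least $\frac34$, so the parameter $c.x$ from Lemma~\ref{lem:technical}'s proof satisfies $c.x = r_{\mathrm{in}}^2 \ge 0.5625$, which is the endpoint of the interval $[\tfrac14, 0.5625]$ on which the inequality $f(c.x)\ge 0$ was established. The surplus you would extract from $R_1$ alone therefore degenerates to zero at the very configurations that arise when $m>1$, so the claim ``the same surplus that Lemma~\ref{lem:technical}'s argument produces inside $R_1$ dominates $\frac14\arcsin(\tfrac13) r_m^2$'' does not hold ring-by-ring. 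Moreover, your treatment of $R_2,\dots,R_m$ remains unresolved: each of these rings (and each minimal ring inside them) contributes its own gap, and the cone of $r_1$ --- which you correctly note has density only $\ge \tfrac12$, not $\ge\rho$ --- passes through all of them; the remark that ``the deficit is of order $r_m^2$ and can be folded into $\Delta$'' would need to be made quantitative, and that is exactly the bookkeeping the paper's union argument is designed to avoid. To repair your approach you would essentially be forced to aggregate the rings into a single region before bounding gaps and surpluses, which recovers the paper's argument.
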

\begin{proof}
	As the outer radius of $R_m$ is at least as large as $\frac{3}{4}$, the widths of all $R_1,\dots, R_k$ are upper bounded by $\frac{1}{4}$. Thus, the union of all rings $R_1,\dots, R_m$ covers the ring $R[1,\frac{1}{4}]$ and lies inside the ring $R[1,\frac{1}{2}]$. Hence, applying the same approach as used in Lemma~\ref{lem:technical} implies that we can remove a potential of $\frac{1}{4} \arcsin \left( \frac{1}{3} \right)r_m^2$ from $R_1,\dots,R_m$ while guaranteeing that $R_1, \dots,R_m$ is saturated. This concludes the proof.
%
\end{proof}

Let $C_\Delta$ be the cone with apex at $m$, angular radius of $\arcsin \left( \frac{1}{3} \right)$, and radius $r_m$ such that $C_\Delta$ touches the upper tangent $t_1$ of $r_1$ from below, see Fig.~\ref{fig:r1large3}(b). 

\begin{figure}[h!]
  \begin{center}
      \includegraphics[width=14cm]{} 
  \end{center}
  
  \caption{If there is no disk packed into $R_h$, the complete disk $r_{\text{out}}$ is saturated.}
  \label{fig:r1large3}
\end{figure}

Combining Lemma~\ref{lem:technical} and Lemma~\ref{lem:removepotentialssmallrings} yields the following.

\begin{corollary}\label{cor:removingpotentials}
	We can move a potential of $\frac{1}{4} \arcsin \left( \frac{1}{3} \right)r_m^2$ from~$R_1,\dots,R_m$ to potential function~$\Delta$ while guaranteeing that $R_1,\dots,R_m$ are saturated. 
\end{corollary}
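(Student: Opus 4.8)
The plan is to derive the corollary by a straightforward case distinction on the value of $m$, using the two lemmas that immediately precede it. First I would note that, under the standing assumption of Section~\ref{sec:appendixanalysisr1large} that some maximal ring has inner radius smaller than $\frac{1}{2}$, the ring $R_m$ is well defined — it is the maximal ring with outer radius larger than $\frac{3}{4}$ and inner radius at most $\frac{3}{4}$ — so $m$ is a well-defined positive integer and the two cases $m=1$ and $m>1$ are exhaustive and mutually exclusive.

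In the case $m=1$, Lemma~\ref{lem:technical} applies verbatim and states that a potential of $\frac{1}{4}\arcsin\!\left(\frac{1}{3}\right) r_m^2$ can be moved from $R_1$ (which is all of $R_1,\dots,R_m$) to $\Delta$ while $R_1$ remains saturated. In the case $m>1$, Lemma~\ref{lem:removepotentialssmallrings} applies verbatim and gives the same conclusion for the whole collection $R_1,\dots,R_m$. Since both lemmas name exactly the quantity $\frac{1}{4}\arcsin\!\left(\frac{1}{3}\right) r_m^2$ and both guarantee that $R_1,\dots,R_m$ stay saturated after the transfer, combining the two cases yields precisely the statement of the corollary.

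I do not expect any genuine obstacle at this step: all of the geometric accounting has already been carried out inside the two lemmas — the $S_1,S_2,S_3$ construction and the monotonicity analysis of $f(c.x)$ on $[\tfrac14,0.5625]$ for $m=1$, and the reduction of the multi-ring configuration to that single-ring analysis via the inclusions $R[1,\tfrac14]\subseteq R_1\cup\dots\cup R_m\subseteq R[1,\tfrac12]$ for $m>1$. The only thing that requires a moment's care is bookkeeping: checking that $m\ge1$ always holds in this regime and that the transferred potential is indeed the quantity later needed at the cone $C_\Delta$ of Fig.~\ref{fig:r1large3}, but both are already implicit in how $R_m$ and $\Delta$ were set up.
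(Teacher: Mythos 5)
Your proof is correct and matches the paper's own argument: the corollary is obtained by the case split $m=1$ (Lemma~\ref{lem:technical}) versus $m>1$ (Lemma~\ref{lem:removepotentialssmallrings}), each of which already produces the exact potential $\frac{1}{4}\arcsin\left(\frac{1}{3}\right)r_m^2$ while keeping $R_1,\dots,R_m$ saturated. The paper simply says ``Combining Lemma~\ref{lem:technical} and Lemma~\ref{lem:removepotentialssmallrings} yields the following,'' which is precisely your case analysis.
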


Corollary~\ref{cor:removingpotentials} implies that we can add a potential of
$\frac{1}{4} \arcsin \left( \frac{1}{3} \right) r_m^2$ to $C_\Delta$. This
yields that $C_\Delta$ is saturated, allowing us to show that all the remaining
maximal rings $R_{m+1}, \dots, R_h$ are also saturated.

\begin{lemma}\label{lem:r1middlefinal}
	The rings $R_{m+1}, \dots, R_h$ are saturated.
\end{lemma}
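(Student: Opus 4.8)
The plan is to follow the argument of Lemma~\ref{lem:r1smalldenserings}, using the saturated cone $C_\Delta$ in place of the high-density unit sector $S_2$ that did the work in the case $r_1\le 0.495$. I would argue by induction over the maximal rings $R_{m+1},\dots,R_h$ taken in order of decreasing outer radius; since every ring is partitioned into minimal rings, it suffices to prove that each minimal ring $R$ contained in $R_{m+1}\cup\dots\cup R_h$ is saturated, assuming everything outside $R$ already is. The base of the induction is that $O\setminus r_m$ is saturated: the cones of the disks Boundary Packing puts along the boundary of $\mathcal{C}$ have density at least $\rho$ (Lemma~\ref{lem:densityCones}), the cone induced by $r_1$ is saturated (Lemma~\ref{lem:densityCones}, as $r_1\in[0.495,\tfrac12]\subset[0.2019,\tfrac12]$), the rings $R_1,\dots,R_m$ are saturated by Corollary~\ref{cor:removingpotentials}, and $C_\Delta$ is saturated by the potential $\tfrac14\arcsin(\tfrac13)r_m^2$ moved into it; together these cover $O\setminus r_m$.

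Fix such a minimal ring $R:=R[r_{\text{out}},r_{\text{in}}]$. By Observation~\ref{obs:minimalringcoveredbysectorsandgap}, $R$ is covered by the cones induced by the disks Boundary Packing puts into $R$, by the sectors of the zippers Ring Packing puts into $R$, and by the gap $G$ of $R$; in addition, the part of $R$ covered by $r_1$ lies in the cone induced by $r_1$, and the lid of $R$ — which sits immediately counterclockwise of the upper tangent $t_1$ of $r_1$ — lies in $C_\Delta$, because $C_\Delta$ touches $t_1$ from below, has angular radius $\arcsin(\tfrac13)$, and has radius $r_m\ge r_{\text{out}}$, so the lid-side part of $G$ is contained in $C_\Delta$. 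By Lemma~\ref{lem:densityCones} and Corollary~\ref{cor:densityzippersectors} the cones and zipper sectors placed into $R$ have density at least $\rho$ (every disk placed into $R$ has radius at most $0.495$), the part of $R$ inside the cone induced by $r_1$ or inside $C_\Delta$ has density at least $\tfrac12$, and Lemma~\ref{lem:upperboundlock} gives $|G|\le 1.07024\,U_R$.

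Then I would balance potentials exactly as in Lemma~\ref{lem:r1smalldenserings}: partition $R$ into (i) its intersection with the cone induced by $r_1$ together with $C_\Delta$, which already carries density at least $\tfrac12$ and needs no help; (ii) the portion covered by the $\rho$-dense cones and zipper sectors, from which a surplus of $(\rho-\tfrac12)$ times its volume can be harvested; and (iii) the part of $G$ not yet absorbed by $C_\Delta$, whose deficit is at most $\tfrac12$ of its volume. Using $|R|\ge\frac{2\pi}{2\arcsin(1/3)}U_R>9.24441\,U_R$, the bound $|R\cap(\text{cone of }r_1)|\le\tfrac12|R|$ (the cone induced by $r_1$ has angle at most $\pi$ since $r_1\le\tfrac12$), the fact that $R\cap C_\Delta$ is at most half of a unit sector of $R$, and $|G|\le1.07024\,U_R$, one lower-bounds the volume of the $\rho$-dense part in (ii); multiplying by $\rho-\tfrac12$ then shows that this surplus dominates the deficit of (iii), so $R$ is saturated. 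In the degenerate subcase where no disk is packed into $R_h$, all input disks have already been placed into the saturated outer region, the disk bounded by the outer boundary of $R_h$ is covered by the saturated cone induced by $r_1$ together with $C_\Delta$ and is therefore saturated (Fig.~\ref{fig:r1large3}(b)), and a final application of Lemma~\ref{lem:largestdisknotadjacenttoboundary} to the empty central disk packs the remaining disks.

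The step I expect to be the main obstacle is the quantitative estimate in the third paragraph: as $r_1\to\tfrac12$ the cone induced by $r_1$ eats an ever larger share of each deep ring and shrinks the $(\rho-\tfrac12)$-surplus, so one must check carefully — controlling how much of $G$ the fixed-angle cone $C_\Delta$ absorbs and how $|R\cap(\text{cone of }r_1)|$ grows as the ring's radius shrinks toward $1-2r_1$ — that the harvested surplus still covers the residual gap deficit for the innermost rings, which is the tightest configuration here.
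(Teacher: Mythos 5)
Your high-level framing is right and matches the paper's structure: handle the degenerate sub-case of a ring with no packed disk first, then balance surplus against the gap deficit, using the fact that the cone of $r_1$ has density at least $\tfrac12$ rather than $\rho$, with $C_\Delta$ compensating. However, there is a genuine gap in the balancing step, and it is exactly where you flag nervousness.

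The paper's proof of Lemma~\ref{lem:r1middlefinal} uses \emph{three} surplus sources, not two: (a) the high-density zipper-of-length-one sector $S_1$ (a disk touching both boundaries of $R_h$, density $\ge 0.77036$ by Lemma~\ref{lem:zipperlengthone}, yielding $\delta_1 = 0.27036\,U_{R_h}$); (b) the $\rho$-dense part $S_2$, yielding $\delta_2 > 0.21542\,U_{R_h}$; and (c) $S_3 = C_\Delta\cap R_h$, yielding $\delta_3 = \tfrac14 U_{R_h}$. Together $\Delta \approx 0.73578\,U_{R_h}$ saturates a region of volume $\approx 1.47\,U_{R_h} > 1.07024\,U_{R_h} \ge |G|$. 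You describe replacing the $0.77036$-dense unit sector by $C_\Delta$ rather than using both, and accordingly your part (ii) harvests only $(\rho-\tfrac12)$ per unit volume. That is roughly $0.06$, so even a generous estimate of $|(\text{ii})| \approx 2.5$--$3.5\,U_R$ gives a surplus around $0.15$--$0.21\,U_R$, while the deficit of the gap not absorbed by $C_\Delta$ can be as large as $\tfrac12 |G| \approx 0.535\,U_R$. The only way to close this without $S_1$ would be to prove quantitatively that $C_\Delta$ contains most of $G$, which you assert loosely (``the lid-side part of $G$ is contained in $C_\Delta$'') but do not establish, and which the paper deliberately avoids: it only moves $\tfrac14 U_{R_h}$ of potential from $C_\Delta$ into the balance and does not claim containment. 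Finally, the auxiliary claim that $R\cap C_\Delta$ is ``at most half of a unit sector'' does not appear justified: $C_\Delta$ has angular width $2\arcsin(\tfrac13)$, which equals the angular width of a unit sector when $R$ has width $\tfrac12$ and exceeds it for narrower rings, so this intersection can be a full unit sector or more. To fix the proof, re-introduce the high-density sector of the disk touching both boundaries (guaranteed to exist once you know a disk is packed into $R_h$), harvest the surplus $(0.77036-\tfrac12)|S_1|$ from it, and combine with the $\rho$-surplus from the remaining covered part and the $\tfrac14 U_{R_h}$ from $C_\Delta$, exactly as in the paper's accounting.
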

\begin{proof}
	The proof is by induction. Assume that the rings $R_{m+1},\dots,R_{h-1}$ are saturated. Let~$r_{\text{out}}, r_{\text{in}}$ be the outer and inner radius of $R_h$. W.l.o.g., we assume $r_{\text{out}} = 1$.
	
	First we show that there is a disk packed into $R_h$. From this we deduce that $R_h$ is saturated.
	
	\begin{itemize}
		\item For the sake of contradiction, assume there is no disk packed into $R_h$. This means that the disk $r_{\text{crash}}$ that did not fit into $R_h$, is responsible for the construction of $R_h$. Thus, all disks packed previously adjacent to $r_{\text{out}}$ have radii at least as large as $\frac{1}{2}$. Hence, the sector $A$ of $r_{\text{out}}$ between the upper tangent $t_1$ of $r_1$ and the upper tangent $t_2$ of the disk placed last adjacent to $r_{\text{out}}$ (see the white cone in Fig.~\ref{fig:r1large3}(a)) is smaller than the cone induced by $r_{\text{crash}}$, see the red cone in Fig.~\ref{fig:r1large3}(a). Lemma~\ref{lem:densityCones} implies $r_{\text{cash}} < 0.2019 r_{\text{out}}$. Otherwise, the total volume of $r_1,\dots,r_n$ would be larger than $\frac{\pi}{2}$, because the red cone would be saturated in case of $r_{\text{cash}} \geq 0.2019$. This is a contradiction.

	Let $C_2$ be the cone with apex at $m$, an angle of $2\arcsin \left( \frac{0.2019}{0.7981} \right) - \arcsin \left( \frac{1}{3} \right)$, and radius $r_{\text{out}}$ such that $C_2$ touches $C_{\Delta}$ from below, see Fig.~\ref{fig:r1large3}(b). Furthermore, let $S$ be that part of $R_{h}$ that lies above the lower tangent of $r_1$ and below $t_2$, see the red cone in Fig.~\ref{fig:r1large3}(c). As $r_2,r_3,... \leq 0.495$, Lemma~\ref{lem:densityCones} implies that  $S$ has a density of at least $\rho$. As $r_1 \leq \frac{1}{2}$, the cone induced by $r_1$ has an angle of at most~$\pi$. Furthermore, the angle of $A$ is upper bounded by $2\arcsin \left( \frac{0.2019}{0.5 + 0.2019} \right)$, because $r_{\text{crash}} \leq 0.2019$. This implies that $S$ realizes an angle of at least $\pi - 2\arcsin \left( \frac{0.2019}{0.5 + 0.2019} \right)$. Hence, we move a potential of $\left(\rho - \frac{1}{2}\right)\frac{\pi - 2\arcsin \left( \frac{0.2019}{0.5 + 0.2019} \right)}{2 \pi} \pi r_{\text{out}}^2 \geq \frac{0.1549}{2 \pi} \pi r_{\text{out}}^2$ from $S$ to $C_2$, maintaining that $S$ is saturated. Furthermore, $C_2$ is saturated, because $C_2$ has an angle of $2\arcsin \left( \frac{0.2019}{0.5 + 0.2019} \right) \leq 0.2437 < 2 \cdot 0.1549$.
	
	As the angle of $A$ is upper bounded by $2 \arcsin \left(
\frac{0.2019}{0.7019} \right)$, which is smaller than the sum of the angles of
$C_1$ and $C_2$, we obtain that $S$ has a density larger than $\frac{1}{2}$. 
By induction the remaining part of the entire packing container is saturated, so we
conclude that the total input volume is larger than half of the container volume,
which is a contradiction.
	
	\begin{figure}[h!]
  \begin{center}
      \includegraphics[scale=1]{} 
  \end{center}
  
  \caption{Guaranteeing that the ring $R_h$ is saturated by upper bounding the volume of the sectors~$E, F$.}
  \label{fig:r1large4}
\end{figure}
	
	\item Lemma~\ref{lem:upperboundlock} implies that the part $G$ of the gap of $R_h$ that lies not inside the cone induced by $r_1$ has a volume of at most $1.07024 U_{R_h}$. In the following, we construct three sectors $S_1,S_2,S_3$ and move certain potentials $\delta_1,\delta_2,\delta_3$ from $S_1,S_2,S_3$ to potential function~$\Delta$. Finally, we show that after moving $\Delta$ to $G$ yields that $G$ is saturated.
	
	\begin{itemize}
		\item Construction of $S_1,\delta_1$: As there is a disk packed into $R_h$, there is a disk $r_k$ touching the inner and the outer boundary of $R_h$, see Fig.~\ref{fig:r1large4}. Let $S_1$ be the sector of $R_h$ between the lower and the upper tangent of $r_k$, see the red sector in Fig.~\ref{fig:r1large4}. Lemma~\ref{lem:zipperlengthone} implies that $S_1$ has a density of at least $0.77036$. Thus, we move a potential of $\left( 0.77036 - \frac{1}{2} \right) \abs{S_1} = 0.27036 U_{R_h}$ from $S_1$ to $\Delta$, which still ensures that $S_1$ is saturated.
	
		\item Construction of $S_2,\delta_2$: Let $S_2$ be the part of $R_h$ that lies between the lower and the upper tangent of $r_1$ and which is covered by sectors of disks already packed by disk or Ring Packing, see the red sector in Fig.~\ref{fig:r1large4}. Corollary~\ref{cor:ringpartrho} implies that $S_2$ as a density of $\rho$. As the angle of the cone induced by $r_1$ is upper bounded by $\pi$, we obtain $\abs{S_2} \geq \frac{\pi}{\pi} \pi \left( 1 - \left( 1 - r_{\text{in}} \right)^2 \right) - 1.07024 U_{R_h}$, which is lower bounded by $3.55196 U_{R_h}$, because $r_{\text{in}} \leq \frac{1}{2}$. Thus, we move a potential of $\delta_2 := \left( \rho - \frac{1}{2} \right)3.55196 U_{R_h} > 0.21542 U_{R_h}$ from $S_2$ to $\Delta$. This yields that $S_2$ is still saturated.
		
		\item Construction of $S_3, \delta$: Let $S_3$ be the intersection of~$C_{\Delta}$ and $R_h$. By construction of~$C_{\Delta}$, see above Corollary~\ref{cor:removingpotentials}, we can move a potential of $\frac{U_{R_h}}{4}$ from $S_3$ to $\Delta$, while ensuring that $R_h \setminus G$ is saturated.
		
		\item Lower-bounding $\Delta$: By the construction of $\delta_1,\delta_2,\delta_3$, we have $\Delta = \delta_1 + \delta_2 + \delta_3$ which is lower-bounded by $\left( 0.27036 + 0.21542 + \frac{1}{4} \right) U_{R_h} = 0.73578 U_{R_h}$, which is enough to saturate a sector of volume $1.47156 U_{R_h}$. As the volume of $G$ is upper bounded by $1.07024 U_{R_h}$, moving $\Delta$ to $G$ saturates $G$ and thus the entire ring $R_h$. This concludes the proof.
	
	\end{itemize}

	\end{itemize}
	
\end{proof}

Combining Lemmas~\ref{lem:removepotentialssmallrings} and~\ref{lem:r1middlefinal} implies that all rings are saturated.

\begin{corollary}\label{cor:ringssaturatedr1middle}
	If $r_1 \geq 0.495$, all rings are saturated
\end{corollary}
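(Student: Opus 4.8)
The plan is to read this off from the two potential-harvesting lemmas about maximal rings established above, once the easy sub-case has been peeled away. Throughout we are in the regime of this section, $0.495 \le r_1 \le \tfrac{1}{2}$ (the complementary range $r_1 > \tfrac{1}{2}$ is handled in the next subsection by a half-disk reduction, Lemma~\ref{lem:alldisksfitthree}). Let $R_1,\dots,R_h$ be the maximal rings inside the container, ordered by decreasing outer radius, so that the inner radius of $R_i$ is the outer radius of $R_{i+1}$, the widths are non-increasing, and $R_1$ reaches the boundary of the (unit) container, i.e.\ has outer radius $1$. If every $R_i$ has inner radius larger than $\tfrac{1}{2}$, then Lemma~\ref{lem:allmaximalringssmall1} already finishes the argument: the last disk of every zipper lies above the upper tangent of $r_1$, so Lemma~\ref{lemzipperoflengthatleastthree} makes every minimal ring saturated and all disks get packed. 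Hence we may assume some maximal ring has inner radius at most $\tfrac{1}{2} < \tfrac{3}{4}$; since $R_1$ has outer radius $1 > \tfrac{3}{4}$, walking down the list yields a unique $R_m$ whose outer radius exceeds $\tfrac{3}{4}$ while its inner radius $r_m$ does not.

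In this case I would simply assemble the pieces. Corollary~\ref{cor:removingpotentials}, which combines Lemma~\ref{lem:technical} for $m = 1$ with Lemma~\ref{lem:removepotentialssmallrings} for $m > 1$, shows that $R_1,\dots,R_m$ are saturated and that while doing so we may still set aside a potential of $\tfrac{1}{4}\arcsin(\tfrac{1}{3})\, r_m^2$. This is exactly the amount needed to saturate the auxiliary cone $C_\Delta$ of angular radius $\arcsin(\tfrac{1}{3})$ and radius $r_m$ that touches the upper tangent of $r_1$ from below; transferring the potential to $C_\Delta$ and then running the induction of Lemma~\ref{lem:r1middlefinal}, which ring by ring uses the saturation of $C_\Delta$ to fill the portion of each gap not already covered by the cone of $r_1$ (together with the surplus extracted from a disk spanning both boundaries and from the zipper sectors), shows that $R_{m+1},\dots,R_h$ are saturated as well. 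Hence every maximal ring is saturated, and since each of these lemmas in fact argues about the gap of the relevant minimal ring, so is every minimal ring; as every ring is a union of minimal rings, all rings are saturated. Together with a final Boundary Packing pass (Lemma~\ref{lem:largestdisknotadjacenttoboundary}) this is exactly what Lemma~\ref{lem:alldisksfit2} needs.

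The substance sits inside Lemmas~\ref{lem:technical}, \ref{lem:removepotentialssmallrings} and~\ref{lem:r1middlefinal}, which we may invoke, so the only obstacle at this level is bookkeeping. One has to confirm that the case split ``all inner radii $>\tfrac{1}{2}$'' versus ``some inner radius $\le \tfrac{1}{2}$'', refined by ``$m=1$'' versus ``$m>1$'', is exhaustive; that $R_m$ is well defined, i.e.\ the rings tile a range of radii dropping below $\tfrac{3}{4}$ and $r_1 \ge 0.495$ keeps the first ring wide enough for this tiling to be non-degenerate; and, most delicately, that the potential $\tfrac{1}{4}\arcsin(\tfrac{1}{3})\, r_m^2$ handed to $C_\Delta$ is genuinely fresh, i.e.\ not the same surplus that Lemma~\ref{lem:r1middlefinal} later re-extracts from $R_{m+1},\dots,R_h$ as its own $\delta_1,\delta_2,\delta_3$. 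Once those accounting points are checked, the corollary is immediate from Corollary~\ref{cor:removingpotentials} together with Lemma~\ref{lem:r1middlefinal}.
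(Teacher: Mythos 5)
Your proposal is correct and follows essentially the same route as the paper: the paper's own justification is a one-line statement that combining Lemma~\ref{lem:removepotentialssmallrings} (really Corollary~\ref{cor:removingpotentials}, so as to include the $m=1$ case of Lemma~\ref{lem:technical}) with Lemma~\ref{lem:r1middlefinal} saturates all maximal rings, relying on the already-established case split via Lemma~\ref{lem:allmaximalringssmall1}. You make the case split explicit and cite the corollary rather than only the $m>1$ lemma, which is more precise than the paper's phrasing, but the substance is identical.
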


Combining Lemma~\ref{lem:largestdisknotadjacenttoboundary} and Corollary~\ref{cor:ringssaturatedr1middle} by the same approach as used in the proof of Lemma~\ref{lem:alldisksfit} implies that all disks are packed concluding the proof of Lemma~\ref{lem:alldisksfit2}.
	
	\restatethm{\lemalldisksfittwo*}{lem:alldisksfit2}

\section{Details of the Analysis for the Case $\frac{1}{2} < r_1$}\label{sec:appendixanalysisr1huge}

In this section we show that our algorithm packs all disks if $\frac{1}{2} < r_1$. In particular, we reduce the case $\frac{1}{2} < r_1$ to the case $r_1 = \frac{1}{2}$. Then an application of the same approach as used for the case $0.495 \leq r_1 \leq \frac{1}{2}$ implies that all disks from the input are packed by our algorithm.

We start with some technical definitions. A \emph{half disk} $H$ is that part of a disk $D$ that lies not to the right of the vertical diameter of $D$. The \emph{midpoint}, the \emph{radius}, and the \emph{vertical diameter} of $H$ are the midpoint, the radius, and the vertical diameter of $D$.

	Let $D$ be the disk with volume $\pi (1 - 1r_1)^2$ inside $\mathcal{C}$ such that $D$ and $r_1$ do not overlap, see Fig.~\new{\ref{fig:reducingr1hugetor1large_main}}. Furthermore, let $H$ by the half disk with radius $1 - 1r_1$ and its diameter crossing orthogonally the touching point between $D$ and $r_1$, see the white half disk in \new{Fig.~\ref{fig:reducingr1hugetor1large_main}}.

\begin{figure}[h!]
  \begin{center}
      \includegraphics[scale=1]{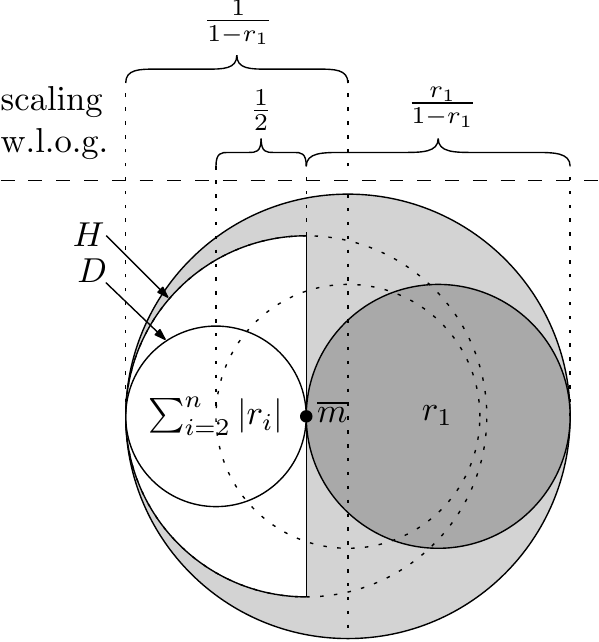} 
  \end{center}
  \caption{The total volume of the remaining disks to be packed is equal to the white disk $D$. As $|H| = 2 |D|$, it suffices to guarantee that $H$ is saturated.}
  \label{fig:reducingr1hugetor1large_main}
\end{figure}

For the remainder of this section, w.l.o.g., we scale our entire configuration such that $D$ has a fixed radius of $\frac{1}{2}$, implying that $\mathcal{C}$ has a radius of $\frac{1}{1-r_1}$ and $r_1$ a new radius of $\frac{r_1}{1-r_1}$, see \new{Fig.~\ref{fig:reducingr1hugetor1large_main}}.

By Lemma~\ref{lem:twodisksonediamater} we are allowed to assume w.l.o.g. that the total volume of $r_2,\dots,r_n$ is $\pi (1 - r_1)^2$.

	The volume of $H$ is \new{at least} twice the volume of $D$, i.e., twice the volume of the remaining disks to be packed. By assumption, \new{$O \setminus \mathcal{C}$} is saturated. As the total input volume is \new{half of} the volume of \new{$O$}, $\mathcal{C} \setminus H$ is saturated.
	
	First, we consider the case that there is no ring created by our algorithm.
	
	\begin{lemma}\label{lem:r1hugenoring}
		If there is no ring created by our algorithm, all input disks are packed by \new{Boundary} Packing.
	\end{lemma}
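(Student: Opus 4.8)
The plan is to reduce the claim to the behavior of a single pass of Boundary Packing and then exclude the only way that pass can fail. First I would note that a ring is created only in Phase~3 or Phase~4, and that Phase~3 is reached while disks are still unpacked precisely when Boundary Packing stopped early --- i.e.\ because the current disk either does not fit into $\mathcal{C}$ or has radius below the current threshold $\mathcal{T}$. Hence ``no ring is created'' is equivalent to saying that the Phase~2 invocation of Boundary Packing (run after the rescaling fixed in this section, see Fig.~\ref{fig:reducingr1hugetor1large_main}, with threshold $\mathcal{T}=\frac{r-d}{4}$) runs through the entire list $r_1,\dots,r_n$ without stalling. So it suffices to show that no disk $r_k$ in line to be packed by Boundary Packing fails to fit into $\mathcal{C}$; that the resulting placement is a valid packing is immediate from the definition of Boundary Packing.

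Second, I would exclude a stall by a volume/density contradiction, in the spirit of Lemma~\ref{lem:largestdisknotadjacenttoboundary} and Lemma~\ref{lem:helping}. Let $r_k$ be the largest disk that does not fit. It is not $r_1$, which is placed first adjacent to the boundary and fits since $r_1<\mathcal{C}$. For $k\ge 2$, the disks $r_1,\dots,r_{k-1}$ are already packed adjacent to the boundary, each paying its full volume to the cone it induces. Because Phase~2 uses the threshold $\frac{r-d}{4}$, every disk $r_2,\dots,r_{k-1}$ is large enough (measured against the effective width $r-d$ of the crescent $\mathcal{C}\setminus r_1$) for Lemma~\ref{lem:densityCones} to apply, so each of these cones is saturated; since $r_k$ does not fit into the angular gap between $r_1$ and the disk placed just before $r_k$, that gap is small --- an estimate analogous to the $\arcsin(1/3)$ bound of Lemma~\ref{lem:helping} --- so the part of $\mathcal{C}$ lying inside $H$ not covered by these cones has area less than $2\abs{r_k}$. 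Adding the exterior $O\setminus\mathcal{C}$, saturated by the induction hypothesis, and $\mathcal{C}\setminus H$, saturated because $\abs{H}\ge 2\abs{D}$ is twice the total volume of $r_2,\dots,r_n$, the total input volume would exceed $\frac{\abs{O}}{2}$, a contradiction. Hence every disk fits and Boundary Packing packs all of them.

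The main obstacle I expect is making the cone accounting legitimate in the presence of the oversized disk $r_1>\frac12$: because the center of $\mathcal{C}$ lies inside $r_1$, the ``cone induced by $r_1$'' degenerates, which is exactly why this section first passes to the half disk $H$ with $\abs{H}\ge 2\abs{D}$. I would therefore do the entire density bookkeeping inside $H$ rather than inside $\mathcal{C}$, bound the uncovered sliver with the same $\arcsin(1/3)$-type estimate used in Lemma~\ref{lem:atleastonesmall}, and treat the threshold $\frac{r-d}{4}$ with care so that Lemma~\ref{lem:densityCones} really does apply to each Boundary-Packed disk $r_2,\dots,r_{k-1}$ even though these may be smaller than $\frac{\mathcal{C}}{4}$. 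A secondary point to check is that $H$, placed with its diameter through the contact point of $D$ and $r_1$, actually contains all the room Boundary Packing uses for $r_2,\dots,r_n$ once $r_1$ has been positioned, which follows from the non-overlap of $D$ and $r_1$.
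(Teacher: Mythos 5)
Your high-level plan matches the paper's: do the density bookkeeping inside the half-disk $H$ and derive a volume contradiction from a hypothetical stall of Boundary Packing. However, you leave the crucial mechanism unspecified, and the way you state it would not go through. You say each $r_i$ ``pays its full volume to the cone it induces'' and then hope Lemma~\ref{lem:densityCones} applies; but the cone of the paper's definition has its apex at the center $m$ of $\mathcal{C}$, and for $r_1 > \frac12$ that point lies \emph{inside} $r_1$ and outside $H$, so those cones are degenerate, overlap $r_1$ and one another, and cannot serve as a partition of $H$. The paper resolves this by cutting the sector $S$ from $H$ with the two rays from the midpoint $\overline{m}$ \emph{of $H$} that are tangent to $r_i$ (so $S$ is a wedge of the half-disk, not a cone of $\mathcal{C}$), and then making a monotonicity observation: moving $m_i$ toward $\overline{m}$ keeping the radius fixed only widens those tangent rays, so it \emph{increases} $\abs{S}$ while keeping $\abs{r_i}$ fixed, hence can only \emph{decrease} the density. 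This reduces the worst case to a disk adjacent to a boundary at distance $\tfrac12$, exactly the configuration for which Lemma~\ref{lem:densityCones} gives the bound $\rho$. Without this re-anchoring at $\overline{m}$ and the monotonicity step, your claim that each cone ``is saturated'' is unsupported, which is precisely the concern you flag at the end but do not close.

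Two smaller remarks. First, your final ``secondary point'' --- that $H$ must contain all the room Boundary Packing uses for $r_2,\dots,r_n$ --- is not needed: the potential scheme assigns each disk's volume to a sector of $H$, and nothing requires the disk itself to lie inside $H$ or inside its sector; the paper never establishes such containment. Second, your observation that ``no ring created'' is equivalent to the Phase~2 Boundary Packing running to completion is correct and is a useful bit of bookkeeping the paper skips, so keeping it would make the argument cleaner once the sector construction is fixed.
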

	\begin{proof} Let $r_i \neq r_1$ be an arbitrary disk, packed by \new{Boundary} Packing and let $m_i$ \new{be} its midpoint. Furthermore, let $\overline{m}$ be the midpoint of $H$, see Fig.~\ref{fig:r1hugenoring}. 
	
\begin{figure}[h!]
  \begin{center}
      \includegraphics[scale=1]{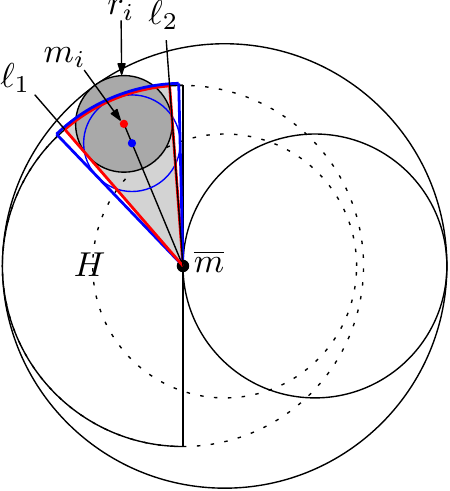} 
  \end{center}
  \caption{In case of $r_1 > \frac{1}{2}$, a disk $r_i$ packed by \new{Boundary} Packing pays its entire potential to that part of $H$ that lies between its two tangents $\ell_1,\ell_2$.}
  \label{fig:r1hugenoring}
\end{figure}

Let $\ell_1,\ell_2$ be the two rays starting in $\overline{m}$ and touching $r_i$. Furthermore, let $S$ be that part of $H$ that lies between $\ell_1,\ell_2$, see the red bounded sector in Fig.~\ref{fig:r1hugenoring}. $r_i$ pays its entire potential to $S$. By moving the midpoint $m_i$ of $r_i$ into the direction of $\overline{m}$ while maintaining the radius of $r_i$, the area of $S$ increases while the volume \new{of $r_i$} stays the same. Lemma~\ref{lem:densityCones} implies that $S$ has a density of at least $\rho$. Finally, applying the same approach as used in the proof of Lemma~\ref{lem:largestdisknotadjacenttoboundary} implies that all disks are packed.
\end{proof}
	
	Thus, we assume w.l.o.g. that there are rings created by our algorithm.

	Let $R_1,\dots,R_{h}$ be the maximal rings ordered decreasingly w.r.t. their outer radii. As $R_1,\dots,R_{h}$ are maximal, $R_1,\dots,R_{h}$ are also ordered decreasingly w.r.t.\ their widths, because our algorithm processes the disks $r_1,\dots,r_n$ in decreasing order. 
	
	\begin{lemma}\label{lem:r1hugesmallestmaximalringlarge}
		If the inner radius of $R_h$ is smaller than $\frac{1}{2}$, all input disks are packed by our approach.
	\end{lemma}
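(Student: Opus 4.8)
The plan is to reduce the statement to showing that the half disk $H$ is \emph{saturated}, exactly as prepared above the lemma. After the rescaling that fixes the radius of $D$ to $\frac{1}{2}$ we have $\abs{H} = 2\abs{D}$, the volume $\abs{D}$ equals the total volume of the disks $r_2,\dots,r_n$ still to be packed, the total input volume equals $\frac{1}{2}\abs{O}$, and $\mathcal{C}\setminus H$ is already known to be saturated (the disk $r_1$ is disjoint from $H$ and has volume at least $\frac{1}{2}\abs{\mathcal{C}\setminus H}$ since $r_1^2+(1-r_1)^2\ge\frac12$, and $O\setminus\mathcal{C}$ is saturated by the inductive hypothesis). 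Hence, once $H$ is shown to be saturated, the whole original container $O$ is saturated, and the usual counting argument applies: if some input disk were left unpacked, the total assigned potential would be at most the combined volume of the packed disks, which is strictly less than the total input volume, which equals the total potential of the now-saturated $O$ — a contradiction. So all disks are packed.

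To saturate $H$, I would proceed along the lines of Lemma~\ref{lem:r1smalldenserings}. The disk $r_1$ lies adjacent to $\partial\mathcal{C}$ and is disjoint from $H$, so it plays no role here; every other disk that Boundary Packing places into the relevant part of $\mathcal{C}$ is small enough relative to $H$ for Lemma~\ref{lem:densityCones} to apply, so the cone it induces has density at least $\rho$. The remaining disks lie in the concentric rings $R_1\supset\dots\supset R_h$ (each of which splits into minimal rings); by Observation~\ref{obs:minimalringcoveredbysectorsandgap} a minimal ring is covered by the cones of its Boundary-Packed disks, the zipper sectors (density $\ge\rho$ by Corollary~\ref{cor:densityzippersectors}) and its gap (volume $\le 1.07024\,U_R$ by Lemma~\ref{lem:upperboundlock}). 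Moving the surplus potential from a unit-sector disk of density $\ge 0.77036$ (Lemma~\ref{lem:zipperlengthone}) and from the $\ge\rho$-dense remainder of the ring into its gap, as in the proof of Lemma~\ref{lem:r1smalldenserings}, makes every minimal ring — and hence each $R_i$ — saturated.

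The point where the hypothesis that the inner radius of $R_h$ is smaller than $\frac{1}{2}$ enters is the verification that these saturated pieces together actually cover all of $H$. The key geometric fact is that $H$ has radius $1$ and is tangent to $\partial\mathcal{C}$, so every point of $H$ lies at distance at least $\frac{2r_1-1}{2(1-r_1)}$ from the center $m$ of $\mathcal{C}$; in particular, when $r_1\ge\frac{2}{3}$ this lower bound is itself at least $\frac12$, so an inner radius of $R_h$ below $\frac12$ already forces the rings to reach inward past all of $H$, and the rings alone cover it. In the remaining range $\frac12<r_1<\frac23$ one argues instead that the shallow shell of $H$ that the rings miss is cut out entirely by the wedges of the cones of the disks Boundary Packing places on the $H$-side of $\mathcal{C}$, so that $H$ leaves no residual pocket adjacent to $r_1$ — the analogue, for the region $H$ inside the lune $\mathcal{C}\setminus r_1$, of the use of Lemma~\ref{lem:halfcirclefitsinpocket} in the proof of Lemma~\ref{lem:allmaximalringssmall1}. (In the complementary case, where the inner radius of $R_h$ is at least $\frac12$, such a residual pocket can survive and must be handled by the more elaborate potential transfer of Lemma~\ref{lem:r1middlefinal}.) Combining this covering with the cone- and ring-saturation above shows $H$ is saturated, which by the first paragraph finishes the proof.

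The main obstacle I expect is precisely this covering step: pinning down, for $r_1\in(\frac12,\frac{1}{\sqrt2}]$ (the whole admissible range, since the total volume of $r_2,\dots,r_n$ must be nonnegative), exactly how deep the rings go and checking that an inner radius of $R_h$ below $\frac12$ really leaves no uncovered region of $H$ near $r_1$ — i.e., that the constants close without having to borrow extra potential as in Lemma~\ref{lem:r1middlefinal}. Everything else is a routine adaptation of the already-established cone, zipper, and gap estimates.
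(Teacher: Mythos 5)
The paper's proof of this lemma is a one-line pointer: take the $\overline{m}$-centered potential assignments from Lemma~\ref{lem:r1hugenoring} and run the argument of Lemma~\ref{lem:allmaximalringssmall1} (use the analogue of Lemma~\ref{lem:halfcirclefitsinpocket} to place the smallest disk in each minimal ring above the upper tangent of $r_1$, conclude ring saturation from Lemma~\ref{lemzipperoflengthatleastthree}, and finish with Lemma~\ref{lem:largestdisknotadjacenttoboundary}). Your proposal does not follow this route, and the route you do take has a gap that you yourself flag but do not close.

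Two substantive issues. First, the central technical device of the $r_1>\frac12$ analysis — reassigning the cone of each boundary-packed disk $r_i$ to the sector of $H$ cut out by the two rays from $\overline{m}$ (the midpoint of $H$) tangent to $r_i$, exactly as in Lemma~\ref{lem:r1hugenoring}, so that Lemma~\ref{lem:densityCones} becomes applicable even though $m\in r_1$ — never appears explicitly in your plan. You say the cone ``is small enough relative to $H$ for Lemma~\ref{lem:densityCones} to apply,'' but Lemma~\ref{lem:densityCones} is stated for cones with apex at the container center, and without switching the apex to $\overline{m}$ there is no cone to speak of (the $m$-cone of a disk when $m$ lies inside $r_1$ is degenerate). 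Second, your covering argument (``an inner radius of $R_h$ below $\frac12$ already forces the rings to reach inward past all of $H$'') is the crux of your proposal and is exactly the step you say does not close for $\frac12<r_1<\frac23$; the paper avoids this entirely by not trying to cover $H$ with rings at all. Instead it keeps the disks away from the pocket near $r_1$ via Lemma~\ref{lem:halfcirclefitsinpocket}, so the question of whether the annuli reach into $H$ never arises.

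One further remark that is worth noting even though it is not a defect of your attempt per se: the lemma as printed almost certainly contains a sign flip and should read ``at least $\frac12$.'' The proof cites Lemma~\ref{lem:allmaximalringssmall1}, which is the ``all inner radii larger than $\frac12$'' case, and the very next sentence in the paper says ``Hence, we assume w.l.o.g.\ that the inner radius of $R_h$ is smaller than $\frac12$'' — which only makes sense if the lemma just disposed of the complementary case. You took the printed hypothesis at face value and tried to prove the harder direction (rings reaching deep into $H$), which is actually what the remainder of Section~\ref{sec:appendixanalysisr1huge} (Lemmas~\ref{lem:key}--\ref{lem:volumerings} and the adaptation of Lemma~\ref{lem:r1middlefinal}) is devoted to. Under the corrected reading, the lemma is the easy, shallow-ring case, and the one-liner referencing Lemma~\ref{lem:allmaximalringssmall1} is adequate; under your reading it cannot be a one-liner and your sketch, as you acknowledge, is incomplete.
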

	\begin{proof}
		We use the potential assignments used in the proof of Lemma~\ref{lem:r1hugenoring} and apply the same approach as used in the proof of Lemma~\ref{lem:allmaximalringssmall1}. This concludes the proof.
	\end{proof}
	
\begin{figure}[h!]
  \begin{center}
      \includegraphics[scale=1]{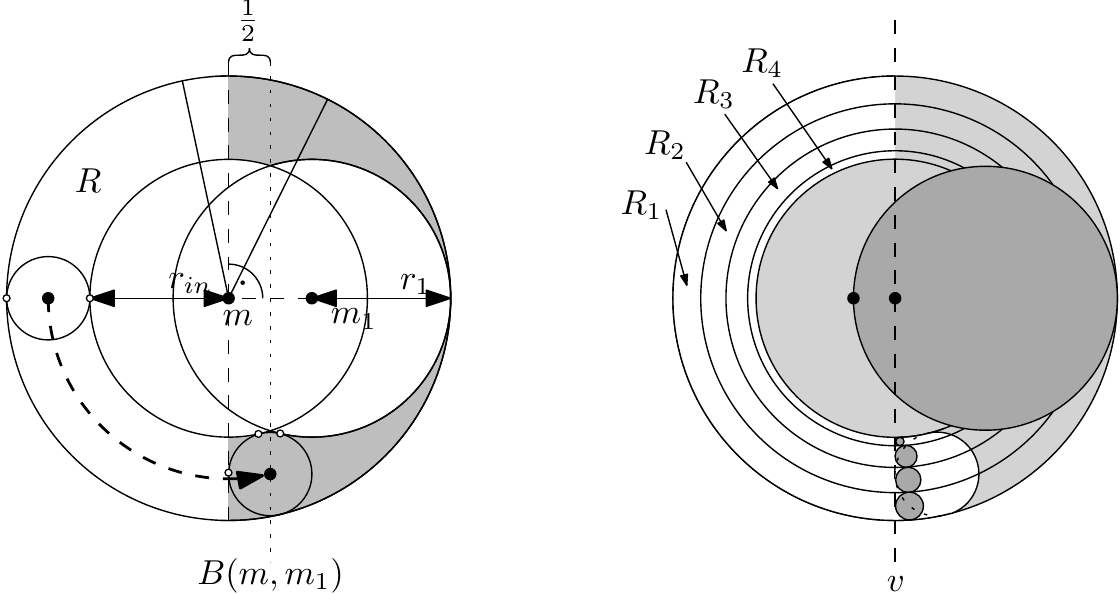} 
  \end{center}
  \caption{If $r_1 = r _{\text{in}}\geq \frac{1}{2}$, the disk with diameter equal to the width of the ring $R$ exactly fits into each of the two gray pockets, i.e., each disk with larger diameter does not fit into a gray pocket.}
  \label{fig:technical_key}
\end{figure}

Hence, we assume w.l.o.g.\ that the inner radius of $R_h$ is smaller than $\frac{1}{2}$. Let $R_1,\dots,R_m$ be all maximal rings such that the outer radius of $R_m$ is not smaller and the inner radius of $R_m$ is smaller than $\frac{3}{4}$. As $R_1,\dots,R_m$ are ordered decreasingly w.r.t.\ their widths, each ring of $R_1,\dots,R_m$ has an inner radius not smaller than $\frac{1}{2}$.

Lemma~\ref{lem:key} implies that each ring $R_1,\dots,R_m$ contains at least one disk \new{because by assumption $r_2, r_3, \dots \leq \frac{1}{2}$}.

\begin{lemma}\label{lem:key}
	Consider a ring $R$ with an inner radius of $r_1$. Let $\ell$ be vertical diameter of $\mathcal{C}$.
	
	The largest disk $\overline{r}$ that can be packed on the same side of $\ell$ as the midpoint of~$r_1$ has a radius of $\frac{1}{2}$, see Fig.~\ref{fig:technical_key}~(Left).
\end{lemma}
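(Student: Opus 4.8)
The plan is to reduce the statement to an elementary two‑circle intersection, exploiting the single geometric fact that $r_1$ being adjacent to $\mathcal{C}$ forces its midpoint to lie at distance equal to the width of $R$ from the centre $m$ of $\mathcal{C}$.

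First I would fix coordinates: put $m$ at the origin and write $R_c$ for the radius of $\mathcal{C}$, so that $R=R[R_c,r_1]$ has width $w:=R_c-r_1$ (which equals $1$ under the scaling of Section~\ref{sec:appendixanalysisr1huge}, so ``radius $\tfrac12$'' is exactly ``radius $\tfrac{w}{2}$''). Since $r_1$ lies adjacent to $\mathcal{C}$, its midpoint $m_1$ satisfies $\abs{m\,m_1}=R_c-r_1=w$; w.l.o.g.\ $m_1=(-w,0)$ and $\ell$ is the $y$‑axis, so ``the same side of $\ell$ as $m_1$'' is the half‑plane $\{x\le 0\}$. The goal then splits into (a) an upper bound: every disk contained in $R$ has radius at most $\tfrac{w}{2}$; and (b) achievability: a disk of radius exactly $\tfrac{w}{2}$ can be placed inside $R\cap\{x\le 0\}$ without overlapping $r_1$.

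Part (a) is immediate: a disk of radius $s$ inside the annulus $R$ has its centre at distance in $[\,r_1+s,\,R_c-s\,]$ from $m$ (to stay outside the inner boundary and inside the outer boundary), forcing $r_1+s\le R_c-s$, i.e.\ $s\le\tfrac{R_c-r_1}{2}=\tfrac{w}{2}$; hence no larger disk fits into either pocket. For part (b) I would exhibit the disk $\overline{r}$ of radius $\tfrac{w}{2}$ whose centre $c$ lies at distance $r_1+\tfrac{w}{2}$ from \emph{both} $m$ and $m_1$, i.e.\ at one of the two intersection points of the circle of radius $r_1+\tfrac{w}{2}$ about $m$ with the circle of the same radius about $m_1$. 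By Pythagoras these two points lie on the perpendicular bisector $\{x=-\tfrac{w}{2}\}$ of $m\,m_1$, at heights $\pm\sqrt{(r_1+\tfrac{w}{2})^2-(\tfrac{w}{2})^2}=\pm\sqrt{r_1R_c}$; take $c=\bigl(-\tfrac{w}{2},\sqrt{r_1R_c}\bigr)$. Then $\abs{mc}=r_1+\tfrac{w}{2}=R_c-\tfrac{w}{2}$, so $\overline{r}$ is tangent to both boundary circles of $R$ and hence lies in $R$; $\abs{m_1c}=r_1+\tfrac{w}{2}=r_1+\overline{r}$, so $\overline{r}$ does not overlap $r_1$; and $c$ has $x$‑coordinate $-\tfrac{w}{2}$ while $\overline{r}$ has radius $\tfrac{w}{2}$, so $\overline{r}\subseteq\{x\le 0\}$, tangent to $\ell$. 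Together with (a) this shows that the largest disk packable on the $m_1$‑side of $\ell$ has radius exactly $\tfrac{w}{2}$.

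The only non‑routine step is the observation that the disk tangent to both boundary circles of $R$ and to $r_1$ automatically lands tangent to $\ell$ — which is precisely the coincidence $\abs{m\,m_1}=w$, i.e.\ that $r_1$ touches the boundary of $\mathcal{C}$; everything else is the width‑of‑annulus bound and a Pythagorean computation. The remaining care is only in picking the intersection point sitting in the intended pocket, which is harmless because the configuration $\{m,m_1,R\}$ is symmetric under reflection in the line $m\,m_1$, so either choice works (landing in the ``top'' resp.\ ``bottom'' pocket on the $m_1$‑side of $\ell$).
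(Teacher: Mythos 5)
Your argument is essentially the same as the paper's: both place $\overline{r}$ with its centre on the perpendicular bisector of $m\,m_1$, tangent to the two boundary circles of $R$ and, by symmetry of the bisector, to $r_1$, and both then use $\abs{m\,m_1}=w$ (the coincidence that $r_1$ is adjacent to $\mathcal{C}$) to conclude that $\overline{r}$ is tangent to $\ell$ on the $m_1$-side. You additionally spell out the annulus-width upper bound showing no larger disk fits, which the paper leaves implicit.
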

\begin{proof}
	Let $m_1$ be the midpoint of $r_1$. The width of $R$ is $\frac{1}{2}$. \new{Note, that after scaling as described above the width remains $\frac{1}{2}$, see Figure~\ref{fig:reducing_main}.} Thus, a disk $\overline{r} := \frac{1}{2}$ inside $R$ touches both the inner and the outer boundary component of $R$. We place the midpoint of $\overline{r}$ on the bisector between $m$ and $m_1$. Thus, $\overline{r}$ is touching the boundary of $r$, because $\overline{r}$ touches the inner boundary component of $R$. Finally, $|mm_1| = 1-r$ implies that the midpoint of $\overline{r}$ has a distance of $\frac{1-r}{2}$ to $\ell$. This concludes the proof.
\end{proof}

\begin{corollary}\label{cor:helping2}
	For each $i = 1,\dots,m$, there is a disk $r_k$ packed into $R_{i}$, see Fig.~\ref{fig:technical_key}~(Right).
\end{corollary}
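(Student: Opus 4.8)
The plan is to read Corollary~\ref{cor:helping2} off from Lemma~\ref{lem:key}, mirroring the way Lemma~\ref{lem:helping} was obtained from Lemma~\ref{lem:halfcirclefitsinpocket} in the case $0.495\le r_1\le\frac12$. Fix an index $i\in\{1,\dots,m\}$. By construction of the algorithm, every maximal ring is created fresh in Phase~3, hence is of the form $R_i=R[r_{\min},\,r_{\min}-2r_k]$, where $r_k$ is the largest disk not yet packed at that moment; its width is therefore exactly $2r_k$, and Ring Packing immediately attempts to place $r_k$ into $R_i$. So it suffices to show that $r_k$ does fit into $R_i$.

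I would establish this from two observations. First, by the reduction made just before the statement (Lemma~\ref{lem:twodisksonediamater} lets us assume that the total volume of $r_2,\dots,r_n$ equals that of the radius-$\frac12$ disk $D$), every disk with index at least $2$ has radius at most $\frac12$; in particular $r_k\le\frac12$, so the width $2r_k\le 1$ of $R_i$ equals the diameter of $r_k$, and $r_k$ would touch both boundary components of $R_i$ in an empty ring. Second, $R_1,\dots,R_m$ are concentric with $\mathcal C$ and nested, all with inner radius at least $\frac12$; the binding case is the side of the vertical diameter $\ell$ of $\mathcal C$ that contains the midpoint $m_1$ of $r_1$, and there Lemma~\ref{lem:key} guarantees that the largest disk which still fits has diameter exactly equal to the width of the ring. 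Since the diameter $2r_k$ of $r_k$ equals the width of $R_i$, the disk $r_k$ fits, so $R_i$ contains at least the disk $r_k$; see Fig.~\ref{fig:technical_key}~(Right). As $i$ was arbitrary, the corollary follows.

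The only subtle point — and the main, though mild, obstacle — is making sure Lemma~\ref{lem:key} really applies to each of $R_1,\dots,R_m$ and not just to the single ring whose inner boundary touches $r_1$: Lemma~\ref{lem:key} is phrased for a ring whose inner radius coincides with that of $r_1$, so one has to argue that shrinking the inner radius of the ring (moving it toward the center, away from where $r_1$ meets the boundary of $\mathcal C$) does not shrink the pocket available on the $m_1$-side. I would settle this with the same elementary monotonicity argument used inside the proof of Lemma~\ref{lem:key}: the "bite" that the fixed disk $r_1$ takes out of a concentric ring is largest precisely for the ring whose inner boundary coincides with $r_1$, so for the remaining rings $R_2,\dots,R_m$ the conclusion "largest fitting disk has diameter equal to the ring width" holds a fortiori. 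Once this is in place, the argument above goes through uniformly for all $i\in\{1,\dots,m\}$.
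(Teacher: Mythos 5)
Your overall route is the paper's: Corollary~\ref{cor:helping2} is read off from Lemma~\ref{lem:key} together with the observation that $r_2,r_3,\dots\leq\frac12$ (which, as you say, follows from the reduction via Lemma~\ref{lem:twodisksonediamater} and the scaling that normalizes $D$ to radius~$\frac12$). The paper's own justification is essentially the single sentence ``Lemma~\ref{lem:key} implies that each ring $R_1,\dots,R_m$ contains at least one disk because by assumption $r_2,r_3,\dots\leq\frac12$.'' Your unpacking of the mechanism -- a maximal ring is created fresh in Phase~3 with width $2r_k$, so it suffices that Ring Packing can place $r_k$ -- is a reasonable elaboration.

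Two small points where you drift from the paper. First, you recast Lemma~\ref{lem:key} as saying ``the largest disk which still fits has diameter exactly equal to the width of the ring,'' whereas the lemma is stated as a fixed radius bound ($\overline r$ has radius $\frac12$); the paper's intended inference is the simpler $r_k\leq\frac12\leq\overline r$, with no need to relate the fitting disk to the ring's width. Your phrasing happens to match the caption of Figure~\ref{fig:technical_key}, and the conclusion is the same, but it is a slightly longer detour than what is actually invoked. Second, you claim that the monotonicity step (that the $m_1$-side pocket only grows as the concentric ring shrinks inward) is ``the same elementary monotonicity argument used inside the proof of Lemma~\ref{lem:key}.'' No such argument appears there; the proof treats only the single ring with $r_{\mathrm{in}}=r_1$. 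Your worry that Lemma~\ref{lem:key} literally applies to only one of the rings $R_1,\dots,R_m$ is legitimate, and in fact the paper glosses over this as well -- but the monotonicity you rely on is proposed, not established, so a fully rigorous write-up would need to prove it rather than attribute it to the existing lemma.
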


Note that we assumed w.l.o.g.\ that $\mathcal{C}$ has a radius of $\frac{1}{1-r_1}$. We consider the rings $\overline{R}_1,\dots,\overline{R}_m$ lying inside the container corresponding to the case $r_1 = \frac{1}{2}$ such that $R_i, \overline{R}_i$ have the same width for $i = 1,\dots,m$, \new{see Fig.~\ref{fig:reducing_main}}. Let $L_i,\overline{L}_i$ be that parts of $R_i, \overline{R}_i$ that lie below the horizontal diameters of the container disks and below the lids of $R_i,\overline{R}_i$, see the red sectors in \new{Fig.~\ref{fig:reducing_main}}. 

\begin{figure}[h!]
  \begin{center}
      \includegraphics[scale=1]{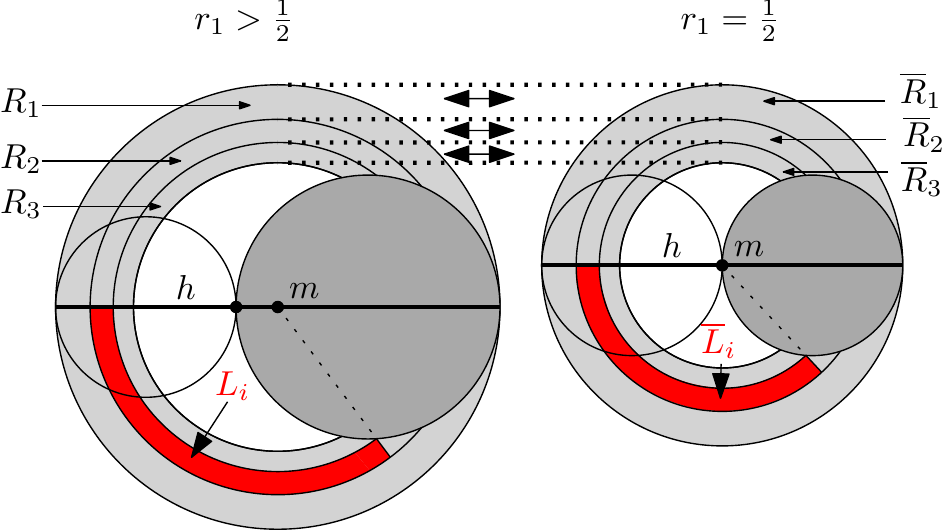} 
  \end{center}
  \caption{Mapping the rings $R_1,\dots,R_m$ onto $\overline{R}_1,\dots \overline{R}_m$. The volume of $\overline{R}_i$ is smaller than the volume of $R_{i}$.}
  \label{fig:reducing_main}
\end{figure}

	In the following, we show that the volume $\abs{\overline{L}_i}$ of $\overline{L}_i$ is not larger than the volume $\abs{L_i}$ of $L_i$. Thus, we need a tool for measuring the volumes of $L_i, \overline{L}_i$. Let $w_i$ be the width of $R_i,\overline{R}_i$. Let $r_{\text{in}}$ be the inner radius if \new{$R_i$}. Furthermore, let $C_i, \overline{C}_i$ be the circles with midpoint $m$ and radius $r_{\text{in}} + \frac{w_i}{2}$ corresponding to the cases $r_1 > \frac{1}{2}$ and $r_1 = \frac{1}{2}$. Analogously, let $c_i \overline{c}_i$ be the circles with midpoint $m$ and radius $r_{\text{in}}$ for $r_1 > \frac{1}{2}$ and $r_1 = \frac{1}{2}$. Furthermore, let $\gamma_i, \overline{\gamma}_i, \mu_i, \overline{\mu}_i$ be those parts of $C_i,\overline{C}_i, c_i, \overline{c}_i$ that lie inside $L_i,\overline{L}_i$, see Fig.~\ref{fig:lengthoflane}. Finally, let $r_1, \overline{r}_1$ the circles presenting the inner circle of $r_1$ for $r_1 > \frac{1}{2}$ and $r_1 = \frac{1}{2}$.

\begin{figure}[h!]
  \begin{center}
      \includegraphics[scale=1]{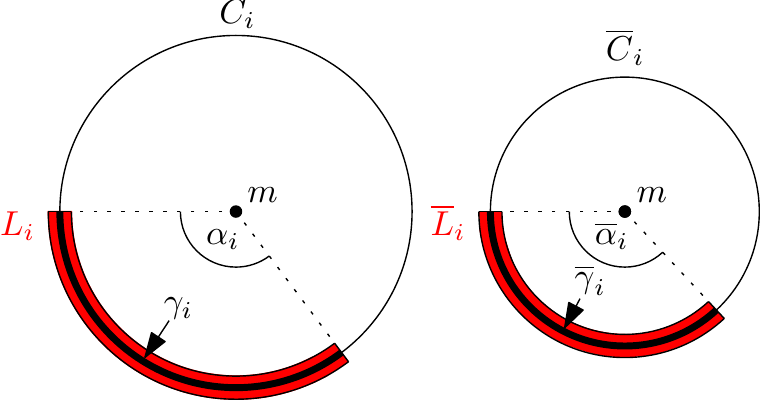} 
  \end{center}
  \caption{The volumes of $L_i$ and $\overline{L}_i$ are $w_i \cdot |\gamma_i|$ and $w_i \cdot |\overline{\gamma}_i|$.}
  \label{fig:lengthoflane}
\end{figure}

\begin{lemma}\label{lem:measuringvolumeoflanes}
	The volumes of $L_i,$ and $\overline{L}_i$ are $w_i \cdot \abs{\gamma_i}$ and $w_i \cdot \abs{\overline{\gamma}_i}$.
\end{lemma}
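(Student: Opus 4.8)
The plan is to recognise $L_i$ and $\overline{L}_i$ as \emph{annular sectors}, i.e.\ as the part of a ring swept out by a set of polar angles about the common center $m$ of $\mathcal{C}$ and the ring, and then to read off the claim from the elementary area formula for an annular sector, which factors exactly as $(\text{width})\times(\text{arc length of the mid-radius circle})$.

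First I would pin down the shape of $L_i$. Its boundary is made up of an arc of the inner circle $c_i$ of $R_i$, an arc of the outer circle of $R_i$, a portion of the horizontal diameter of the container disk, and a portion of the lid $h_i$ of $R_i$; by Definition~\ref{def:gap} the lid is a ray emanating from $m$, and the horizontal diameter is a line through $m$ (recall all rings are concentric with the container). Hence both cutting conditions, ``below the horizontal diameter'' and ``below the lid'', describe angular regions about $m$, so there is an angular set $\Theta_i\subseteq[0,2\pi)$ (a finite union of arcs) with
\[
L_i=\bigl\{(r,\varphi)\ :\ r_{\text{in}}\le r\le r_{\text{in}}+w_i,\ \varphi\in\Theta_i\bigr\}
\]
in polar coordinates centered at $m$, where $r_{\text{in}}$ is the inner radius of $R_i$. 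Then the computation is immediate: Fubini in polar coordinates gives
\[
\abs{L_i}=\int_{\Theta_i}\!\!\int_{r_{\text{in}}}^{r_{\text{in}}+w_i}\!r\,dr\,d\varphi=\frac{(r_{\text{in}}+w_i)^2-r_{\text{in}}^2}{2}\,\abs{\Theta_i}=w_i\Bigl(r_{\text{in}}+\tfrac{w_i}{2}\Bigr)\abs{\Theta_i},
\]
while $C_i$ is by definition the circle of radius $r_{\text{in}}+\tfrac{w_i}{2}$ about $m$, so $\gamma_i=C_i\cap L_i$ is the arc of that circle over the angle set $\Theta_i$ and $\abs{\gamma_i}=\bigl(r_{\text{in}}+\tfrac{w_i}{2}\bigr)\abs{\Theta_i}$. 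Combining the two displays yields $\abs{L_i}=w_i\abs{\gamma_i}$, and the identical argument inside $\overline{R}_i$ (whose width is also $w_i$, with $\overline{\gamma}_i$ the arc of the mid-radius circle $\overline{C}_i$ inside $\overline{L}_i$) gives $\abs{\overline{L}_i}=w_i\abs{\overline{\gamma}_i}$.

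The only point that requires care is the first step: one must check that the lid and the horizontal diameter genuinely slice the ring into angular pieces (rather than along some curve), so that $L_i$ decomposes into honest annular sectors to which the polar-coordinate integration applies; once that is in place the statement is nothing more than the annular-sector area identity $\tfrac12(b^2-a^2)\theta=(b-a)\cdot\tfrac{a+b}{2}\cdot\theta$ read with $b-a=w_i$. I expect the written-out proof to be short.
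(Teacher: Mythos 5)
The paper states Lemma~\ref{lem:measuringvolumeoflanes} without supplying any proof at all, so there is nothing of the authors' to compare against; you are filling a genuine gap. Your argument is correct and is the natural one. The key observation that needs to hold is indeed the one you flag: the boundary pieces of $L_i$ other than the two circular arcs are the horizontal diameter of the container (a line through $m$) and the lid of $R_i$, which by Definition~\ref{def:gap} is a ray emanating from $m$, so $L_i$ really is an annular sector (or a finite union of such) with respect to polar coordinates at $m$. Once that is in place, the identity $\abs{L_i}=w_i\abs{\gamma_i}$ is just the annular-sector area formula
\[
\frac{\theta}{2}\bigl(b^2-a^2\bigr)=(b-a)\cdot\frac{a+b}{2}\cdot\theta
\]
with $a=r_{\text{in}}$, $b=r_{\text{in}}+w_i$, since $C_i$ is by construction the mid-radius circle of $R_i$; and the same computation inside $\overline{R}_i$ gives the companion identity. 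This is exactly the proof one would expect the authors to have had in mind when they left the lemma unjustified, and your one-sentence caveat about verifying the radial nature of the cuts is the right point to make explicit.
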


\begin{lemma}\label{lem:volumerings}
	The volume of $\overline{L}_i$ is not larger than the volume of $L_i$.
\end{lemma}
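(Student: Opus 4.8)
The plan is to reduce the two--dimensional area comparison to a one--dimensional comparison of circular arc lengths, and then to exploit that, after the common normalization that fixes $D$ (equivalently the half disk $H$) to a common radius, increasing $r_1$ past $\tfrac12$ only enlarges the container and the inner disk while leaving $H$, the remaining disks $r_2,\dots,r_n$, and hence all ring widths $w_i$ unchanged.

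First I would invoke Lemma~\ref{lem:measuringvolumeoflanes}: since $R_i$ and $\overline R_i$ have the same width $w_i$, we have $\abs{L_i}=w_i\abs{\gamma_i}$ and $\abs{\overline L_i}=w_i\abs{\overline\gamma_i}$, so it suffices to prove $\abs{\overline\gamma_i}\le\abs{\gamma_i}$, i.e.\ that the center arc of $\overline R_i$ lying inside $\overline L_i$ is no longer than the center arc of $R_i$ lying inside $L_i$. Writing $\rho_i$ (resp.\ $\overline\rho_i$) for the center radius $r_{\mathrm{in}}+\tfrac{w_i}{2}$ of $R_i$ (resp.\ $\overline R_i$) and $\phi_i$ (resp.\ $\overline\phi_i$) for the angular span at $m$ of $L_i$ (resp.\ $\overline L_i$), this amounts to $\overline\rho_i\,\overline\phi_i\le\rho_i\,\phi_i$.

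The radius factor is the easy part: the outer radius of $R_1$ is the radius of the current container, which is larger than the reference container radius; since each $R_{j+1}$ has outer radius equal to the inner radius of $R_j$ and all widths coincide, an induction on $j\le i$ gives $\overline\rho_i\le\rho_i$. It then remains to compare angular spans. The lane $L_i$ is bounded, going from the horizontal diameter toward the bottom, by whichever of the lid $h_i$ of $R_i$ and the boundary of $r_1$ is reached first (cf.\ Fig.~\ref{fig:lengthoflane}). If $r_1$ is the binding object, its angular bite out of a ring is an increasing function of the ratio (radius of the big disk)\,/\,(radius of the ring), and by Lemma~\ref{lem:key} together with the normalization this ratio is maximal for $r_1=\tfrac12$; hence the remaining free span is minimal there, so $\overline\phi_i\le\phi_i$. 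If instead $h_i$ is the binding object, then the disks packed into $R_i$ up to and including the last one are the same in both configurations (this is precisely the reduction to $r_1=\tfrac12$, using Corollary~\ref{cor:helping2} to guarantee that each $R_j$, $j\le m$, receives a disk); since they sit on a ring of smaller center radius in the reference configuration they subtend a larger total angle there, pushing $\overline h_i$ to a larger polar angle and leaving $\overline\phi_i\le\phi_i$. Combining, $\abs{\overline\gamma_i}=\overline\rho_i\,\overline\phi_i\le\rho_i\,\phi_i=\abs{\gamma_i}$, and multiplying by $w_i$ yields $\abs{\overline L_i}\le\abs{L_i}$.

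I expect the main obstacle to be the angular--span step, and in particular justifying that the combinatorial structure of the packing (which disk lands in which ring, in which order, and hence where each lid sits) is unchanged as $r_1$ grows past $\tfrac12$ — this is the whole content of the reduction to $r_1=\tfrac12$ and has to be made explicit — together with handling the two binding cases (lid versus $r_1$) uniformly. If the case split turns out to be awkward, a cleaner route is to show directly that $\rho_i\,\phi_i$ is monotone nondecreasing in $r_1$ by a single computation, using that both the inner disk radius and the container radius are explicit functions of $r_1$ while $H$ is fixed.
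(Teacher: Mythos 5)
Your reduction to comparing arc lengths via Lemma~\ref{lem:measuringvolumeoflanes} matches the paper, and the radius comparison $\overline\rho_i\le\rho_i$ is correct, but the angular-span step is not just unjustified: it is backwards, and that is precisely where the real difficulty lives. After the normalization fixing $D$ (and $H$) to radius $\tfrac12$, the container has radius $\tfrac{1}{1-r_1}$, the big disk has radius $\tfrac{r_1}{1-r_1}$, and its center sits at distance $1$ from $m$. As $r_1$ grows past $\tfrac12$, both of these grow, but the big disk grows \emph{faster relative to each ring}: for a ring of width $w_i$ at the same depth, the ratio of the big disk's radius to the ring's radius is \emph{larger} in the $r_1>\tfrac12$ configuration than in the reference $r_1=\tfrac12$ one (e.g.\ for $r_1=0.6$, $w_1=0.3$ one gets $1.5/2.2\approx 0.68$ versus $1/1.7\approx 0.59$). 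So the angular bite of $r_1$ out of the ring is larger, not smaller, in the actual configuration, and the free span satisfies $\phi_i<\overline\phi_i$. Your factorization $\abs{\gamma_i}=\rho_i\phi_i$ therefore has one factor increasing and one decreasing, and separate monotonicity cannot close the argument; the comparison has to be done on the product.

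The paper sidesteps exactly this trade-off. Instead of factoring, it translates the $r_1>\tfrac12$ configuration by $r_1-\tfrac12$ so that the circles $C_i,\overline C_i,c_i,\overline c_i$ become concentric at $m$ and both versions of $r_1$ become concentric at $m_1$ (a translation, so arc lengths are unchanged). It then compares the two arcs endpoint by endpoint: it locates the endpoints $a$ and $d$ where $\overline\mu_i$ and $\mu_i$ meet $\overline r_1$ and $r_1$, shows via a bisector argument that $d$ lies on the far side of $B(m,b)$, bounds the overshoot arc $\beta$ by $\tan(\pi/6)\,\delta$, uses $\alpha\ge\pi/2$ to show $\gamma_i$ gains at least $\tfrac{\pi}{2}\delta$ in length from the larger radius, and finally bounds the competing piece $\zeta$ of $\overline\gamma_i$ by $2\abs{\beta}<\tfrac{\pi}{2}\delta$. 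That chain is what quantifies ``the radius gain beats the angular loss.'' Your fallback suggestion at the end of the proposal — show directly that $\rho_i\phi_i$ is monotone in $r_1$ by a single computation — is closer in spirit to what is actually needed, but as written the proposal asserts the wrong monotonicity of $\phi_i$ and does not supply that computation, so there is a genuine gap.
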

\begin{proof} As $\overline{L}_i, L_i$ have the same width, Lemma~\ref{lem:measuringvolumeoflanes} implies that we have to show that $\overline{\gamma}_i$ is not longer than $\gamma_i$.

\begin{figure}[h!]
  \begin{center}
      \includegraphics[scale=1]{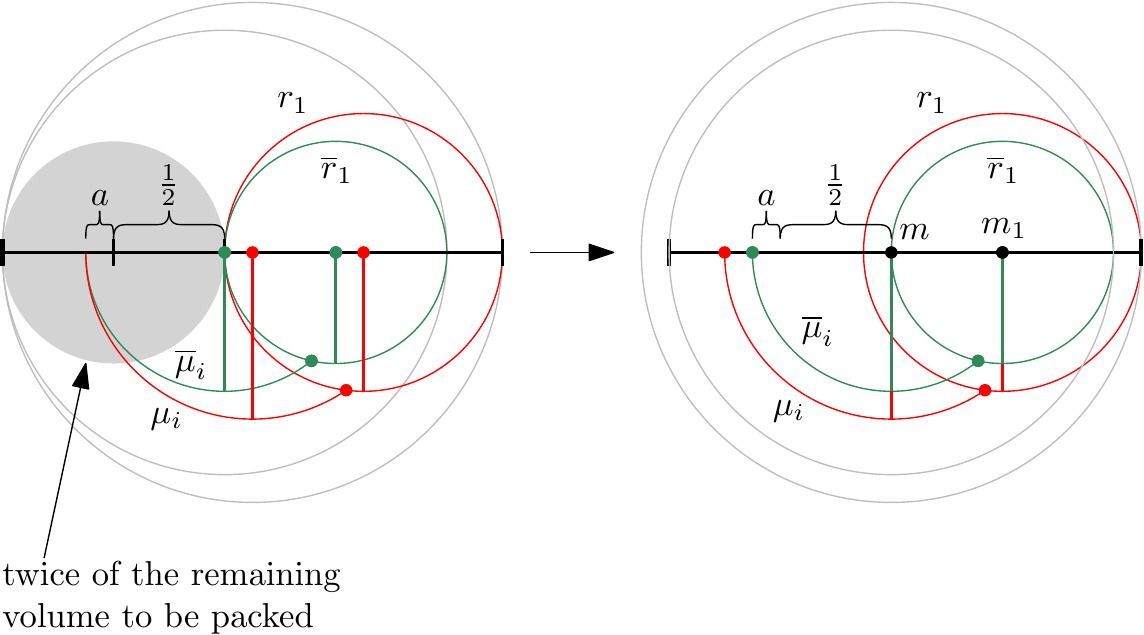} 
  \end{center}
  \caption{We modify the configuration of $\mu_i, \overline{\mu}_i$ and $r_1,\overline{r}_1$ such that $\mu_i, \overline{\mu}_i$ have the same midpoint $m$ and $r_1,\overline{r}_1$ have the same midpoint $m_1$.}
  \label{fig:reducingr1hugetor1large2}
\end{figure}

For simplified calculations, we move $\mathcal{C}, r_1,\mu_i, \gamma_i$ by $r_1 - \frac{1}{2}$ to the left, see the transition illustrated in Fig.~\ref{fig:reducingr1hugetor1large2}. This does not change the lengths of the circular arcs $\gamma_i$, but now all $C_i, c_i, \overline{C}_i, \overline{c}_i, \gamma_i,\overline{\gamma}_i, \mu_i, \overline{\mu}_i$ have the same midpoint $m$ and both versions of $r_1$ for $r_1 > \frac{1}{2}$ and $r_1 = \frac{1}{2}$ have the same midpoint $m_1$, see Fig.~\ref{fig:reducingr1hugetor1large2}~(Right).

\begin{figure}[h!]
  \begin{center}
      \includegraphics[scale=1]{} 
  \end{center}
  \caption{(a) In order to upper bound the length of $\gamma_i$ by the length of $\overline{\gamma}_i$, we first show that the end point $d$ of $\mu_i$ lies on the same side of the bisector $B(m,b)$ between $m,b$ as $b$. (b) Then we upper bound the length of the curve $\beta$ by $\frac{\pi/6}{\cos(\pi/6)} \delta$.}
  \label{fig:reducingr1hugetor1large3}
\end{figure}

Let $a$ be the intersection point between $\overline{\mu}_i$ and $\overline{r}_1$, see Fig.~\ref{fig:reducingr1hugetor1large3}(a).  Let $\delta$ be the difference between the radii of $\overline{c}_i$ and $\overline{r}_1$. Furthermore, let $b$ be the point within a distance of $\delta$ to $m_1$ such that $m_1$ lies in the interior of the segment between $a$ and $b$. Let $B(m,b)$ be the bisector between $m$ and $b$. Furthermore, let $d$ be the intersection point between $\mu_i$ and $r_1$ and $\ell$ the line induced by $m_1$ and $d$, see Fig.~\ref{fig:reducingr1hugetor1large3}(a). By construction, $b$ lies closer to $d$ as $c$, see zoom (1) in Fig.~\ref{fig:reducingr1hugetor1large3}(a), implying that $d$ lies on the same side of $B(m,b)$ as $b$, see zoom (2) in Fig.~\ref{fig:reducingr1hugetor1large3}(a). Hence, we assume w.l.o.g. that $\mu_i$ ends on $B(m,b)$.

	The angle induced by $m,a$ in $d$ is upper bounded by $\frac{\pi}{6}$, because the distance between $m,d$ is at least as large as the distance between $m,b$. This implies that the length of that part $\beta$ of $\overline{mu}_i$ that lies on the same side of $B(m,b)$ as $b$, see the blue curve in Fig.~\ref{fig:reducingr1hugetor1large3}(b), is upper bounded by $\frac{\sin \left( \frac{\pi}{6} \right)}{\cos \left( \frac{\pi}{6} \right)} \delta \leq 0.57736 \delta$.
	
	Let $\alpha$ be the left of the two angles between the horizontal diameter of $\mathcal{C}$ and the segment between $m$ and $d$. We have $\alpha \geq \frac{\pi}{2}$, because $b$ lies above the horizontal diameter of $\mathcal{C}$ and to the right of $m$. We also denote by $\mu_i, \overline{\mu}_i$ the radii of $\mu_i, \overline{\mu}_i$. Thus, the length of the circular arc $\mu_i$ is $\frac{\alpha}{2 \pi} \cdot 2 \pi \mu_i = \alpha \overline{\mu}_i + \alpha \delta \geq \alpha \overline{\mu}_i + \frac{\pi}{2} \delta$. This implies that the length of $\gamma_i$ is lower bounded by $\alpha \overline{\mu}_i + \frac{\pi}{2} \delta$
	
	\begin{figure}[h!]
  \begin{center}
      \includegraphics[scale=1]{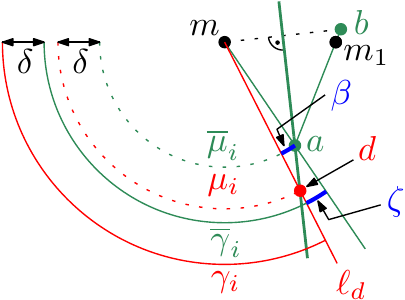} 
  \end{center}
  \caption{The length of the curve $\zeta$ is at most twice the length of the curve $\beta$, because the radius of $\overline{\gamma}_i$ is at most twice the radius of $\overline{\mu}_i$.}
  \label{fig:reducingr1hugetor1large4}
\end{figure}
	
	Let $\ell_d$ be the ray shooting from $m$ into the direction of $d$, see Fig.~\ref{fig:reducingr1hugetor1large4}. This means the lid of $\overline{L}_i$ lies on $\ell_d$. Let $\zeta$ be that part of $\overline{\gamma}_i$ that lies on the same side of $\ell_d$ as $a$. As the radius of $\overline{C}_i$ is at most twice the radius of $\overline{c}_i$, the length of $\zeta$ is at most twice the length of $\beta$. This implies that the length of $\zeta$ is upper bounded by $1.15472 \delta$. 
	
	As $\alpha \geq \frac{\pi}{2}$, $\gamma_i$ is at least as large as $\frac{\pi}{2} \delta$ plus the length of $\overline{\gamma}_i \setminus \zeta$. Thus, we obtain that $\gamma_i$ is not smaller than $\overline{\gamma}_i$, because $\zeta$ is smaller than $\frac{\pi}{2} \delta$.
	
	This concludes the proof.
\end{proof}

Let $C_\Delta$ be the cone inside $H$ with apex at $\overline{m}$, angular radius of $\arcsin \left( \frac{1}{3} \right)$, and radius $r_m$ such that $C_\Delta$ touches the vertical diameter of $H$, see Fig.~\ref{fig:reducingr1hugetor1large}.

\begin{figure}[h!]
  \begin{center}
      \includegraphics[scale=0.8]{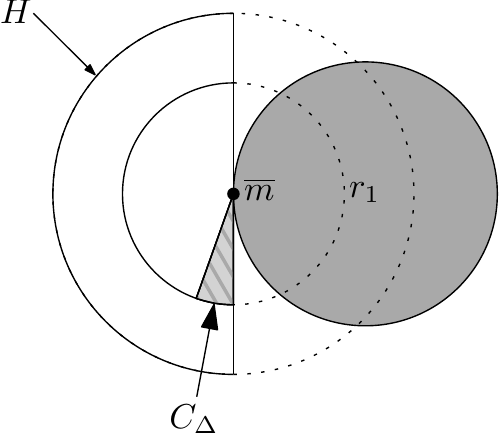} 
  \end{center}
  \caption{The potential that we can remove from $R_1,\dots,R_m$ is large enough to saturate a cone $C_{\Delta}$ inside $H$ with apex at $\overline{m}$, angular radius of $\arcsin \left( \frac{1}{3} \right)$, and radius $r_m$ such that $C_\Delta$ touches the vertical diameter of $H$.}
  \label{fig:reducingr1hugetor1large}
\end{figure}

Combining Corollary~\ref{cor:removingpotentials} and
Lemma~\ref{lem:volumerings} implies that we can move a potential of at least
$\frac{1}{4} \arcsin \left( \frac{1}{3} \right)r_m^2$ from~$R_1,\dots,R_m$
to~$C_{\Delta}$ while guaranteeing that $R_1,\dots,R_m$ are saturated. From
this we deduce that the same approach as used in Lemma~\ref{lem:r1middlefinal}
implies that the maximal remaining rings are also saturated, implying that all
maximal rings are saturated. As each ring is a subset of a maximal ring, we
obtain the following.

\begin{corollary}\label{cor:r1huge_remainingringssaturated}
	All rings are saturated.
\end{corollary}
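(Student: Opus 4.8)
The plan is to assemble the corollary from the pieces already in place, mirroring the three-layer structure used for the cases $r_1\le 0.495$ and $0.495\le r_1\le\frac12$: first pump a potential of $\frac14\arcsin(1/3)\,r_m^2$ out of the large rings $R_1,\dots,R_m$ into the reservoir cone $C_\Delta$; then use the saturated $C_\Delta$ to run the ring-by-ring induction of Lemma~\ref{lem:r1middlefinal} on $R_{m+1},\dots,R_h$; and finally lift saturation from the minimal rings up to all maximal rings, and from there to every ring. Before that, I would dispose of the two configurations that have already been settled: if the algorithm creates no ring, Lemma~\ref{lem:r1hugenoring} packs everything; if the inner radius of $R_h$ is at least $\frac12$, Lemma~\ref{lem:r1hugesmallestmaximalringlarge} finishes. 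So I may assume rings exist and the inner radius of $R_h$ is below $\frac12$, which is precisely the regime in which $R_1,\dots,R_m$ all have inner radius at least $\frac12$ and, by Corollary~\ref{cor:helping2}, each contains a disk.

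Next I would harvest the reservoir. Mapping $R_1,\dots,R_m$ onto the corresponding rings $\overline R_1,\dots,\overline R_m$ of the $r_1=\frac12$ configuration, Lemma~\ref{lem:volumerings} guarantees $|\overline L_i|\le|L_i|$, so the potential bookkeeping carried out for the case $0.495\le r_1\le\frac12$ transfers unchanged; thus Corollary~\ref{cor:removingpotentials} lets me move a potential of at least $\frac14\arcsin(1/3)\,r_m^2$ out of $R_1,\dots,R_m$ and into $\Delta$ while keeping $R_1,\dots,R_m$ saturated. Pouring this $\Delta$ into the cone $C_\Delta$ — with apex $\overline m$, angular radius $\arcsin(1/3)$, radius $r_m$, touching the vertical diameter of $H$ — raises its density to $\frac12$, exactly as established for $C_\Delta$ in the $r_1\le\frac12$ analysis, so $C_\Delta$ is saturated.

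With $C_\Delta$ in hand I would run the induction of Lemma~\ref{lem:r1middlefinal} on $R_{m+1},\dots,R_h$, the vertical diameter of $H$ now playing the role of the upper tangent $t_1$ of $r_1$ (the cone ``induced by $r_1$'' again has angle at most $\pi$ since the scaled $r_1$ is at most $\frac12$ relative to $\mathcal C$). Assuming $R_{m+1},\dots,R_{j-1}$ saturated, one first rules out $R_j$ being empty: an empty $R_j$ would force the crashing disk to have radius below $0.2019\,r_{\text{out}}$ by Lemma~\ref{lem:densityCones}, and then shifting $(\rho-\frac12)$-surplus potential from the large sector above the lower tangent of $r_1$ into a cone $C_2$ abutting $C_\Delta$ saturates everything and pushes the total input volume above $\frac\pi2$, a contradiction. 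Given a disk $r_k$ in $R_j$ touching both boundaries of $R_j$ (which exists by the argument of Lemma~\ref{lem:atleastonesmall}, using only that $R_j$ is nonempty and $r_n<\frac14$), I would build the three sectors $S_1,S_2,S_3$ as in Lemma~\ref{lem:r1middlefinal} — the unit sector around $r_k$ with surplus $0.27036\,U_{R_j}$ by Lemma~\ref{lem:zipperlengthone}; the $\rho$-dense sector above the lower tangent of $r_1$ with surplus at least $0.21542\,U_{R_j}$; and $C_\Delta\cap R_j$ contributing $\tfrac14 U_{R_j}$ — move their surpluses into $\Delta\ge 0.73578\,U_{R_j}$, and note that the gap of $R_j$ has volume at most $1.07024\,U_{R_j}$ by Lemma~\ref{lem:upperboundlock}, so pouring $\Delta$ into it saturates $R_j$. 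Hence every maximal ring is saturated; since each such proof bottoms out at the minimal rings inside (a maximal ring is worst-cased to a minimal one for the gap bound, exactly as in the proof of Lemma~\ref{lem:technical}) and every ring is a subset of a maximal ring, all rings are saturated.

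I expect the main obstacle to be the verification — largely bookkeeping, but genuinely needed — that the half-disk reduction does not degrade any of the constants feeding Lemma~\ref{lem:r1middlefinal}: after scaling so that $D$ has radius $\frac12$, one must check that the unit volumes $U_{R_j}$, the $1.07024\,U_{R_j}$ gap bound, the $\pi$-bound on the $r_1$-cone angle, and the $\rho$-density of the sectors above the lower tangent of $r_1$ all survive, and that $C_\Delta$ is positioned so that the $C_2$ and $S_3$ constructions in each $R_j$ can actually draw on it. Once those are confirmed, the corollary is immediate.
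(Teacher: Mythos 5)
Your proposal is correct and follows essentially the same route as the paper: combine Lemma~\ref{lem:volumerings} with Corollary~\ref{cor:removingpotentials} to pump potential from $R_1,\dots,R_m$ into $C_\Delta$, then rerun the Lemma~\ref{lem:r1middlefinal} induction on the remaining maximal rings via the half-disk $H$, and finally observe every ring sits inside a maximal one. The paper states this chain just as tersely in the paragraph preceding the corollary, so your elaboration of the induction step (including ruling out an empty $R_j$ and the $S_1,S_2,S_3$ bookkeeping) is a faithful unpacking rather than a deviation.
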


Let $\ell_m$ be the vertical line touching $r_1$ from the left and $H_m$ that part of the disk container that lies not inside any ring, see Fig.~\ref{fig:reducing}.

\begin{figure}[h!]
  \begin{center}
      \includegraphics[scale=0.8]{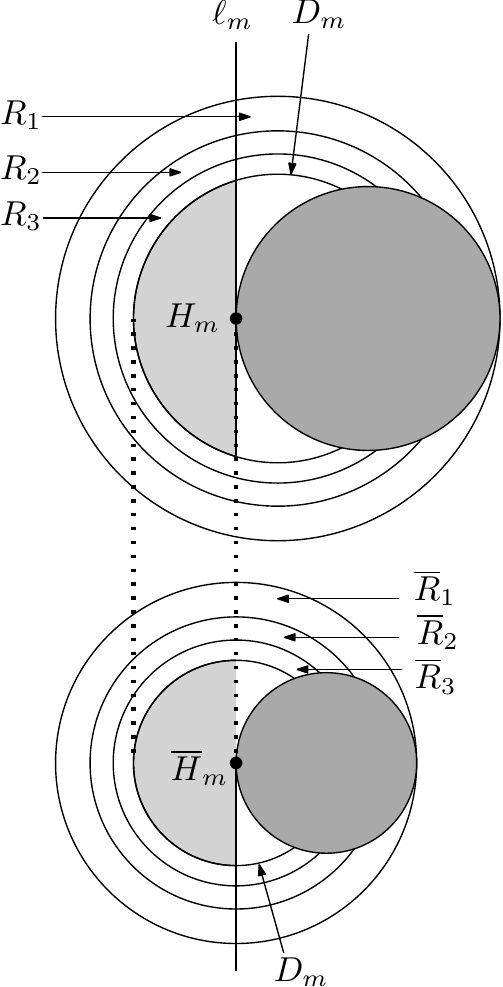} 
  \end{center}
  \caption{The volume of $\overline{H}_m$ is smaller than the volume of $H_m$.}
  \label{fig:reducing}
\end{figure}

Furthermore, let $\overline{H}_m$ be the corresponding sector for $r_1 = \frac{1}{2}$. Let $D_m, \overline{D}_m$ be the disks induced by $H_m, \overline{H}_m$. The half disk $\overline{H}_m$ can be obtained by vertically shrinking $H_m$. Thus, the same approach as used in the proof of Lemma~\ref{lem:alldisksfit} implies that all disks are packed concluding the proof of Lemma~\ref{lem:alldisksfitthree}.

\restatethm{\lemalldisksfitthree*}{lem:alldisksfitthree}

\section{Conclusions}
\label{sec:conc}
We have established the critical density for packing disks into a disk, based on a number of 
advanced techniques that are more involved than the ones used for packing squares into a square
or disks into a square. Numerous questions remain, in particular the critical density 
for packing disks of bounded size into a disk or the critical density of packing squares into a disk.
These remain for future work; we are optimistic that some of our techniques will be useful.

\bibliography{references}
\end{document}